\documentclass[11pt, one sided]{amsart}
\RequirePackage[colorlinks,citecolor=blue,urlcolor=blue]{hyperref}

\usepackage{latexsym,amssymb,amsmath,amsfonts,amsthm}
\usepackage{amsthm,amsmath}
\usepackage{float}
\usepackage{enumerate}
\usepackage[margin=3cm]{geometry}
\usepackage{bbm}
\usepackage{subfigure}
\usepackage{graphicx}
\usepackage[toc,page]{appendix}
\usepackage{multirow}
\usepackage{tikz}

\usepackage{comment}

\usepackage{amsfonts}
\usepackage[all]{xy}
\usepackage{caption}
 \usepackage{geometry}                
\geometry{a4paper}                   
\usepackage{graphicx}
\usepackage{amssymb}
\usepackage{epstopdf}
\usepackage{color}
\usepackage{bbm, dsfont}
\DeclareGraphicsRule{.tif}{png}{.png}{`convert #1 `dirname #1`/`basename #1 .tif`.png}
\usepackage{hyperref}
\usepackage[utf8]{inputenc}
\setcounter{tocdepth}{2}

\usepackage{amssymb,amsthm,amsmath,amssymb,wrapfig,dsfont}

\usepackage{tikz}
\usetikzlibrary{decorations.pathmorphing}
\tikzset{snake it/.style={decorate, decoration=snake}}
\usetikzlibrary{shapes.geometric,positioning,decorations.pathreplacing} 
\usepackage{pgfplots}

\newtheorem{theorem}{Theorem}

\newtheorem{lemma}{Lemma}

\def\beq{ \begin{equation} }
\def\eeq{ \end{equation} }

\def\square{\vcenter{\vbox{\hrule height .4pt
  \hbox{\vrule width .4pt height 5pt \kern 5pt
        \vrule width .4pt} \hrule height .4pt}}}


\newcommand*{\xMin}{0}%
\newcommand*{\xMax}{30}%
\newcommand*{\yMin}{0}%
\newcommand*{\yMax}{30}%
\newcommand*{\xMiN}{0}%
\newcommand*{\xMaX}{3}%
\newcommand*{\yMiN}{0}%
\newcommand*{\yMaX}{3}%



\newcommand{\bae}{\begin{equation}\begin{aligned}}
\newcommand{\eae}{\end{aligned}\end{equation}}

\DeclareFontFamily{OML}{rsfs}{\skewchar\font'177}
\DeclareFontShape{OML}{rsfs}{m}{n}{ <5> <6> rsfs5 <7> <8> <9>
rsfs7 <10> <10.95> <12> <14.4> <17.28> <20.74> <24.88> rsfs10 }{}
\DeclareMathAlphabet{\mathfs}{OML}{rsfs}{m}{n}



\newcommand{\note}[1]{{\color{red}{ \bf{ [Note: #1]}}}}


\usepackage[nomessages]{fp}

\begin{document}
\pgfmathsetseed{2}
\title{Relativistic Propagators on Lattices}

\author{Rory O'Dwyer}
\address[Rory O'Dwyer]{Department of Physics, Stanford University}

\maketitle
\begin{abstract}

I define the lattice propagator on a very general collection of graphs, namely graphs locally isomorphic to $\mathbb{Z}^{d}\times \mathbb{Z}$. I then define polygonal approximations to the minkowski metric and define a corresponding lattice propagator for these. I show in $d=1$, as suggested by the metric approximation, the continuum limit of the polygonal propagators converges to the Klien Gordon Propagator. Finally, I obtain the taxicab polygonal propagator in a very general collection of spaces, including $\mathbb{T}^{d}$, the Klein bottle, and a discretization of de-Sitter space.

\end{abstract}



\section{Introduction}
\label{section:intro}

This paper aims to define the propagator of QFT (Quantum Field Theory) in terms of geometrical quantities, as per the original intuition of Feynman. Feynman's original derivation of the path integral formulation of the propagator expressed it in terms of a limit of oscillatory integrals \cite{feynman_hibbs_styer_2017}, but there are mathematical problems in making this expression rigorous in its most general setting.  The path integral as a measure on paths works in the setting of Statistical Mechanics in the Kac-Feynman formula \cite{glimm_jaffe_1987}, but attempts to do the same in the setting of propagators have lead to candidate measures that are not $\sigma$-additive \cite{Albeverio2009}. Other attempts to make the path integral rigorous include treating it as a functional that converges on a family of well-behaved functions \cite{Albeverio2009} and using Gaussian free fields to define the propagator in the setting of Conformal Field Theory \cite{https://doi.org/10.48550/arxiv.1101.1024}. Most approaches are highly abstracted from treating the path integral as a phase-weighted sum over paths in space. This paper aims to develop another approach that centrally employs this intuition.

In Section~\ref{section:notation}, I define a discrete propagator as a complex-valued function on pairs of vertices in a discretization of space. The function itself is a sum over achronal local paths (defined in Section~\ref{section:notation}) weighted by a phase whose angle is the minkowski length of those paths. Because these quantities are finite, I don't need to worry about convergence issues that plague most formalizations of propagators. I  obtain analytic expressions for these propagators and  then show that these analytic expressions converge to a continuum limit as ultra-fine lattice lengths are considered. This employs recent work in continuous lattice path counting in \cite{continuousbin} and \cite{continuouslatticepath} where a continuous multinomial coefficient is defined (another result of this work is that this coefficient is a well defined, fully differentiable function). To show that the objects I obtain are indeed discretizations of the QFT flat-space free scalar propagator, I use transformation invariance arguments. I show that any sup norm limit of lattice path integrals (as more directions on the lattice that paths are allowed to occupy are included) converge to the closed-form expression for the QFT propagator. 

I must direct some of this introduction towards polygonal minkowski metrics. Put plainly, a polygonal metric is any metric whose unit sphere is a polyhedron. We need to make use of these metrics as the action of a relativistic bosonic particle is related to an integral over a path of its normal minkowski length. We will need a discretization of the minkowski metric to define our discrete propagators.

The most well-known polygonal metric is the taxicab metric on Euclidean space; its associated perimeter shows up as the free energy of many Ising model solutions \cite{latticeProb},\cite{duncan2021discrete}. In this paper, I will show that a minkowski taxicab metric is the first among a series of metrics that converge to the minkowski one. This family of polygonal metrics $d_{n}$ has an interesting relationship to Pythagorean triples \cite{rational}, which I develop in Section~\ref{section:def}. I show for $d=2$ the continuum propagators for $d_{n}$ converge to the minkowski propagator as $n\rightarrow \infty$, thereby motivating the idea that the taxicab propagator is a 'first order' approximation to the more difficult minkowski propagator. Finally, I demonstrate the versatility of the taxicab minkowski propagator.

I show that lattice propagators can be calculated on discretizations of general orbifolds. I will also work with a tropical formulation of de-Sitter space \cite{tropicalGeometry},\cite{linde_2017}. I obtain continuum propagators for these settings. I also derive that the taxicab metric propagators on de-Sitter space are ignorant of compact dimensions, but the propagator probes these dimensions well in flat space. Finally, I mention that all the propagators I obtain in this work will correspond to free particles without interaction terms. Interactions are difficult to express geometrically without more complicated theories (like String theory \cite{zwiebach_2004}). Generalizing the taxicab propagators in the setting of strings and membranes will be the subject of latter work.


\section{Notations and Statement of Theorems}
\label{section:notation}
Let $d\in\{1,...,\}$ and $\{L_{i}\}_{i=1}^{d}\subset \mathbb{N}\cup\{\infty\}$. I define $X=(\times_{i}\{-L_{i},-L_{i}+1,...,L_{i}\})\times \mathbb{Z}$. The first $d$ copies of $\mathbb{Z}$ are defined as the \textbf{spatial} coordinates and the last with \textbf{time}. I note that $X\subset \mathbb{Z}^{d}$, so it has a boundary $\partial X$ in $\mathbb{Z}^{d}$. I will constrain ourselves to $X$ such that $2|\left|\partial X\right|)$. I can therefore give $\partial X$ an arbitrary equivalence relation $\sim$ which partitions $\partial X$ into pairs. I shall leave $\sim$ completely general. I may also consider drawing X from a tropical polynomial context. Let $\{\vec{a}^{i}\}_{i=1}^{n}\subset \mathbb{Z}^{d}\times \mathbb{Z},\{b\}_{i=1}^{n}\subset \mathbb{Z},c\subset \mathbb{Z}$. Then, I define an arbitrary \textbf{tropical polynomial} $\mathfrak{p}:\mathbb{Z}^{d}\times\mathbb{Z}\rightarrow \mathbb{Z}$ as the following:

\begin{equation}
    \label{eq:trop}
    \mathfrak{p}(\vec{x})=max_{i\in\{1,...,n\}}(\vec{a}*\vec{x}+b)-c
\end{equation}

I will occasionally let $X=\{\vec{x}\in \mathbb{Z}^{d}\times\mathbb{Z}|\mathfrak{p}(\vec{x})=0\}$. In the case of these X's I may have to specify what coordinates are time and spatial, though unless explicitly stated, it will be that convention induced by $\mathbb{Z}^{d}\times\mathbb{Z}$. Our definitions will generally hold for all the above-described $X$.

If $\vec{x}\in X$, then $proj_{t}(\vec{x})$ is the projection of $\vec{x}$ into the time coordinate (and I will denote as $proj_{x_{i}}(\vec{x})$  any space coordinate projections). I give $X$ the graph structure $(\mathcal{V},\mathcal{E})$ where $\mathcal{V}$ are its points and $\mathcal{E}=\{\vec{x}_{1}\vec{x}_{2}|\vec{x}_{1},\vec{x}_{2}\in X,|\vec{x}_{1}-\vec{x}_{1}|_{l_{\infty}}=1\}$. This has a metric given to it by the graph distances. Let $\vec{x}_{1},\vec{x}_{2}\in X$. I define on the following functions $d_{l_{2}},d_{l_{2}^{*}}:X\times X\rightarrow \mathbb{C}$

\begin{multline}
    \label{eq:minko}
    d_{l_{2}}(\vec{x}_{1},\vec{x}_{2})=\sqrt{proj_{t}(\vec{x}_{1}-\vec{x}_{2})^{2}+\sum_{i=1}^{d}proj(\vec{x}_{1}-\vec{x}_{2})^{2}}
    \\
    d_{l_{2}^{*}}(\vec{x}_{1},\vec{x}_{2})=\sqrt{proj_{t}(\vec{x}_{1}-\vec{x}_{2})^{2}-\sum_{i=1}^{d}proj(\vec{x}_{1}-\vec{x}_{2})^{2}}
\end{multline}

For all our purposes, I will only be using $d_{l_{2}^{*}}$ on $d_{l_{2}^{*}}^{-1}(\{a|a\in \mathbb{R},a\ge 0\})$. The first equation in Equation~\ref{eq:minko} is clearly the $l_{2}$ metric on $X$. The subscript $l_{2}^{*}$ in the second is used to denote its similarity with the $l_{2}$ metric; it is the analog of the minkowski metric in our setting. I will therefore refer to the second part of Equation~\ref{eq:minko} as the \textbf{minkowski metric}. Let $n\in \{1,2,...\}$. Consider the set of scaled primitive \textbf{pythagorean triples} with hypotenuse below n, namely $\mathcal{A}_{n}=\{(\frac{x}{I},\frac{t}{I})\in\mathbb{Q}^{2}|x^{2}+I^{2}=t^{2},(x,I,t)\in\mathbb{Z}^{3},0\le t\le n\}$. The word primitive places an extra constraint on $\mathcal{A}_{n}$ such that if we consider the equivalence class placed upon these vectors by parallelism, then we only keep the representative with the lowest hypotenuse. These points have a natural ordering from least x coordinate to most; let us denote $\vec{a}_{i}\in\mathcal{A}_{n}$ as the i-th vector under this ordering. For $n\ge 2$, then, we may define Equation~\ref{eq:polygon} as our \textbf{polygonal minkowski metric}.


\tiny
\begin{multline}
    \label{eq:polygon}
    d_{n}(\vec{x}_{1},\vec{x}_{2})=min_{i\in\{0,...,|\mathcal{A}_{n}|-1\}}(\frac{|proj_{x}(\vec{a}_{i+1}-\vec{a}_{i})|proj_{t}(\vec{x}_{2}-\vec{x}_{1})|-|proj_{t}(\vec{a}_{i+1}-\vec{a}_{i})proj_{x}(\vec{x}_{2}-\vec{x}_{1})|}{|proj_{x}(\vec{a}_{i+1}-\vec{a}_{i})proj_{t}(\vec{a}_{i})|-|proj_{t}(\vec{a}_{i+1}-\vec{a}_{i})proj_{x}(\vec{a}_{i})|})
    \\
    \wedge (\frac{|proj_{t}(\vec{x}_{2}-\vec{x}_{1})|-|proj_{x}(\vec{x}_{2}-\vec{x}_{1})|}{|proj_{t}(\vec{a}_{0})|-|proj_{x}(\vec{a}_{0})|})
\end{multline}
\normalsize

For $n=1$, we define $d_{1}(\vec{x}_{2}-\vec{x}_{1})$ as $|proj_{t}(\vec{x}_{2}-\vec{x}_{1})|-|proj_{x}(\vec{x}_{2}-\vec{x}_{1})|$. These are functions $X\times X\rightarrow \{a|a\in \mathbb{R},a\ge 0\}$, and they are meant to be natural polygonal approximations of the minkowski metric. One can see this via Figure~\ref{fig:CloserCirc}. When $n=1$, this becomes $|x|-|y|$, or the taxicab minkowski metric. I can extend this metric to high dimensions as Ill in Equation~\ref{eq:taxicab}.

\begin{equation}
    \label{eq:taxicab}
    d_{l_{1}}(\vec{x}_{1},\vec{x}_{2})=|proj_{t}(\vec{x}_{2}-\vec{x}_{1})|-\sum_{i=1}^{d}|proj(\vec{x}_{1}-\vec{x}_{2})|
\end{equation}

I shall return back to Equation~\ref{eq:minko} and Equation~\ref{eq:polygon} for a moment. Let's define $\mathcal{A}^{all}=\{\vec{x}\in \mathbb{Z}\times\mathbb{Z}|d_{l_{2}^{*}}(0,\vec{x})=d_{n}(0,\vec{x})\ge 0,proj_{t}(\vec{x})>0\}$. A subset $\mathcal{A}^{gen}\subset \mathcal{A}^{all}$ is called generating iff for all $a\in \mathcal{A}^{all}$ there is $\{a_{i}\}_{i=1}^{N}\subset \mathcal{A}^{gen}$ such that $\sum_{i=1}^{N}=a$. Then an \textbf{axes of symmetry} of $d_{n}$, denoted $\mathcal{A}_{n}$, is some $\mathcal{A}_{gen}$ such that if $\vec{x}_{1},\vec{x}_{2}\in \mathcal{A}_{n}$ and $\vec{x}_{1}\neq c\vec{x}_{2}$ then $c=1$. So, every direction has a unique representative. The double usage of $\mathcal{A}_{n}$ for scaled Pythagorean triples and as axes of symmetry of $d_{n}$ will be shown to not be an abuse of notation in Theorem~\ref{thm:axes} (except the axes of symmetry of $d_{n}$ will include also light paths). Our language in this paper will suggest that this set is unique to each $d_{n};$ we will not prove this for we do not require it for the next concepts.



I define as a \textbf{path} as set $\gamma=\{\vec{x}_{i}\}_{i=1}^{n}$ such that $\{x_{i}x_{i+1}\}_{i=1}^{n-1}\subset\mathcal{E}$. Then, our \textbf{achronal, local paths} between $\vec{x}$ and $\vec{y}\in X$ as the following:

\begin{equation}
    \label{eq:pathset}
    \Gamma^{\vec{x},\vec{y}}=\{\gamma|\gamma|_{0}=\vec{x},\gamma|_{n}=\vec{y},d_{l_{2}^{*}}(\vec{x}_{i},\vec{x}_{i+1})\ge 0,d_{l_{2}^{*}}(\vec{x},\vec{y})\in \mathbb{Z},proj_{t}(\vec{x}_{i+1},\vec{x}_{i})\ge 0\}
\end{equation}

I define our \textbf{achronal, local $n$ polygonal paths} between $\vec{x}$ and $\vec{y}\in X$ as the following:
\small
\begin{equation}
    \label{eq:pathset2}
    \Gamma_{n}^{\vec{x},\vec{y}}=\{\gamma|\gamma|_{0}=\vec{x},\gamma|_{n}=\vec{y},d_{l_{2}^{*}}(\vec{x}_{i},\vec{x}_{i+1})=d_{n}(\vec{x}_{i},\vec{x}_{i+1})\ge 0,d_{n}(\vec{x}_{i},\vec{x}_{i+1})\in\mathbb{Z},proj_{t}(\vec{x}_{i+1},\vec{x}_{i})\ge 0\}
\end{equation}
\normalsize
On $\Gamma_{n}$ and $\Gamma$ we place an equivalence relation on paths; two paths are equivalent if their indexing set traces out the same piecewise linear paths in $X$. The use of axes of symmetry is made clear in this context; each equivalence class in $\Gamma_{n}^{\vec{x},\vec{y}}$ has a unique representative with difference sequence drawn from $\mathcal{A}_{n}$ (Theorem~\ref{thm:generate}).

Clearly, from Equation~\ref{eq:pathset} and Equation~\ref{eq:pathset2} I have $\Gamma^{\vec{x},\vec{y}}_{n}\subset \Gamma^{\vec{x},\vec{y}}$. Let $\gamma\in \Gamma^{\vec{x},\vec{y}}$. Then I define the polygonal and minkowski \textbf{proper time} in Equation~\ref{eq:inter}.

\begin{equation}
    \label{eq:inter}
    \rho_{n}(\gamma)=\sum_{i=0}^{n-1}d_{n}(x_{i+1},x_{i}),\rho_{l_{2}^{*}}(\gamma)=\sum_{i=0}^{n-1}d_{l_{2}^{*}}(x_{i+1},x_{i})
\end{equation}

From this definition, I immediately obtain Theorem~\ref{thm:axes}. These geometrical results are developed because I will be viewing the path integral as a geometrical object. For this purpose, I define the following functions $K_{n}:X\times X\rightarrow \mathbb{C}$ and $K_{l_{2}^{*}}:X\times X\rightarrow \mathbb{C}$. Let $m\in \{a|a\in \mathbb{R},a>0\}$.

\begin{equation}
    \label{eq:propogators}
    K_{n}(\vec{x},\vec{y})=\sum_{\gamma\in\Gamma_{n}^{\vec{x},\vec{y}}}e^{im\rho_{n}(\gamma)},
    K_{l_{2}^{*}}(\vec{x},\vec{y})=\sum_{\gamma\in\Gamma^{\vec{x},\vec{y}}}e^{im\rho_{l_{2}^{*}}(\gamma)}
\end{equation}

These functions are our \textbf{discrete propagators}. They are the main subject of study in this paper.

These are all our basic definitions in the discrete context. I will devote some of our paper to the continuous setting. For this, I must define for our original X a natural domain $X^{cont}$, which is the result of taking infinitely fine lattices. In the first context (non tropical), it's $X^{cont}=(\times_{i}-[L_{i},L_{i}])\times \mathbb{R}$ with an associated equivalence relation $\sim$ on pairs of points in $\partial X^{cont}$ as a subspace of $\mathbb{R}^{d}\times \mathbb{R}$. I note that if $\mathfrak{p}$ is the definitional tropical polynomial of $X$, then $\mathfrak{p}$ can be immediately extended to $\vec{x}\in \mathbb{R}^{d}\times\mathbb{R}$. $X^{cont}$ will then be the zero set of this extended $\mathfrak{p}$. I now want a generalization of the closest integer function $[*]:X^{cont}\rightarrow X$. For $L\in \mathbb{N}\cup\{\infty\}$ I let $[*]^{1d}:[-L,L]\rightarrow \{-L,-L+1,...,L\}$ be the closest point in the range to the domain under the $l_{2}$ metric. 

Then our generalization is defined as in Equation~\ref{eq:closest}.

\begin{multline}
    \label{eq:closest}
    [\vec{x}]\textrm{ is the element of }X\textrm{ such that }
    \\
    proj_{x_{i}}([\vec{x}])=[proj_{x_{i}}(\vec{x})]^{1d}
    \\
    proj_{x_{i}}([\vec{x}])=[proj_{t}(\vec{x})]^{1d}
\end{multline}

Finally, for those $X$ which have some finite element $L_{i}$ in their definition, I must also change their dimensions to approach $X^{cont}$ as a limiting space for infinitely fine lattices. Therefore, when I write $\Gamma_{n}|_{m},\Gamma|_{m},K_{n}|_{m}$, or $K_{l_{2}^{*}}|_{m}$, I am referring to those objects for $X|_{m}=(\times_{i}\{-mL_{i},-mL_{i}+1,...,mL_{i}\})\times \mathbb{Z}$ when $X=(\times_{i}\{-L_{i},-L_{i}+1,...,L_{i}\})\times \mathbb{Z}$. 

We will find that recent developments in combinatorics, mentioned in Section~\ref{section:intro}, facilitate the expression of continuum propagators in all of these settings. Counting lattice paths are naturally connected to the multinomial coefficient, and therefore continuum propagators seem to require the development of a continuous multinomial coefficient. As defined by Cano and Diaz in \cite{continuousbin}, and later generalized by Wakhare, Vignat, Le, and Robins in \cite{continuouslatticepath}, we consider the continuous multinomial coefficient in Equation~\ref{eq:contmult} for l variables. Let $\{x_{i}\}_{i=1}^{l}\subset \mathbb{R}_{+}$, $\mu$ denote the Lebesgue measure on $\mathbb{R}^{n}$ \cite{folland}, and for $n,N\in \mathbb{N}$ denote  by $D(n,N)$ the number of Smirnov words of length n and N \cite{continuouslatticepath}. Furthermore let $c\in D(n,N)$, let $\{c_{k}\}_{k=1}^{n}$ denote singular letters of our Smirnov word taking elements among some collection of d dimensional vectors, and let $\{e_{k}\}_{k=1}^{l}$ denote these vectors. For $q\in \mathbb{R}_{+}^{d}$ we define the \textbf{path polytope} $P(q,c)$ in Equation~\ref{eq:pathpoly}.

\begin{equation}
    \label{eq:pathpoly}
    P(q,c)=\{(\lambda_{1},...,\lambda_{n})\in \mathbb{R}_{+}^{n}|\sum_{k=0}^{n}\lambda_{k}e_{c_{k}}=q\}
\end{equation}

We now have our desired definition of the continuous multinomial in Equation~\ref{eq:contmult}:

\begin{equation}
    \begin{Bmatrix}\sum x_{i}\\x_{1},x_{2},...,x_{l}\end{Bmatrix}=\sum_{n=0}^{\infty}\sum_{c\in D(n,l)}\mu (P(q,c))
    \textrm{ where }q=\sum x_{i}e_{i}
    \label{eq:contmult}
\end{equation}

Problematically, the continuous multinomial was never shown by either \cite{continuousbin} or \cite{continuouslatticepath} to converge except in special cases. The necessary convergence results are developed in Section~\ref{section:def}; we will also show more rigorously in what sense they are continuous analogs of the discrete multinomial coefficients. Namely, for $m\in \mathbb{N}$ we introduce the operator $\mathcal{T}^{m}_{cont}$ in Equation~\ref{eq:operator}. This operator acts on sums indexed over paths in $\Gamma^{\vec{y},\vec{x}}$. Let $f:\Gamma^{\vec{y},\vec{x}}\rightarrow \mathbb{C}$ be some path indexed complex valued function and let $r,\theta:\Gamma^{\vec{y},\vec{x}}\rightarrow\mathbb{R}$ be its radial and angular components. Let $|\gamma|$ denote the number of distinct linear segments in $\gamma$. Then, we can rearrange any general sum over $\gamma\in \Gamma^{\vec{y},\vec{x}}$ as done in Equation~\ref{eq:rearrange}.

\begin{equation}
    \sum_{\gamma\in\Gamma^{\vec{x},\vec{y}}}f(\gamma)=\sum_{n=1}^{\infty}\sum_{\gamma\in\Gamma^{\vec{x},\vec{y}},|\gamma|=n}f(\gamma)
    \label{eq:rearrange}
\end{equation}

Then, we may write $\mathcal{T}_{cont}^{m}$ easily in this context (in Equation~\ref{eq:operator})

\begin{equation}
    \mathcal{T}^{m}_{cont}(\sum_{\gamma\in\Gamma^{\vec{x},\vec{y}}}f(\gamma))=\sum_{n=d}^{\infty}(\sum_{\gamma\in\Gamma^{\vec{x},\vec{y}},|\gamma|=n}r(\gamma)m^{d-n}e^{i\frac{\theta(\gamma)}{m}})
    \label{eq:operator}
\end{equation}

As demonstrated in Theorem~\ref{thm:disctocont}, this operator is required to connect a natural estimate of counting paths to a notion of volume, as one would expect in the continuous setting. With all the relevant concepts defined, the following \textbf{continuum propagators} are defined in Equation (in the event they exist)

\begin{multline}
    \label{eq:cont}
    K_{n}^{cont}(\vec{x},\vec{y})=lim_{m\rightarrow\infty}\frac{\mathcal{T}_{cont}^{m}K_{n}([m*\vec{x}],[m*\vec{y}])|_{m}}{\mathcal{T}_{cont}^{m}(max_{\vec{x'},\vec{y'}\in X|_{m},proj_{t}(\vec{y}'-\vec{x}')=[nt]}(\left|\Gamma_{n}^{\vec{x'},\vec{y'}}|_{m}\right|))}
    \\
    K_{l_{2}^{*}}^{cont}(\vec{x},\vec{y})=lim_{p\rightarrow\infty}K_{p}^{cont}(\vec{X},\vec{y})
\end{multline}

In Equation~\ref{eq:cont}, we will be satisfied if the sequence of $K_{n}^{cont}$ has a convergent subsequence towards some function in a pointwise or sup norm sense, and therefore that this limit is unique w.r.t limits. I now move on to the statement of our theorems. I aim to demonstrate that $K_{n}$ and $K_{l_{2}^{*}}$ can be defined and calculated in any of the $X$ I defined above, that $K_{n}\rightarrow K$ in some sense, and that $K_{n}^{cont}$ and $K_{l_{2}}^{cont}$ exist in a very general context and that $K_{n}\rightarrow K_{l_{2}^{*}}$.  I aim to show that $K_{l_{2}^{*}}$ can be identified with the free particle propagator from Quantum Mechanics and that this is why I should even care about $K_{n}$. 

I then show that $K_{1}$ and $K_{1}^{cont}$ has a remarkable ease of computation in all $X$. These latter theorems represent the bulk of our work; the former serve to motivate the latter. Let us first state the motivating polygonal propagator theorems, and then move on to working with $K_{1}$ and $K_{1}^{cont}$

\subsection{Statement of $K_{n}$ Theorems}

We first derive an analytic expression for $K_{d_{n}}^{cont}(\vec{x},\vec{y})$.

\begin{theorem}
\label{thm:k3}

Let $d=2$. For $\vec{x},\vec{y}\in X=\mathbb{Z}\times \mathbb{Z}$ we have

\begin{multline}
	\label{eq:k3one}
	K_{n}(\vec{y},\vec{x})=\sum_{I}(\sum_{I_{a_{1}}=0}^{C_{1}}...\textrm{no k}=\frac{|\mathcal{A}_{n}|-1}{2}...\sum_{I_{a_{|\mathcal{A}_{n}|}}=0}^{C_{|\mathcal{A}_{n}|}}\begin{pmatrix}\sum_{\vec{a}\in\mathcal{A}_{n}}I_{a}\\\Pi_{a\in\mathcal{A}_{n}}(I_{\vec{a}})\end{pmatrix})e^{mI}
	\\
	C_{j}=min(\lfloor\frac{proj_{t}(\vec{y}-\vec{x})-\sum_{i=1,i\neq\frac{|\mathcal{A}_{n}|-1}{2}}^{j-1}I_{a_{i}}proj_{t}(\vec{a}_{i})}{proj_{t}(\vec{a}_{j})}\rfloor,\lfloor\frac{I-\sum_{i=1,i\neq\frac{|\mathcal{A}_{n}|-1}{2}}^{j-1}I_{a_{i}}d_{n}(0,\vec{a})}{d_{n}(0,\vec{a}_{j})}\rfloor)
	\\
	\textrm{ where Eq~\ref{eq:propStep3} holds for $I_{\pm 0}$ and $I_{\frac{|\mathcal{A}_{n}|-1}{2}}$}
\end{multline}

and we have $K_{n}^{cont}$ equal to Equation~\ref{eq:k3two} up to a multiplicative normalization constant dependant only on t.
\tiny
\begin{multline}
    \label{eq:k3two}
    \mathcal{F}|_{I}^{m}(\int_{0}^{C_{i}}...\int_{0}^{C_{|\mathcal{A}_{timeless}|}}\begin{Bmatrix}\sum_{k\in\mathcal{A}_{timeless}}I_{k}+(I-\sum_{k\in\mathcal{A}_{timeless}}I_{k}d_{n}(0,\vec{a}_{k}))+f_{+}+f_{-}\\\{I_{k}\}_{k\in\mathcal{A}_{timeless}},I-\sum_{k\in\mathcal{A}_{timeless}}I_{k}d_{n}(0,\vec{a}_{k}),f_{+},f_{-}\end{Bmatrix}\Pi_{i\in\mathcal{A}_{timeless}}dI_{k})
    \\
    \textrm{ where }C_{j}=min(\frac{proj_{t}(\vec{y}-\vec{x})-\sum_{i=1,i\neq\frac{|\mathcal{A}_{n}|-1}{2}}^{j-1}I_{a_{i}}proj_{t}(\vec{a}_{i})}{proj_{t}(\vec{a}_{j})},\frac{proj_{t}(\vec{y}-\vec{x})-\sum_{i=1,i\neq\frac{|\mathcal{A}_{n}|-1}{2}}^{j-1}I_{a_{i}}d_{n}(0,\vec{a})}{proj_{t}(\vec{a}_{j})})
    \\
    \textrm{ where }f_{\pm}(I,\{I_{k}\},\vec{y},\vec{x})=\frac{proj_{t}(\vec{y}-\vec{x})\pm proj_{x}(\vec{y}-\vec{x})}{2}
    \\
    -\sum_{k\in\mathcal{A}_{timeless}}I_{k}(\frac{proj_{t}(\vec{a}_{k})\pm proj_{x}(\vec{a}_{k})-d_{n}(0,\vec{a}_{k})}{2})-\frac{I}{2}
\end{multline}
\normalsize
\end{theorem}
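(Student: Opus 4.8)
The plan is to prove the discrete identity \ref{eq:k3one} by a direct path count organized around the axes of symmetry, and then to obtain the continuum formula \ref{eq:k3two} by running that identity through the refinement operator $\mathcal{T}^m_{cont}$ of \ref{eq:operator} together with the continuous-multinomial machinery developed in Section~\ref{section:def}.

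\textit{The discrete formula.} By Theorem~\ref{thm:generate} every equivalence class of paths in $\Gamma_n^{\vec x,\vec y}$ has a unique representative whose difference sequence is drawn from the axes of symmetry of $d_n$ --- that is, the $|\mathcal{A}_n|$ non-null primitive directions (ordered by $x$-coordinate, with the pure-time direction $\vec a_{(|\mathcal{A}_n|-1)/2}$ in the middle) together with the two null directions --- and by Theorem~\ref{thm:axes} each such step $\vec a$ satisfies $d_n(0,\vec a)=d_{l_2^*}(0,\vec a)\ge 0$, so these representatives are precisely the admissible polygonal paths, with proper time $\rho_n(\gamma)=\sum_{\vec a}I_{\vec a}\,d_n(0,\vec a)$ where $I_{\vec a}$ counts the uses of direction $\vec a$. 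Since distinct orderings of a fixed multiset of primitive steps trace distinct piecewise-linear curves, the number of classes with prescribed counts $\{I_{\vec a}\}$ is the multinomial $\binom{\sum_{\vec a}I_{\vec a}}{\prod_{\vec a}(I_{\vec a})}$, and the phase $e^{im\rho_n}$ is constant across this family. Re-indexing the sum in \ref{eq:propogators} so that the outermost index is the total proper time $I=\rho_n(\gamma)$ and the inner indices are the step counts, the admissible $\{I_{\vec a}\}$ are exactly the non-negative integer solutions of the three linear relations $\sum_{\vec a}I_{\vec a}\,proj_t(\vec a)=proj_t(\vec y-\vec x)$, $\sum_{\vec a}I_{\vec a}\,proj_x(\vec a)=proj_x(\vec y-\vec x)$ and $\sum_{\vec a}I_{\vec a}\,d_n(0,\vec a)=I$. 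The two null counts $I_{\pm 0}$ (absent from the proper-time relation) and the central count $I_{(|\mathcal{A}_n|-1)/2}$ are then uniquely determined from the remaining counts by solving this $3\times 3$ system --- this is \ref{eq:propStep3} --- while each free count $I_{a_j}$ ranges over $0\le I_{a_j}\le C_j$, with $C_j$ the smaller of the residual time budget and the residual proper-time budget after $I_{a_1},\dots,I_{a_{j-1}}$ have been fixed, because every later step contributes non-negatively to both $proj_t$ and $\rho_n$; the floors appear since the counts are integers. Non-negativity forces the nested sums to terminate, and collecting terms yields \ref{eq:k3one}.

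\textit{The continuum formula.} Apply the identity just proved on the refined lattice $X|_m$ with endpoints $[m\vec x],[m\vec y]$, so that every step count scales as $I_{\vec a}=m\lambda_{\vec a}+o(m)$ with $\sum_{\vec a}\lambda_{\vec a}\vec a=\vec y-\vec x$, and feed the result into $\mathcal{T}^m_{cont}$. Regrouping the multinomial by the number $n=|\gamma|$ of maximal straight segments rewrites the count of curves with a given composition and exactly $n$ segments, weighted by $m^{d-n}$, as a sum of Smirnov-word path-polytope volumes; by Theorem~\ref{thm:disctocont} this converges as $m\to\infty$ to the continuous multinomial $\begin{Bmatrix}\cdots\end{Bmatrix}$ of \ref{eq:contmult} at the rescaled arguments, while the bounded nested sums $\sum_{I_{a_j}=0}^{C_j}$ become Riemann sums for the integrals $\int_0^{C_j}dI_k$ with $C_j$ the un-floored budget expression. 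The elimination of the three dependent counts reappears in the continuous picture as the explicit extra arguments of the continuous multinomial: the leftover proper time $I-\sum_{k}I_k\,d_n(0,\vec a_k)$, taken up by the central direction, together with the two null slots $f_{\pm}$, which by the solved linear system carry exactly the half-sums $\tfrac12\bigl(proj_t(\vec y-\vec x)\pm proj_x(\vec y-\vec x)\bigr)$ displayed in \ref{eq:k3two}. The surviving outer sum over $I$, transported through $\mathcal{T}^m_{cont}$, assembles into the transform $\mathcal{F}|_I^m$; dividing by the normalization in \ref{eq:cont}, whose $\mathcal{T}^m_{cont}$-image is a maximal path count at fixed time separation $[nt]$ and hence converges to a constant depending only on $t$, gives \ref{eq:k3two} up to that $t$-dependent multiplicative constant. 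Finally, the limit exists in the sense required by \ref{eq:cont} because the integrand is the continuous multinomial, shown in Section~\ref{section:def} to be a well-defined, fully differentiable function with enough decay.

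\textit{Main obstacle.} The step I expect to be hardest is the interchange of the limit $m\to\infty$ with the infinite sum over the segment number $n$ and with the nested bounded sums in the numerator: upgrading the termwise convergence of Theorem~\ref{thm:disctocont} to convergence of the whole expression requires a uniform tail estimate on the $m^{d-n}$-weighted Smirnov-word volumes --- equivalently, on the series expansion of the continuous multinomial --- and one must also check that the normalization denominator stays bounded away from $0$ in the limit. A secondary, purely bookkeeping, difficulty is verifying that eliminating the three dependent counts produces precisely the arguments $f_{\pm}$ and $I-\sum_k I_k d_n(0,\vec a_k)$ of \ref{eq:k3two}, and that the floors in the $C_j$ of \ref{eq:k3one} disappear cleanly in the Riemann-sum limit.
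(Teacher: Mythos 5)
Your proposal follows essentially the same route as the paper: organize paths by their composition vector $\{I_{\vec a}\}$ over $\mathcal{A}_n$ via Theorem~\ref{thm:generate}, count orderings with a multinomial, impose the three linear constraints, solve the $3\times3$ system for the two null counts and the central count to get Eq.~\ref{eq:propStep3}, bound the free counts by residual budgets to get the $C_j$, then push the whole thing through $\mathcal{T}^m_{cont}$ and Theorem~\ref{thm:disctocont} to turn the multinomials into continuous multinomials and the nested bounded sums into integrals, absorbing the normalization into a $t$-dependent constant. You correctly flag the genuine technical obstacle --- interchanging the $m\to\infty$ limit with the infinite segment-number sum needs a uniform tail bound --- which the paper handles by invoking the Schwartz-class/decay estimates of Theorem~\ref{thm:conttails}.

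One small point the paper addresses that your proposal glosses over: after eliminating $I_{\pm 0}$ and $I_{(|\mathcal{A}_n|-1)/2}$ via Eq.~\ref{eq:propStep3}, the formulas for $I_{\pm 0}$ involve divisions by $2$ and so are a priori half-integers; the paper spends a paragraph checking that the two expressions have the same parity (using the symmetry of $\mathcal{A}_n$ and the form $x_1\pm x_2$) and reduces the admissibility condition to $2I_{-0}$ being a non-negative even integer, which remains an implicit constraint on the range of $I$ in Eq.~\ref{eq:k3one}. Your statement ``non-negativity forces the nested sums to terminate'' covers the range but not this integrality/parity issue, which does affect which terms actually contribute in the discrete formula.
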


Next, I want to demonstrate the capacity of $K_{n}(\vec{x},\vec{y})$ to approximate $K_{l_{2}^{*}}$. For that we first show that a limit of the continuum propagators exists. For this we have Theorem~\ref{thm:k1}.

\begin{theorem}
\label{thm:k1}

Consider the expression in Equation~\ref{eq:difference} for 

\begin{equation}
    \label{eq:difference}
    |K_{p}^{cont}(\vec{x}_{2},\vec{x}_{1})-K_{q}^{cont}(\vec{x}_{2},\vec{x}_{1})|\le \frac{\alpha proj_{t}(\vec{x}_{2}-\vec{x}_{1})}{min(p,q)}
\end{equation}

This expression shows that the sequence of continuum propagators is Cauchy in the sup norm (so long as the maximum t to evaluate this inequality is constrained). Therefore, it has a convergent subsequence, and $K_{l_{2}^{*}}$ exists. 

\end{theorem}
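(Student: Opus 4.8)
The plan is to isolate a single quantitative input — a uniform $O(1/n)$ rate for the polygonal approximation $d_n\to d_{l_2^*}$ — and then close the argument with a completeness statement. Existence of each $K_n^{cont}$ is already guaranteed by Theorem~\ref{thm:k3}, so the real content is the estimate in Equation~\ref{eq:difference}. I would begin by recording two structural facts. First, by the construction of $d_n$ (Theorem~\ref{thm:axes}), every segment admissible for $\Gamma_n$ satisfies $d_{l_2^*}=d_n$, so $\rho_n$ restricted to $\Gamma_n^{\vec x,\vec y}$ is the honest Minkowski proper time $\rho_{l_2^*}$. Second, the $\min$ defining $d_n$ in Equation~\ref{eq:polygon} only acquires more terms as $n$ grows and the Pythagorean directions are nested, so $\Gamma_p^{\vec x,\vec y}\subseteq\Gamma_q^{\vec x,\vec y}\subseteq\Gamma^{\vec x,\vec y}$ for $p\le q$ and $K_q(\vec x,\vec y)-K_p(\vec x,\vec y)=\sum_{\gamma\in\Gamma_q\setminus\Gamma_p}e^{im\rho_{l_2^*}(\gamma)}$; the same bookkeeping passes to the continuum limit of Equation~\ref{eq:cont}, so $K_q^{cont}-K_p^{cont}$ is carried entirely by the paths using directions in $\mathcal{A}_q\setminus\mathcal{A}_p$.

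The key geometric lemma I would then prove is that there is a constant $C$ with
\[ 0\ \le\ d_{l_2^*}(\vec u,\vec v)-d_p(\vec u,\vec v)\ \le\ \frac{C\,proj_t(\vec v-\vec u)}{p} \]
for every timelike $\vec v-\vec u$. This is where the arithmetic of Pythagorean triples enters: the primitive triples with hypotenuse at most $p$ have slopes whose consecutive gaps are $O(1/\sqrt p)$, and passing from a chord of the Minkowski ``unit circle'' to the inscribed polygon squares the angular mesh, producing an $O(1/p)$ defect in the gauge, which scales linearly in $proj_t$. From this I would extract the perturbation statement that drives Equation~\ref{eq:difference}: every $\gamma\in\Gamma_q^{\vec x,\vec y}$ admits a partner $\gamma'\in\Gamma_p^{\vec x,\vec y}$ — obtained by rewriting each step in an $\mathcal{A}_q\setminus\mathcal{A}_p$ direction as a two-segment path through the adjacent $\mathcal{A}_p$ vertices with the same displacement — satisfying $|\rho_{l_2^*}(\gamma)-\rho_{l_2^*}(\gamma')|\le C\,proj_t(\vec x_2-\vec x_1)/p$ by the reverse triangle inequality in $d_{l_2^*}$. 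In the normalized continuum picture this says the measures on path space representing $K_p^{cont}$ and $K_q^{cont}$ are within $O(t/p)$ in a transport sense for the cost $|\rho_{l_2^*}(\gamma)-\rho_{l_2^*}(\gamma')|$, that their total masses differ by $O(t/p)$, and that the post-$\mathcal{T}^m_{cont}$ integrand is Lipschitz in $\rho_{l_2^*}$; combining these gives $|K_p^{cont}(\vec x_2,\vec x_1)-K_q^{cont}(\vec x_2,\vec x_1)|\le\alpha\,proj_t(\vec x_2-\vec x_1)/\min(p,q)$.

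To finish, fix $T>0$ and restrict attention to the slab $S_T=\{(\vec x_1,\vec x_2):0\le proj_t(\vec x_2-\vec x_1)\le T\}$. Equation~\ref{eq:difference} then reads $\|K_p^{cont}-K_q^{cont}\|_{L^\infty(S_T)}\le\alpha T/\min(p,q)$, so $(K_n^{cont})_n$ is Cauchy in the Banach space of bounded functions on $S_T$ with the sup norm, and therefore converges uniformly on $S_T$; ranging over all $T$ and checking consistency on overlaps produces a single limit function on $X^{cont}\times X^{cont}$, which is exactly $K_{l_2^*}^{cont}$ as defined in Equation~\ref{eq:cont}. Since a Cauchy sequence converges, the whole sequence (a fortiori a subsequence) converges, and boundedness of $K_{l_2^*}^{cont}$ on each slab is inherited from the uniform bounds on $K_n^{cont}$ provided by Theorem~\ref{thm:k3}.

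\textbf{Main obstacle.} The crux is proving Equation~\ref{eq:difference} with a rate uniform in the endpoints (and in the mass), and this splits into two delicate pieces. The first is the number-theoretic lemma that primitive Pythagorean directions up to hypotenuse $p$ have angular mesh $O(1/\sqrt p)$, together with the convexity step upgrading this to the $O(1/p)$ metric defect above — a cruder mesh bound would only yield $O(1/\sqrt p)$, enough for a weaker Cauchy estimate but not the stated one. The second is commuting the direction-refinement comparison past the $m\to\infty$ limit and the normalization in Equation~\ref{eq:cont}: one must show the polytope volumes $\mu(P(q,c))$, the weights $m^{d-n}$, and the normalizing denominator are all stable under adjoining a generating direction, with errors that still integrate to $O(t/p)$. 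The naive bound of $K_q^{cont}-K_p^{cont}$ by the number of ``new'' paths is hopeless — that count is far too large — so one genuinely needs the perturbative matching $\gamma\mapsto\gamma'$, which converts ``many extra paths'' into ``small extra contribution.''
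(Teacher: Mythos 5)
Your overall strategy — isolate a quantitative $O(1/p)$ approximation rate for $d_p\to d_{l_2^*}$, transfer it to a proper-time perturbation on paths, and close by completeness — is the same high-level plan as the paper, and your completeness argument at the end is essentially what the paper does. But the middle of the proposal takes a genuinely different decomposition, and it is there that a real gap appears.

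The paper does not use the nesting $\Gamma_p\subseteq\Gamma_q$ and then treat $K_q^{cont}-K_p^{cont}$ as ``carried by $\Gamma_q\setminus\Gamma_p$.'' Instead it builds a \emph{projection} $\pi_p^{*}:\Gamma_q\to\Gamma_p$ that collapses each step in a fine direction to the nearest coarse direction, then shows two things simultaneously: the per-step proper-time defect is $O(1/p)$, and the preimage cardinality $|\pi_p^{*-1}(\gamma)|\sim(q/p)^{|\gamma|}$ exactly cancels the ratio $G_p/G_q$ of the normalizing denominators that appear in Equation~\ref{eq:cont}. That cancellation is the heart of the argument; it is what turns a sum over the (much larger) $\Gamma_q$ into a reweighted sum over $\Gamma_p$ with weight $\approx 1$ plus a small phase error. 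Your nesting decomposition silently drops this: because $K_n^{cont}$ is the ratio $A_n/B_n$ with $B_n=\mathcal T^m_{cont}\max|\Gamma_n|$, one has
\[
K_q^{cont}-K_p^{cont}\;=\;\sum_{\gamma\in\Gamma_p}\Bigl(\tfrac{1}{B_q}-\tfrac{1}{B_p}\Bigr)e^{im\rho_{l_2^*}(\gamma)}\;+\;\tfrac{1}{B_q}\sum_{\gamma\in\Gamma_q\setminus\Gamma_p}e^{im\rho_{l_2^*}(\gamma)},
\]
and the first term — the old paths under a changed normalization — is not small for free; it is exactly the term the paper kills with $G_p/G_q\cdot|\pi_p^{*-1}(\gamma)|\sim 1$. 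Your remark that ``the same bookkeeping passes to the continuum limit'' is where this is being assumed rather than proved. Your ``transport'' phrasing gestures at it, but the matching $\gamma\mapsto\gamma'$ you propose (re-expressing each $\mathcal A_q\setminus\mathcal A_p$ step as a two-segment $\mathcal A_p$ path with the same displacement) increases the segment count and is many-to-one, so on its own it does not recover the mass balance between $B_p$ and $B_q$; you would still need something like the paper's $(q/p)^{|\gamma|}$ multiplicity count.

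A smaller but still substantive issue is the angular mesh. Theorem~\ref{thm:tripledensity} gives $|\mathcal{A}_p|\sim p/(2\pi)$ primitive triples, equidistributed on the circle, so the gap between consecutive admissible slopes is $\Theta(1/p)$, not $O(1/\sqrt p)$ as you claim. Your ``chord defect squares the mesh'' step would then give $O(1/p^2)$, which does not match Equation~\ref{eq:difference}; the paper instead argues $O(1/p)$ directly as the $l_2$ (and hence $l_{2}^*$) separation between a direction in $\mathcal A_q$ and its nearest neighbor in $\mathcal A_p$, scaled by the number of steps $\sim proj_t$. Your final rate happens to coincide with the correct one, but the intermediate reasoning is internally inconsistent and contradicts the density statement the paper actually relies on. Both issues — the normalization cancellation and the mesh rate — would need to be fixed before this route closes; once fixed, your nesting/matching picture would arguably be cleaner than the paper's $\pi_p^*$-projection, but as written it does not yet establish Equation~\ref{eq:difference}.
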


It will become clear from Theorem~\ref{thm:k3} that $K_{l_{2}^{*}}$ not only exists, but is non-trivial. In the context of \cite{pesky}, we will conjecture that this limiting function is the KG propagator and not the physically significant Feynman propagator. We wish to motivate our work by relating it to physical quantities; i.e. rigorously defining the physically meaningful propagators. To this end, we instead obtain the following theorems for \textbf{Feynman propogators}. We define (as the Feynman propogator $K_{n}^{Feyn}(\vec{x}_{2},\vec{x}_{1}):X\rightarrow\mathbb{C}$) the same discrete sum as in Equation~\ref{eq:propogators} except we allow the $\rho_{d_{n}}(\gamma)$ to be the sum of difference sequence elements of both $\pm d_{n}(\vec{a}_{i})$ for $\vec{a}_{i}\in \mathcal{A}_{n}$. As we will see in Lemma~\ref{lemma:sneaky}, this will reveal some beautiful symmetries underlying the Feynman Propagator. 

When we obtain the continuum object $K_{n}^{Feyn,cont}$, there is another change in convention we must adopt to obtain Theorem~\ref{thm:k2} relating to normalization. The aim of the divisor used in Equation~\ref{eq:cont} was to normalize the propagator such that its inverse Fourier transform would have a fixed integral. This becomes problematic as in the proof of Theorem~\ref{thm:k2} we would need to normalize a non-normalizable function; we would for the solution a propagator which had deltas along $x^{2}+I^{2}=t^{2}$. Therefore we adopt the following convention for our divisor. We shall find each $K_{n}^{cont}$, defined up to some time-dependent normalization, has a well-defined inverse Fourier transform in mass by Theorem~\ref{thm:k3}. The normalization adopted will be to multiply the regular normalization by some power of t, effectively leaving unconstrained the integral but constraining its pointwise value. 

Let $H_{0}^{(2)}$ denote the zero-th Bessel function of the second kind, we know from \cite{Hong_Hao_2010} that it is the form of the relativistic bosonic propagator $\mathbb{E}[T\phi(\vec{x})\phi(\vec{y})]$, and we derive it rigorously within Theorem~\ref{thm:k2}.


\begin{theorem}
\label{thm:k2}

Let $d=1$. For $\vec{x},\vec{y}\in X=\mathbb{Z}\times \mathbb{Z}$ we have

$$lim_{n\rightarrow\infty}K_{n}^{Feyn}(\vec{x},\vec{y})=CH^{(2)}_{0}(md_{l_{2}^{*}}(\vec{x}_{2},\vec{x}_{1}))=C'\mathcal{F}|_{I}^{m}(\frac{1}{\sqrt{t^{2}-x^{2}-I^{2}}})$$

where $\mathcal{F}|_{I}^{m}$ is a fourier transform C' can possibly have a phase and then by definition


$$K^{Feyn}_{d_{l_{2}}^{*}}(\vec{x},\vec{y})=CH_{0}^{(2)}(md_{l_{2}^{*}}(\vec{x}_{2},\vec{x}_{1}))=C\mathbb{E}[T\phi(\vec{x})\phi(\vec{y})]$$


\end{theorem}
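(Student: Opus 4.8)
The plan is to reduce the statement to a single Fourier-transform identity by first writing $K_n^{Feyn}$ in $d=1$ in the combinatorial normal form afforded by Theorem~\ref{thm:generate}, then passing to the continuum with the continuous-multinomial machinery of Section~\ref{section:def}, and finally taking $n\to\infty$. For $d=1$ the axes of symmetry $\mathcal{A}_n$ consist of the scaled primitive Pythagorean directions $\vec a_i=(x_i,t_i)$ together with the two null directions, and on each such direction the polygonal and Minkowski proper times coincide, $d_n(0,\vec a_i)=d_{l_2^*}(0,\vec a_i)=\sqrt{t_i^2-x_i^2}=:I_i$. Since every class in $\Gamma_n^{\vec x,\vec y}$ has a unique representative with difference sequence drawn from $\mathcal{A}_n$ (Theorem~\ref{thm:generate}), counting orderings with the multinomial coefficient gives
$$K_n^{Feyn}(\vec x,\vec y)=\sum_{\{k_i\}:\ \sum_i k_i\vec a_i=\vec y-\vec x}\binom{\sum_i k_i}{k_1,k_2,\dots}\,e^{\,i m\sum_i \varepsilon_i k_i I_i},$$
where the per-segment sign freedom $\varepsilon_i=\pm1$ is precisely the Feynman modification (allowing $\pm d_n$). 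Collecting paths by the value $I:=\sum_i\varepsilon_i k_i I_i$ of the signed proper time while keeping $t=proj_t(\vec y-\vec x)$ and $x=proj_x(\vec y-\vec x)$ fixed exhibits $K_n^{Feyn}$ as a Fourier series in $m$ dual to $I$ with a constrained multinomial coefficient; this is the $d=1$ Feynman analog of Theorem~\ref{thm:k3}.

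Next I would pass to the continuum: applying $\mathcal{T}_{cont}^m$ and invoking Theorem~\ref{thm:disctocont} together with the convergence of the continuous multinomial established in Section~\ref{section:def} (after \cite{continuouslatticepath}), the coefficient sum becomes $\Phi_n(I)$, the Lebesgue volume of the path polytope of vectors $\{k_i\}\in\mathbb{R}_+^{|\mathcal{A}_n|}$ (with the chosen signs) subject to $\sum_i k_i x_i=x$, $\sum_i k_i t_i=t$, $\sum_i\varepsilon_i k_i I_i=I$, so that $K_n^{Feyn,cont}(\vec x,\vec y)=\mathcal{F}|_I^m(\Phi_n(I))$ up to the $t$-dependent normalization fixed in the discussion preceding the theorem.

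The crux is the limit $n\to\infty$ and the identification of $\Phi_n$. As $n\to\infty$ the directions $\mathcal{A}_n$ become dense in the timelike cone and $d_n\to d_{l_2^*}$ uniformly on compacta (the metric-approximation picture, Figure~\ref{fig:CloserCirc}), so the polytopes exhaust the continuum configuration space of achronal local paths and $\Phi_n(I)$ converges pointwise in $I$; the Feynman form of the $\sup$-norm Cauchy estimate of Theorem~\ref{thm:k1} supplies a dominating bound that lets the limit pass through $\mathcal{F}|_I^m$. To evaluate $\Phi_\infty$ I would run a transformation-invariance argument: translation invariance makes it a function of $\vec y-\vec x$ only; the construction commutes with Lorentz boosts, which permute timelike directions and preserve $d_{l_2^*}$, so $\Phi_\infty$ depends only on $t^2-x^2$ and on $I$; and concatenation of paths adds proper times while multiplying path counts, forcing a convolution identity in $I$ which, once the $t$-scaling is matched, pins $\Phi_\infty$ down to $C'/\sqrt{(t^2-x^2)-I^2}$ on $I^2<t^2-x^2$ with the purely imaginary analytic continuation beyond — the branch being selected by the $\varepsilon_i=\pm1$ rule (the symmetries recorded in Lemma~\ref{lemma:sneaky}) rather than the $J_0$-branch one would get from $\Gamma_n$ with $\varepsilon_i\equiv1$. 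One then finishes with the classical integral representation $\int \frac{e^{imI}}{\sqrt{a^2-I^2}}\,dI=C\,H_0^{(2)}(ma)$ for $a=d_{l_2^*}(\vec x,\vec y)=\sqrt{t^2-x^2}$ (combining the $|I|<a$ and $|I|>a$ contributions with the chosen branch), giving $\lim_n K_n^{Feyn}=C\,H_0^{(2)}(m\,d_{l_2^*})$; the second displayed identity is then the definition of $K^{Feyn}_{d_{l_2^*}}$ together with the known expression for $\mathbb{E}[T\phi(\vec x)\phi(\vec y)]$ from \cite{Hong_Hao_2010}.

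I expect the main obstacle to be this third step in two respects: justifying the interchange of $\lim_n$ with $\mathcal{F}|_I^m$ (needing the Feynman version of the Theorem~\ref{thm:k1} estimate, and ruling out concentration of $\Phi_n$ at the boundary $I^2=t^2-x^2$ or at $|I|=t$), and obtaining the precise analytic continuation so that the answer is $H_0^{(2)}$, the time-ordered propagator, rather than $J_0$ or $H_0^{(1)}$ — which is exactly why one must work with the Feynman modification $K_n^{Feyn}$ and why Lemma~\ref{lemma:sneaky} is invoked.
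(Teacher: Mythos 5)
Your high-level plan — rewrite $K_n^{Feyn}$ in multinomial normal form via Theorem~\ref{thm:generate}, pass to the continuum with $\mathcal{T}_{cont}^m$ and the continuous multinomial, take $n\to\infty$ with a Cauchy bound, and finish with the integral representation of $H_0^{(2)}$ — matches the paper's structure, and your use of translation and boost invariance to reduce $\Phi_\infty$ to a function of $t^2-x^2$ and $I$ is also the paper's first step inside Lemma~\ref{lemma:sneaky} (boosts there are implemented via iterated Einstein velocity addition). However, the step where you actually pin down the form $\Phi_\infty\propto 1/\sqrt{(t^2-x^2)-I^2}$ is a genuine gap.

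You assert that concatenation of paths forces a convolution identity in $I$ which, together with $t$-scaling, uniquely determines the inverse square-root form. This is not substantiated and, more importantly, is not how the paper does it: the free relativistic time-ordered propagator does not obey a Chapman--Kolmogorov composition law in position space, so the proposed convolution identity is not available, and even granting some semigroup-type identity it would not by itself exclude other scale-covariant densities. What actually determines the power $-1/2$ in the paper is the full rotational symmetry of $F_p$ in the $(x,I)$ plane (the $\mathbb{Z}_p$-invariance established by the word-rotation map $H_k$ in Lemma~\ref{lemma:sneaky}): after boost invariance reduces the dependence to $\tau=\sqrt{t^2-x^2}$, the angle $\theta=\arcsin(I/\tau)$ becomes asymptotically uniform, and the change of variables $d\theta = dI/\sqrt{\tau^2-I^2}$ is exactly what produces the inverse square root. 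Your proposal invokes Lemma~\ref{lemma:sneaky} only for branch selection ($H_0^{(2)}$ versus $J_0$), but in the paper that lemma is the engine that produces the $1/\sqrt{\cdot}$ density itself; without the $(x,I)$-rotation symmetry and the uniform-angle argument, the form of $\Phi_\infty$ is left undetermined. You also need Lemma~\ref{lemma:idea} to justify that the argument of your Fourier transform agrees with the polytopic sum to which Lemma~\ref{lemma:sneaky} applies; this identification is not automatic and is a nontrivial part of the paper's proof.
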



Now that we have established theorems that motivate the utility of our lattice propagators, we move to the $K_{1}$ propagator. This propagator is simple to compute in many contexts and allows for the study of the first-order properties of propagators on an arbitrary surface.

\textbf{A note for physicists}

The nuance explained before the statement of Theorem~\ref{thm:k1} applies to the section below. These results do not approximate the time ordered Klein Gordon propogator $\mathbb{E}[\phi(\vec{x})\phi(\vec{y})]$\cite{pesky} and the realistic $\mathbb{E}[T\phi(\vec{x})\phi(\vec{y})]$. To replicate this as we have done in Theorem~\ref{thm:k1}, we would also need to include in $\rho_{d_{n}}(\gamma)$ negative length terms. This is akin to considering anti-particles and permitting the spontaneous alteration of a particle to an anti-particle during its travel.

\subsection{Statement of $K_{1}$ Theorems}

First, I find analytic formulas for the $K_{1}$ propogator for $X=\mathbb{Z}^{d}\times \mathbb{Z}$. We will find non-implicit expressions for $K_{1}$ for arbitrary d and show how we would obtain the continuum limit for only $d=2$.

\begin{theorem}
\label{thm:freeprop}
Let $\vec{x}_{1},\vec{x}_{2}\in X$ such that $proj_{t}(\vec{x}_{2}-\vec{x}_{1})\ge 0$. Furthermore, let $|*|_{l_{1}}$ denote the taxicab norm on X. Then for $\vec{x}=\vec{x}_{2}-\vec{x}_{1}-proj_{t}(\vec{x}_{2}-\vec{t})\hat{t}$. I have

\begin{multline}
    \label{eq:analL1}
    K_{1}(\vec{x}_{2},\vec{x}_{1})=\sum_{I}\sum_{I_{1}=|x_{1}|}^{\frac{I+|\vec{x}|_{l_{1}}}{2}}\sum_{I_{2}=|x_{2}|}^{\frac{I+|\vec{x}|_{l_{1}}}{2}-|x_{1}|}....\sum_{I_{d}=|x_{d}|}^{\frac{I+|\vec{x}|_{l_{1}}}{2}-\sum_{i=1}^{d-1}|x_{i}|}C(\{I_{i}\})
    \\
    \textrm{ where }C(\{I_{i}\})=\frac{f(I,|\vec{x}|_{l_{1}})proj_{t}(\vec{x}_{2}-\vec{x}_{1})!}{(\Pi_{i=1}^{d}(I_{i})!(I_{i}-|proj_{x_{i}}(\vec{x})|)!)(proj_{t}(\vec{x}_{2}-\vec{x}_{1})-I)!}e^{i(I-proj_{t}(\vec{y}-\vec{x}))}
    \\
    \textrm{ where }f(n,m)=\begin{pmatrix}1&&n-m\in 2\mathbb{Z}\\0&&\textrm{ otherwise }\end{pmatrix}
\end{multline}

Letting $d=2$ we have

\begin{equation}
    \label{eq:analL1continuum}
    K_{1}^{cont}(\vec{x}_{2},\vec{x}_{1})=C(t)\mathcal{F}(\begin{Bmatrix}t\\.5(I-|x|),.5(I+|x|),t-I\end{Bmatrix})
\end{equation}

where $\mathcal{F}$ denotes the fourier transform.

\end{theorem}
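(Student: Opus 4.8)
The plan is to reduce the sum defining $K_{1}$ to an explicit lattice‑path count, read off Equation~\ref{eq:analL1} directly, and then obtain Equation~\ref{eq:analL1continuum} by running that closed form through the operator $\mathcal{T}^{m}_{cont}$, using Theorem~\ref{thm:disctocont} and the convergence of the continuous multinomial established in Section~\ref{section:def}.

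\emph{The discrete formula.} Fix $\vec{x}_{1},\vec{x}_{2}$ with $T:=proj_{t}(\vec{x}_{2}-\vec{x}_{1})\ge 0$ and set $\vec{x}=\vec{x}_{2}-\vec{x}_{1}-T\hat{t}$. Every edge of $X$ is an $\ell_{\infty}$‑unit vector, so along any $\gamma\in\Gamma_{1}^{\vec{x}_{1},\vec{x}_{2}}$ a single step $\Delta$ has $proj_{t}(\Delta)\in\{0,1\}$ and must satisfy $d_{l_{2}^{*}}(\Delta)=d_{1}(\Delta)\ge 0$ (with $d_{1}$ the taxicab minkowski metric of Equation~\ref{eq:taxicab}); since each $|proj_{x_{i}}(\Delta)|\le 1$, a one‑line case analysis leaves only $\Delta=\hat{t}$ (a ``stay'') or $\Delta=\hat{t}\pm\hat{e}_{i}$ (a null step). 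Hence, modulo the piecewise‑linear equivalence of Equation~\ref{eq:pathset2}, the classes of $\Gamma_{1}^{\vec{x}_{1},\vec{x}_{2}}$ are exactly the words $w\in\{0,\pm e_{1},\dots,\pm e_{d}\}^{T}$ with $\sum_{j}w_{j}=\vec{x}$ (a maximal run of $k$ equal letters is the only $\mathcal{A}_{1}$‑decomposition of the corresponding straight segment, so Theorems~\ref{thm:axes} and~\ref{thm:generate} give a bijection words~$\leftrightarrow$~classes), and the set is nonempty iff $d_{1}(\vec{x}_{2},\vec{x}_{1})=T-|\vec{x}|_{l_{1}}\ge 0$. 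Because $d_{1}(\hat{t})=1$ and $d_{1}(\hat{t}\pm\hat{e}_{i})=0$, $\rho_{1}(w)$ is the number of ``stay'' letters, i.e. $\rho_{1}(w)=T-I$ with $I$ the number of ``move'' letters, so $e^{im\rho_{1}(w)}=e^{imT}e^{-imI}$. Grouping words by $I$ and by the counts $I_{i}^{\pm}$ of the letters $\pm e_{i}$ (which satisfy $I_{i}^{+}-I_{i}^{-}=proj_{x_{i}}(\vec{x})$ and $\sum_{i}(I_{i}^{+}+I_{i}^{-})=I$), the number of words with a given profile is $T!/\big((T-I)!\prod_{i}I_{i}^{+}!\,I_{i}^{-}!\big)$. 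Putting $I_{i}=\max(I_{i}^{+},I_{i}^{-})$, so that $\{I_{i}^{+},I_{i}^{-}\}=\{I_{i},\,I_{i}-|proj_{x_{i}}(\vec{x})|\}$ and $\sum_{i}I_{i}=(I+|\vec{x}|_{l_{1}})/2$ — the integrality of the latter being the parity factor $f$ — and summing over the admissible $I$ and $\{I_{i}\}$ while factoring out the global phase $e^{imT}$, one obtains Equation~\ref{eq:analL1} (up to the sign convention in the exponent and the absorption of $m$ recorded there).

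\emph{Passing to the continuum for $d=2$.} Work on $X|_{m}$ with endpoints $[m\vec{x}_{1}],[m\vec{x}_{2}]$, so $T_{m}=mt+O(1)$ and the spatial displacement is $m\vec{x}+O(1)$. To apply $\mathcal{T}^{m}_{cont}$ one splits the word sum according to the number $n$ of linear segments: such a word is a Smirnov word $c\in D(n,5)$ over the alphabet $\mathcal{A}_{1}=\{\hat{t},\hat{t}\pm\hat{e}_{1},\hat{t}\pm\hat{e}_{2}\}$ together with segment lengths $\lambda\in\mathbb{Z}_{\ge 1}^{n}$ obeying $\sum_{k}\lambda_{k}=T_{m}$ and $\sum_{k}\lambda_{k}\,(\text{spatial part of }c_{k})=m\vec{x}+O(1)$, and then $\rho_{1}=\sum_{k:\,c_{k}=\hat{t}}\lambda_{k}$. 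With $r(\gamma)=1$ and $\theta(\gamma)=m_{0}\rho_{1}(\gamma)$ for the fixed mass $m_{0}$ (the parameter $m$ of Equation~\ref{eq:propogators}, here kept distinct from the refinement level), Equation~\ref{eq:operator} gives
\[
\mathcal{T}^{m}_{cont}K_{1}([m\vec{x}_{2}],[m\vec{x}_{1}])\big|_{m}=\sum_{n\ge 2}m^{\,2-n}\sum_{c\in D(n,5)}\ \sum_{\lambda}\ e^{\,i m_{0}\rho_{1}(\lambda)/m}.
\]
Rescaling $\lambda=m\sigma$ and invoking the mechanism behind Theorem~\ref{thm:disctocont}, for each fixed $(n,c)$ the weighted lattice sum $m^{2-n}\sum_{\lambda}(\cdot)$ converges to $\int_{P(q,c)}e^{\,i m_{0}\rho_{cont}(\sigma)}\,d\mu(\sigma)$, the path polytope of Equation~\ref{eq:pathpoly} for $q$ the rescaled endpoint difference, while the denominator of Equation~\ref{eq:cont} contributes only a factor depending on $t$.

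\emph{Extracting the continuous multinomial, and the main obstacle.} Slice each polytope integral by the value $r$ of the affine functional $\rho_{cont}$, so $\int_{P(q,c)}e^{im_{0}\rho_{cont}}d\mu=\int_{0}^{t}e^{im_{0}r}\,\mathrm{vol}\{\sigma\in P(q,c):\rho_{cont}(\sigma)=r\}\,dr$. Summing over $(n,c)$ and interchanging the sum with the integral, $K_{1}^{cont}(\vec{x}_{2},\vec{x}_{1})=C(t)\int_{0}^{t}e^{im_{0}r}G(r)\,dr=C(t)\,\mathcal{F}(G)(m_{0})$, where $G(r)=\sum_{n,c}\mathrm{vol}\{\sigma\in P(q,c):\rho_{cont}(\sigma)=r\}$; thus $\mathcal{F}$ is the Fourier transform in the proper‑time variable. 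What remains is to identify $G(r)$, up to the $t$‑dependent constant already pulled out, with the three‑slot continuous multinomial $\begin{Bmatrix}t\\ .5(I-|x|),\,.5(I+|x|),\,t-I\end{Bmatrix}$ at $I=t-r$ and $|x|=|\vec{x}|_{l_{1}}$ — the three slots being the total lengths of the forward‑null, backward‑null and stay segments. Each $\rho$‑slice is again a path polytope for an augmented letter set, so $\sum_{n,c}$ of its volume is literally the defining series Equation~\ref{eq:contmult} of a continuous multinomial (whose convergence is Section~\ref{section:def}); the real content is that this augmented $5$‑letter series collapses to the $3$‑letter one. I expect that collapse — a Vandermonde‑type reorganization merging the two spatial null directions into a single ``forward'' and ``backward'' species without disturbing the no‑immediate‑repeat (Smirnov) bookkeeping, carried out at the level of polytope volumes rather than binomials — to be the main obstacle, together with the bookkeeping that makes the leftover $m^{2-n}$ weights and the Equation~\ref{eq:cont} normalization combine into $C(t)$. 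Justifying the interchange of the sum over $n$ with both $\lim_{m\to\infty}$ and the $r$‑integral is a secondary point, handled by the same tail estimates that underlie Theorem~\ref{thm:k1}.
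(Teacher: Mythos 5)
Your discrete derivation of Equation~\ref{eq:analL1} is essentially the paper's: you set up the step‑type counts $I_i^{\pm}, I_0$, impose the same two constraints, count with a multinomial, and note the parity factor $f$. (If anything you are the more careful of the two: the paper's proof writes the factorials with a spurious $-1$ that contradicts the theorem statement, while your $T!$ matches the statement.)

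For the continuum statement you take a genuinely different route. The paper works scalar‑variable by scalar‑variable: it applies Stirling's formula to the explicit trinomial, locates the sharp interior maximum $I_{max}$, argues that the phase $e^{im(I-\lfloor nt\rfloor)/n}$ is effectively constant across the $O(\sqrt n)$‑wide peak, and then reads off a Riemann sum converging to $C(t)\,\mathcal{F}$ of the continuous multinomial. You instead go through the Smirnov‑word/path‑polytope machinery of Equations~\ref{eq:pathpoly}--\ref{eq:operator}, decompose by the number $n$ of linear segments, rescale to a polytope integral, and then slice by the proper‑time functional $\rho_{cont}$. Both are defensible; your route is closer to the formal apparatus Theorem~\ref{thm:disctocont} sets up, the paper's is a quicker explicit computation that avoids the interchange‑of‑limits bookkeeping you correctly note you would have to supply.

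However, the ``main obstacle'' you flag — collapsing a $5$‑letter Smirnov series into a $3$‑slot continuous multinomial — is not present, and chasing it is the one step that would derail the argument. In this paper ``$d=2$'' in Theorem~\ref{thm:freeprop} (exactly as in Theorem~\ref{thm:k3}) means $X=\mathbb{Z}\times\mathbb{Z}$, i.e.\ one spatial and one time coordinate, so the relevant alphabet is $\mathcal{A}_1=\{(0,1),(\pm 1,1)\}$ with three letters, and the $3$‑slot multinomial $\begin{Bmatrix}t\\ .5(I-|x|),\,.5(I+|x|),\,t-I\end{Bmatrix}$ is exactly the object one gets, with no merging of species required. (This is visible in the paper's own proof, where the continuum computation is done with a single scalar $x$ and a trinomial coefficient.) Indeed, if you did take two genuine spatial dimensions, Equation~\ref{eq:analL1continuum} as written cannot be right: one of the $I_i$ would remain as an unintegrated degree of freedom and you would need the $(2d+1)$‑slot integral form of Equation~\ref{eq:finaBo2}, not a bare $3$‑slot Fourier transform. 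So the remaining work in your approach is not a ``Vandermonde‑type reorganization'' but simply the identification of the $\rho_{cont}$‑level set volume with the $3$‑slot continuous multinomial over the $3$‑letter alphabet, which follows directly from Theorem~\ref{thm:disctocont}.
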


I now will calculate $K_{1}$ for $\sim$ equivalence classes such that $X^{cont}=\mathbb{T}^{d}$ and the Klein Bottle. Future work will be devoted to an in-depth treatment of all $\sim$ that lead to Riemann Surfaces in $d=2$. Theorem~\ref{thm:torus} details our results for the torus.

\begin{theorem}
\label{thm:torus}

Let $d\ge 1$ and let all $L_{i}\in \mathbb{N}$. Furthermore, let $\sim$ be defined on $\partial X$ such that $\vec{x}\sim\vec{y}$ if there is some $i$ st $proj_{x_{i}}(\vec{x})=\pm L_{i}=\mp proj_{x_{i}}(\vec{y})$ and for all $j\neq i, proj_{x_{j}}(\vec{x})=proj_{x_{j}}(\vec{y})$. For ease of expression, we will denote $K_{1}^{\mathbb{T}}$ as the $K_{1}$ for this $X$ and $K_{1}$ for the expression derived in Equation~\ref{eq:analL1}. Then we have Equation~\ref{eq:torus}.

\begin{equation}
    \label{eq:torus}
    K_{1}^{\mathbb{T}}(\vec{x}_{2}-\vec{x}_{1})=\sum_{\{\vec{x}'|\vec{x}'=\vec{x}_{2}+\sum 2m_{i}L_{i}e_{i},|\vec{x}'-\vec{x}_{1}-proj_{t}(\vec{x}'-\vec{x}_{1})\hat{t}|_{l_{1}}\le proj_{t}(\vec{x}'-\vec{x}_{1}),m_{i}\in \mathbb{Z}\}}(K_{1}(0,\vec{x}'-\vec{x}_{1}))
\end{equation}

\end{theorem}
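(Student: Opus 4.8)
The plan is to prove Equation~\ref{eq:torus} by a method-of-images argument built on the covering map from $\mathbb{Z}^d \times \mathbb{Z}$ onto the quotient graph: achronal local $1$-polygonal paths on the torus lift to such paths on the cover ending at the lattice translates of $\vec{x}_2$, and this correspondence preserves the proper time $\rho_1$, so the discrete propagators add up.

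First I would make the quotient geometry explicit. The relation $\sim$ presents $X^{\mathbb{T}}$ as the graph with vertex set $(\prod_i \mathbb{Z}/2L_i\mathbb{Z}) \times \mathbb{Z}$ and edges inherited from the $l_\infty$-edges of $\mathbb{Z}^d \times \mathbb{Z}$; write $\pi : \mathbb{Z}^d \times \mathbb{Z} \to X^{\mathbb{T}}$ for the quotient map, a covering of graphs (of multigraphs when some $L_i = 1$). Its deck group is generated by the purely spatial translations $\vec{x} \mapsto \vec{x} + 2L_i e_i$, so the fiber over $\pi(\vec{x}_2)$ is $F := \{\vec{x}_2 + \sum_i 2m_i L_i e_i : m \in \mathbb{Z}^d\}$, every point of which has time coordinate $proj_t(\vec{x}_2)$. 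I also record the elementary unique path-lifting property for graph coverings: every walk in $X^{\mathbb{T}}$ starting at $\pi(\vec{x}_1)$ lifts uniquely to a walk in $\mathbb{Z}^d \times \mathbb{Z}$ starting at $\vec{x}_1$, where a walk is an edge-sequence (so the lift is unambiguous also in the multigraph case), and its terminal vertex lies in $F$ exactly when the walk ends at $\pi(\vec{x}_2)$.

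Next I would note that the conditions defining $\Gamma_1^{\cdot,\cdot}$ in Equation~\ref{eq:pathset2} — that $d_{l_2^*} = d_1 \ge 0$ and that $d_1 \in \mathbb{Z}$ and that $proj_t \ge 0$ on each step — are local, being read one edge at a time, and that the per-edge quantities $d_1$, $d_{l_2^*}$, $proj_t$ are invariant under the deck translations (which change no coordinate difference); indeed this is how these quantities are to be interpreted on $X^{\mathbb{T}}$, by pulling an edge back along $\pi$. The same is true of $\rho_1(\gamma) = \sum_i d_1(\vec{x}_i,\vec{x}_{i+1})$ and of the property of being a maximal straight run in a given $\mathcal{A}_1$-direction, so $\pi$ and path-lifting induce a bijection, descending to PL-equivalence classes,
$$\Gamma_1^{\pi(\vec{x}_1),\pi(\vec{x}_2)} \;\longleftrightarrow\; \bigsqcup_{m \in \mathbb{Z}^d} \Gamma_1^{\vec{x}_1,\, \vec{x}_2 + \sum_i 2m_i L_i e_i},$$
disjoint because distinct $m$ give distinct endpoints, under which $\rho_1$ is unchanged. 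Summing $e^{im\rho_1(\gamma)}$ over the two sides and using translation invariance of $K_1$ on $\mathbb{Z}^d \times \mathbb{Z}$ (each $\Gamma_1^{\vec{x}_1,\vec{x}'}$ is a translate of $\Gamma_1^{0,\vec{x}'-\vec{x}_1}$ with $\rho_1$ unchanged) gives $K_1^{\mathbb{T}}(\vec{x}_2 - \vec{x}_1) = \sum_{m \in \mathbb{Z}^d} K_1(0,\, \vec{x}_2 - \vec{x}_1 + \sum_i 2m_i L_i e_i)$.

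Finally I would restrict the range of summation. For $\vec{v} \in \mathbb{Z}^d \times \mathbb{Z}$, a path in $\Gamma_1^{0,\vec{v}}$ must be a concatenation of $proj_t(\vec{v})$ steps drawn from $\{\hat t\} \cup \{\pm e_i + \hat t : 1\le i\le d\}$, each advancing time by exactly $1$ (this is the step set underlying Theorem~\ref{thm:freeprop}); hence $\Gamma_1^{0,\vec{v}} = \emptyset$, so $K_1(0,\vec{v}) = 0$, unless the system $k_0 + \sum_i(k_i^+ + k_i^-) = proj_t(\vec{v})$, $k_i^+ - k_i^- = proj_{x_i}(\vec{v})$ has a solution in nonnegative integers, which holds iff $\sum_i|proj_{x_i}(\vec{v})| \le proj_t(\vec{v})$, i.e. iff $|\vec{v} - proj_t(\vec{v})\hat t|_{l_1} \le proj_t(\vec{v})$. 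Discarding the vanishing terms from the sum above leaves exactly the (finite) index set of Equation~\ref{eq:torus}, which completes the proof. I expect the only delicate point to be the bookkeeping around $X^{\mathbb{T}}$: fixing the meaning of $d_1$, $d_{l_2^*}$ and $\Gamma_1$ on the quotient via local lifts, and checking that the lift correspondence is a genuine bijection on PL-equivalence classes, including the degenerate multigraph case $L_i = 1$; the feasibility characterization and translation invariance are routine.
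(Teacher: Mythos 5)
Your proposal is correct and follows essentially the same covering-space / method-of-images argument as the paper: lift achronal local $1$-polygonal paths on the quotient to the cover $\mathbb{Z}^d\times\mathbb{Z}$ via unique path-lifting, fix a lift of the source, sum the free-space $K_1$ over the fiber of the target, and discard unreachable translates. Your write-up is somewhat more careful than the paper's (explicit bijection on PL-classes, $\rho_1$-invariance under the deck group, and the degenerate multigraph case when $L_i=1$), but the underlying argument is the same.
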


We have the result for the Klein bottle in Theorem~\ref{thm:klien}. This surface is non-orientable, so technically not a Reiman surface.

\begin{theorem}
\label{thm:klien}
Let $d=2$ and let all $L_{i}\in \mathbb{N}$. We let $\sim$ be defined on $\partial X$ such that $\vec{x}\sim\vec{y}$ if such that $proj_{x_{1}}(\vec{x})=\pm L_{1}=\mp proj_{x_{1}}(\vec{y})$ and $proj_{x_{2}}(\vec{x})=proj_{x_{2}}(\vec{y})$  or $proj_{x_{1}}(\vec{x})=-proj_{x_{1}}(\vec{y})$ and $proj_{x_{2}}(\vec{x})=\pm L_{1}=\mp proj_{x_{2}}(\vec{y})$. For ease of expression, we will denote $K_{1}^{klien}$ as the $K_{1}$ for this X and $K_{1}$ for the expression derived in Equation~\ref{eq:analL1}. Then, we have
\small
\begin{multline}
    \label{eq:klien}
    K_{1}^{klien}(\vec{x}_{2}-\vec{x}_{1})=\sum_{\vec{x}'\in\mathcal{B}}(K_{1}(0,\vec{x}'-\vec{x}_{1}))
    \\
    \mathcal{B}=\{\vec{x}'|\vec{x}'=(\vec{x}_{2}+ 2m_{1}L_{1}\vec{e}_{1}+2(m_{2}-proj_{x_{2}}(\vec{x}_{2}))L_{2}\vec{e}_{2},|\vec{x}'-\vec{x}-proj_{t}(\vec{x}'-\vec{x})\hat{t}|_{l_{1}}\le proj_{t}(\vec{x}'-\vec{x}),m_{1}\in \mathbb{Z}\}
\end{multline}
\normalsize
\end{theorem}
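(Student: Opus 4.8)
The plan is a method-of-images argument through the universal cover. I would first present $X/\!\sim$ as the orbit space $(\BZ^{2}\times\BZ)/\Lambda$, where $\Lambda$ is the group of affine lattice maps generated by the translation $T_{1}(x_{1},x_{2},t)=(x_{1}+2L_{1},x_{2},t)$ and the glide reflection $T_{2}(x_{1},x_{2},t)=(-x_{1},x_{2}+2L_{2},t)$; this $\Lambda$ is (isomorphic to) the fundamental group of the Klein bottle, and one checks that the two families of identifications defining $\sim$ are exactly those induced by $T_{1}$ and $T_{2}$ on the boundary layer of $X$. The structural point to verify is that the quotient map $\pi\colon\BZ^{2}\times\BZ\to X/\!\sim$ is a covering of graphs: every nontrivial element of $\Lambda$ displaces the $x_{1}$- or $x_{2}$-coordinate by a nonzero multiple of $2L_{1}$ or $2L_{2}$, hence by $l_{\infty}$-distance $\ge 2$, so the action is free with no ``short'' elements. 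Moreover every $T\in\Lambda$ is linear, so on difference vectors it preserves $proj_{t}$, $d_{l_{2}^{*}}$, and $d_{l_{1}}(\cdot)=|proj_{t}(\cdot)|-|proj_{x_{1}}(\cdot)|-|proj_{x_{2}}(\cdot)|$ (the reflection only flips a sign inside a $|\cdot|$); thus $\pi$ is a local isometry for all the data entering the definition of $\Gamma_{1}$ in Equation~\ref{eq:pathset2}.

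Next I would set up the path correspondence. By unique path lifting for $\pi$, a representative piecewise-linear path $\gamma$ in $X/\!\sim$ from $[\vec{x}_{1}]$ to $[\vec{x}_{2}]$ lifts uniquely to a lattice path starting at a fixed lift $\vec{x}_{1}$ and ending somewhere in the fibre $\pi^{-1}([\vec{x}_{2}])=\Lambda\cdot\vec{x}_{2}$, and conversely every lattice path from $\vec{x}_{1}$ to a point of $\Lambda\cdot\vec{x}_{2}$ descends; these are mutually inverse. Since $\pi$ preserves $proj_{t}$, $d_{l_{2}^{*}}$ and $d_{l_{1}}$ segment by segment, the achronality, time-orientation and integrality constraints transfer verbatim, so lifting restricts to a bijection
\[
\Gamma_{1}^{[\vec{x}_{1}],[\vec{x}_{2}]}\ \longleftrightarrow\ \bigsqcup_{\vec{x}'\in\Lambda\cdot\vec{x}_{2}}\Gamma_{1}^{\vec{x}_{1},\vec{x}'},
\]
and it preserves $\rho_{1}$, hence the phase $e^{im\rho_{1}}$, term by term. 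Summing over both sides and invoking translation invariance of $K_{1}$ on $\BZ^{2}\times\BZ$ (immediate from Equation~\ref{eq:analL1}) gives $K_{1}^{klien}(\vec{x}_{1},\vec{x}_{2})=\sum_{\vec{x}'\in\Lambda\cdot\vec{x}_{2}}K_{1}(0,\vec{x}'-\vec{x}_{1})$.

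It then remains to massage $\Lambda\cdot\vec{x}_{2}$ into the explicit index set $\mathcal{B}$ of Equation~\ref{eq:klien}. A direct computation of the orbit gives $\Lambda\cdot\vec{x}_{2}=\{((-1)^{m_{2}}proj_{x_{1}}(\vec{x}_{2})+2m_{1}L_{1},\ proj_{x_{2}}(\vec{x}_{2})+2m_{2}L_{2},\ proj_{t}(\vec{x}_{2})):m_{1},m_{2}\in\BZ\}$. Only the summands with $\Gamma_{1}^{\vec{x}_{1},\vec{x}'}\neq\emptyset$ contribute, and because every step of an achronal local taxicab path has $proj_{t}\ge 0$ and $d_{l_{1}}\ge 0$, a short construction-plus-triangle-inequality argument (identical to the one used for the torus in Theorem~\ref{thm:torus}) shows $\Gamma_{1}^{\vec{x}_{1},\vec{x}'}\ne\emptyset$ precisely when $proj_{t}(\vec{x}'-\vec{x}_{1})\ge 0$ and $|\vec{x}'-\vec{x}_{1}-proj_{t}(\vec{x}'-\vec{x}_{1})\hat{t}|_{l_{1}}\le proj_{t}(\vec{x}'-\vec{x}_{1})$. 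Since $proj_{t}$ is constant over the orbit, only finitely many images survive, so the sum is finite and matches the cutoff appearing in $\mathcal{B}$.

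The one genuinely new ingredient compared with the torus, and the step I expect to demand the most care, is the glide reflection: the images with $m_{2}$ odd carry a reflected first coordinate, so to present the answer without a case split one uses that by Equation~\ref{eq:analL1} $K_{1}(0,\cdot)$ depends only on $|proj_{x_{1}}|$, $|proj_{x_{2}}|$ and $proj_{t}$, i.e.\ is even in each spatial coordinate. This evenness is exactly what lets a non-orientable quotient still admit a clean image expansion, but it is also the origin of the base-point bookkeeping visible in $\mathcal{B}$ (the shift $m_{2}-proj_{x_{2}}(\vec{x}_{2})$ in the second coordinate). Carefully matching that shift against the explicit $\Lambda$-orbit above — and keeping track of which lift of $\vec{x}_{2}$ the lifting was started from, since, unlike on the torus, the Klein bottle admits no generic $x_{1}$-rotation isometry and $K_{1}^{klien}$ is genuinely base-point dependent — is the bookkeeping that must be done to arrive at the stated form of $\mathcal{B}$.
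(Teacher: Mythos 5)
Your plan is correct and takes essentially the same route as the paper: the paper also proves Theorem~\ref{thm:klien} by the covering-space / method-of-images argument, handling the Klein bottle in one sentence as ``the exact same proof as above'' after the torus case, with Figure~\ref{fig:klien} in place of Figure~\ref{fig:torus}. Your version is noticeably more careful, and the observation that $K_{1}(0,\cdot)$ is even in each spatial coordinate (read off from Equation~\ref{eq:analL1}) is exactly what makes the glide reflection harmless and lets a nonorientable quotient still admit a clean image sum — a point the paper uses implicitly but never states.
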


Finally, I want to include one result for X in the context of tropical surfaces. Let $d=2,c\in \mathbb{N}$ and let $\mathfrak{p}=max(\{\vec{a}*\vec{x}|\vec{a}\in\{(1,1,-1),(1,-1,-1),(-1,1,-1),(-1,-1,-1)\}\})-c$. Then, $\mathfrak{p}|_{proj_{t}(x)>0}=-d_{1}(0,x)-c$. One may recognize that this surface is the tropical equivalent of de-Sitter space for $t>0$. There are tropical versions of each de-Sitter space; for now, we will also let $X$ be the zero set of $\mathfrak{p}=-d_{1}(0,\vec{x})-c$ for any $d\ge 2$.

\begin{theorem}
\label{thm:deSitter}

Let $X$ be the zero set of $\mathfrak{p}=-d_{1}(0,\vec{x})-c$ (at $d=2$ this is a tropical surface for $proj_{t}(\vec{x})>0$. Say $\vec{x},\vec{y}\in X$ such that $proj_{t}(\vec{y}-\vec{y})>0$ and $proj_{t}(\vec{x})>0)$. Then we have Equation~\ref{equation:deSitd}.

\begin{equation}
    \label{equation:deSitd}
    K_{1}(\vec{y},\vec{x})=\begin{pmatrix}proj_{t}(\vec{y}-\vec{x})\\|proj_{x_{1}}(\vec{y}-\vec{x})|,...,|proj_{x_{d}}(\vec{y}-\vec{x})|\end{pmatrix}
\end{equation}

We also obtain $K_{1}^{cont}(\vec{y},\vec{x})$. It immediately arises from \cite{continuouslatticepath}.

\begin{equation}
    \label{equation:deSitd2}
    K^{cont}_{1}(\vec{y},\vec{x})=\frac{\begin{Bmatrix}proj_{t}(\vec{y}-\vec{x})\\proj_{x_{1}}(\vec{y}-\vec{x}),...,proj_{x_{d}}(\vec{y}-\vec{x})\end{Bmatrix}}{\begin{Bmatrix}proj_{t}(\vec{y}-\vec{x})\\ \frac{proj_{t}(\vec{y}-\vec{x})}{d},...,\frac{proj_{t}(\vec{y}-\vec{x})}{d}\end{Bmatrix}}
\end{equation}

\end{theorem}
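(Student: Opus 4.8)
The plan is to reduce the discrete assertion to a pure lattice-path count by identifying which edges of $X$ support achronal local steps, to recognize that count as a multinomial coefficient, and then to obtain the continuum statement from the convergence theory of the continuous multinomial together with the fact that here all path phases are trivial.

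First I would describe the graph $X$ concretely. Writing $\vec{x}=(x_1,\dots,x_d,t)$, the condition $\vec{x}\in X$ with $proj_t(\vec{x})>0$ is exactly $\sum_{i=1}^{d}|x_i|=proj_t(\vec{x})+c$. A single step $\vec{x}_j\to\vec{x}_{j+1}$ of a path in $\Gamma_{1}^{\vec{x},\vec{y}}$ is an edge of $X$ (so $|\vec{x}_{j+1}-\vec{x}_j|_{l_\infty}=1$) with $proj_t(\vec{x}_{j+1}-\vec{x}_j)\ge 0$ and $d_1(\vec{x}_j,\vec{x}_{j+1})=d_{l_2^*}(\vec{x}_j,\vec{x}_{j+1})\ge 0$ an integer. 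I would check the following dichotomy: the time increment of such a step must equal $+1$ --- it cannot be $0$, because a space-only $l_\infty$-unit move changes $\sum|x_i|$ and so leaves $X$; it cannot be negative by the achronality of time; and it cannot exceed $1$. Given a $+1$ time increment, $d_1\ge 0$ forces $\sum_i|\Delta x_i|\le 1$, so exactly one space coordinate changes, by $\pm1$, and remaining on $X$ forces that change to increase $|x_i|$ by $1$. Hence every admissible step satisfies $d_1=|1|-|\pm1|=0$, so $\rho_1(\gamma)=\sum_j d_1(\vec{x}_j,\vec{x}_{j+1})=0$ for every $\gamma\in\Gamma_{1}^{\vec{x},\vec{y}}$, and therefore $K_{1}(\vec{y},\vec{x})=\sum_{\gamma}e^{i m\cdot 0}=\bigl|\Gamma_{1}^{\vec{x},\vec{y}}\bigr|$.

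Next I would count $\bigl|\Gamma_{1}^{\vec{x},\vec{y}}\bigr|$. Since each step increments time by $1$, every path has exactly $T:=proj_t(\vec{y}-\vec{x})$ steps, and I label each step by the index $i$ whose $|x_i|$ it increments. Because $|x_i|$ is non-decreasing along a path, the sign of $x_i$ is pinned once it leaves $0$, so the number of $i$-labelled steps is forced to be $|x_i(\vec{y})|-|x_i(\vec{x})|$; whenever a path exists this equals $|proj_{x_i}(\vec{y}-\vec{x})|=:n_i$, and $\sum_i n_i=T$ since both endpoints lie on $X$. Reading off the label sequence is then a bijection between equivalence classes of paths and sequences of $T$ labels containing $n_i$ copies of $i$ (distinct label sequences trace distinct piecewise-linear curves, since the path is strictly monotone in time), so $\bigl|\Gamma_{1}^{\vec{x},\vec{y}}\bigr|=\binom{T}{n_1,\dots,n_d}$, which is Equation~\ref{equation:deSitd}. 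In the degenerate configurations where no admissible path exists, the right-hand multinomial symbol vanishes as well, so the identity holds with the usual convention.

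Finally, for $K_{1}^{cont}$ I would substitute the discrete answer into the definition in Equation~\ref{eq:cont}. Rescaling by $m$ replaces $\vec{x},\vec{y}$ by $[m*\vec{x}],[m*\vec{y}]$ and the count by $\binom{mT}{mn_1,\dots,mn_d}$ up to rounding, still with trivial phases, so in $\mathcal{T}^{m}_{cont}$ we have $r\equiv 1$ and $\theta\equiv 0$. Grouping paths by their number of straight runs turns the rescaled count into a Riemann sum indexed by Smirnov words: a run-pattern $c$ with run-lengths near $m\lambda_k$, where $\sum_k\lambda_k e_{c_k}=\vec{y}-\vec{x}$, is realized by about $m^{|c|-d}$ paths per unit of volume in the polytope $P(\vec{y}-\vec{x},c)$, and the weight $m^{d-|c|}$ built into $\mathcal{T}^{m}_{cont}$ cancels this exactly; by the convergence results of Section~\ref{section:def} and \cite{continuouslatticepath} (cf.\ Theorem~\ref{thm:disctocont}) $\mathcal{T}^{m}_{cont}K_{1}([m*\vec{x}],[m*\vec{y}])|_{m}$ therefore converges to the continuous multinomial with top $proj_t(\vec{y}-\vec{x})$ and entries $proj_{x_1}(\vec{y}-\vec{x}),\dots,proj_{x_d}(\vec{y}-\vec{x})$ (using $\sum_i|proj_{x_i}(\vec{y}-\vec{x})|=proj_t(\vec{y}-\vec{x})$ on $X$). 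Applying the same analysis to the normalizer, the count over endpoint pairs on $X|_{m}$ with time separation $[t]$ is maximized at the balanced choice $n_i=t/d$, so its $\mathcal{T}^{m}_{cont}$-image converges to the continuous multinomial with top $proj_t(\vec{y}-\vec{x})$ and all entries equal to $proj_t(\vec{y}-\vec{x})/d$; dividing gives Equation~\ref{equation:deSitd2}.

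The main obstacle is the last step: one must make precise that $\mathcal{T}^{m}_{cont}$ really converts the rescaled lattice-path count into a Riemann sum for the polytope volumes, that this convergence is strong and uniform enough over the growing family of run-patterns to pass to the limit, and that the maximal normalizer converges to the claimed balanced continuous multinomial --- so that the quotient is exactly the stated ratio rather than merely proportional to it. The combinatorial steps, by contrast, are elementary once the edge structure of $X$ is pinned down.
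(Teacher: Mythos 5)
Your proposal is correct and follows essentially the same route as the paper: identify that every admissible step on the tropical surface is a null step $(\pm e_i,1)$ so all phases vanish, recognize the path count as the multinomial $\binom{T}{n_1,\dots,n_d}$, and pass to the continuum via the convergence theory for the continuous multinomial. The paper reaches the first point by appealing to its axes-of-symmetry machinery (Theorems~\ref{thm:axes} and \ref{thm:generate}), whereas you derive it directly from the edge structure of $X$, and you also fill in the $\mathcal{T}^{m}_{\mathrm{cont}}$/Riemann-sum details that the paper compresses into a one-line citation of Theorem~\ref{thm:contconv} (really Theorem~\ref{thm:disctocont}); both are sound and lead to the same argument.
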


I now move on to proving these theorems.

\section{Proofs of the $K_{n}$ Theorems}

First, let us obtain the proof of Theorem~\ref{thm:k3}

\begin{proof}

We let $\vec{x},\vec{y}\in X=\mathbb{Z}\times \mathbb{Z}$ such that $proj_{t}(\vec{y}-\vec{x})>0$. Let $\mathcal{A}_{n}$ be as derived in Theorem~\ref{thm:axes}. For $\gamma\in \Gamma_{n}$, we know by Theorem~\ref{thm:generate} that $\gamma$'s difference sequence may be drawn from $\mathcal{A}_{n}$; let $I_{\vec{a}}$ be the number of elements of the difference sequence of $\gamma$ equal to $\vec{a}$. Then, we note that for any path the following properties hold:

\begin{itemize}

\item $\mathcal{I}:\sum_{a\in\mathcal{A}_{n}}I_{a}proj_{x}\vec{a}=proj_{x}(\vec{y}-\vec{x})$

\item $\mathcal{II}:\sum_{a\in\mathcal{A}_{n}}I_{a}proj_{t}\vec{a}=proj_{t}(\vec{y}-\vec{x})$

\item $\mathcal{III}:\rho_{n}(\gamma)=\sum_{a\in\mathcal{A}_{n}}I_{a}d_{n}(0,\vec{a})$

\end{itemize}

Let us denote $\rho_{n}(\gamma)$ as $I$ for the moment and attempt to group together paths of the same phase. For a fixed phase I, we have $\sum_{a\in\mathcal{A}_{n}}I_{a}$ different 'positions' we may place our vectors $\vec{a}$ from $\mathcal{A}_{n}$. We must place $I_{a_{1}}$ of these into the collection for $I_{a_{1}}$, $I_{a_{2}}$ into the collection for $I_{a_{2}}$, etc. Namely, once we find a distinct combination of $\{I_{\vec{a}}\}_{\vec{a}\in\mathcal{A}_{n}}$ which satisfy $\mathcal{I}\rightarrow\mathcal{III}$, we have $\begin{pmatrix}(\sum_{a\in\mathcal{A}_{n}}I_{a})!\\\Pi_{a\in\mathcal{A}_{n}}(I_{\vec{a}})!\end{pmatrix}$ different possible paths. Altogether, this means we obtain a propagator $K_{n}(\vec{y},\vec{x})$ takes on the form

\begin{equation}
	\label{eq:propStep1}
	K_{n}(\vec{y},\vec{x})=\sum_{I}(\sum_{\{I_{a}\}_{\vec{a}\in\mathcal{A}_{n}}\in\mathcal{I},\mathcal{II},\textrm{ and }\mathcal{III}}\begin{pmatrix}\sum_{\vec{a}\in\mathcal{A}_{n}}I_{a}\\\Pi_{a\in\mathcal{A}_{n}}(I_{\vec{a}})\end{pmatrix})e^{mI}
\end{equation}


We now want to remove the implicit conditions $\mathcal{I}\rightarrow\mathcal{III}$. To do so, we recognize that $I_{a}\ge 0,proj_{t}(\vec{a})\in\mathbb{N},\textrm{ and }proj_{t}(\vec{y}-\vec{x})\in\mathbb{N}$. We order $\vec{a}\in\mathcal{A}_{n}$ such that $\vec{a}_{k+1}$ is that which corresponds to the $k$ in Theorem~\ref{thm:axes}, $\vec{a}_{\pm 0}=(\pm 1,1)$, and the order on the indeces is $-0\le +0\le k$ for $k\in \{0,...,|\mathcal{A}_{n}|\}$. We know that $\frac{|A_{n}|-1}{2}$ corresponds to $a_{k}=(0,1)$ because it is always in $\mathcal{A}_{n}$ and must be in the center of the set $|\mathcal{A}_{n}|$ by symmetry. First, to generate a path, we may allow $I_{a_{1}}$ be any positive number below $min(\lfloor\frac{proj_{t}(\vec{y}-\vec{x})}{proj_{t}(\vec{a}_{1})}\rfloor,\lfloor\frac{I}{d_{n}(0,\vec{a}_{1}})\rfloor)$ (as before we have written anything else, this variable is unconstrained). Then, the bounds on $I_{a_{j}}$ as defined in Eq~\ref{eq:final} follow for $\vec{a}_{j},j\in\{2,...,n\},j\neq\frac{|\mathcal{A}_{n}|-1}{2}$ as we just enforce condition $\mathcal{I}$ and $\mathcal{III}$ on the partial sums.  With this convention established, our conditions become the bounds of Eq~\ref{eq:final} and of Equation~\ref{eq:propStep2}

\begin{equation}
    \label{eq:propStep2}
    \begin{pmatrix}I_{\frac{|\mathcal{A}_{n}|-1}{2}}=I-\sum_{i\neq +0,i\neq \frac{|\mathcal{A}_{n}|-1}{2},i\neq -0}^{|\mathcal{A}_{n}|}I_{a_{i}}d_{n}(0,\vec{a}_{i})\\I_{+0}+I_{-0}=proj_{t}(\vec{y}-\vec{x})-\sum_{i\neq +0,i\neq \frac{|\mathcal{A}_{n}|-1}{2},i\neq -0}^{|\mathcal{A}_{n}|}I_{a_{i}}proj_{t}(\vec{a}_{i})-I_{\frac{|\mathcal{A}_{n}|-1}{2}})\\I_{+0}-I_{-0}=proj_{x}(\vec{y}-\vec{x})-\sum_{i\neq +0,i\neq \frac{|\mathcal{A}_{n}|-1}{2},i\neq -0}^{|\mathcal{A}_{n}|}I_{a_{i}}proj_{x}(\vec{a}_{i}))\end{pmatrix}
\end{equation}

where we satisfy Eq.~\ref{eq:propStep2} last after fixing each $I_{a_{i}}$ for $u\in \{2,...,n\}\setminus\{\frac{|\mathcal{A}_{n}|-1}{2}\}$. Solving the last two of these expressions, we have Equation~\ref{eq:propStep3}.

\begin{equation}
    \label{eq:propStep3}
    \begin{pmatrix}I_{\frac{n-2}{2}}=I-\sum_{i\neq +0,i\neq \frac{|\mathcal{A}_{n}|-1}{2},i\neq -0}^{|\mathcal{A}_{n}|}I_{a_{i}}d_{n}(0,\vec{a}_{i})\\I_{+0}=\frac{proj_{t}(\vec{y}-\vec{x})+proj_{x}(\vec{y}-\vec{x})}{2}-\sum_{i\neq +0,i\neq \frac{|\mathcal{A}_{n}|-1}{2},i\neq -0}^{|\mathcal{A}_{n}|}I_{a_{i}}\frac{proj_{t}(\vec{a}_{i})+proj_{x}(\vec{a}_{i})-d_{n}(0,\vec{a}_{i})}{2}-\frac{I}{2}\\I_{-0}=\frac{proj_{t}(\vec{y}-\vec{x})-proj_{x}(\vec{y}-\vec{x})}{2}-\sum_{i\neq +0,i\neq \frac{|\mathcal{A}_{n}|-1}{2},i\neq -0}^{|\mathcal{A}_{n}|}I_{a_{i}}\frac{proj_{t}(\vec{a}_{i})-proj_{x}(\vec{a}_{i})-d_{n}(0,\vec{a}_{i})}{2}-\frac{I}{2}\end{pmatrix}
\end{equation}

Giving us the final near desired expression of Eq~\ref{eq:final}.

\begin{multline}
	\label{eq:final}
	K_{n}(\vec{y},\vec{x})=\sum_{I}(\sum_{I_{a_{1}}=0}^{C_{1}}...\textrm{no k}=\frac{|\mathcal{A}_{n}|-1}{2}...\sum_{I_{a_{|\mathcal{A}_{n}|}}=0}^{C_{|\mathcal{A}_{n}|}}\begin{pmatrix}\sum_{\vec{a}\in\mathcal{A}_{n}}I_{a}\\\Pi_{a\in\mathcal{A}_{n}}(I_{\vec{a}})\end{pmatrix})e^{mI}
	\\
	C_{j}=min(\lfloor\frac{proj_{t}(\vec{y}-\vec{x})-\sum_{i=1,i\neq\frac{|\mathcal{A}_{n}|-1}{2}}^{j-1}I_{a_{i}}proj_{t}(\vec{a}_{i})}{proj_{t}(\vec{a}_{j})}\rfloor,\lfloor\frac{I-\sum_{i=1,i\neq\frac{|\mathcal{A}_{n}|-1}{2}}^{j-1}I_{a_{i}}d_{n}(0,\vec{a})}{d_{n}(0,\vec{a}_{j})}\rfloor)
	\\
	\textrm{ where Eq~\ref{eq:propStep3} holds for $I_{\pm 0}$ and $I_{\frac{|\mathcal{A}_{n}|-1}{2}}$}
\end{multline}

This expression is correct barring a caveat: we must enforce the condition that Eq~\ref{eq:propStep3} are non-negative integers. The first term is clearly an integer, and if we draw our $a_{i}$ as in Eq~\ref{eq:final}, it will be positive because of our upper bounds on each $I_{a_{i}},i\in \{1,...,n\}$. We note the other two terms are either both integers or both half integers. The expression $\sum_{i\neq +0,i\neq \frac{|\mathcal{A}_{n}|-1}{2},i\neq -0}^{|\mathcal{A}_{n}|}I_{a_{i}}\frac{proj_{t}(\vec{a}_{i})\pm proj_{x}(\vec{a}_{i})-d_{n}(0,\vec{a}_{i})}{2}$ is the same regardless of the $\pm$ by the symmetry of $|\mathcal{A}_{n}|$, so it will not change the parity. The only other difference in the terms is the first two parts, which are of the form $x_{1}\pm x_{2}$. This is the same number mod two, so either are both odd or even. Furthermore, let $proj_{x}(\vec{y}-\vec{x})>0$, then $I_{+0}\ge I_{-0}$, so this all depends on $2I_{-0}$ being positive and even. We will let that remain our only implicit condition. Now, let's look at what applying $T_{cont}^{p}$ does. Our $\mathcal{T}_{cont}^{p}$ commutes past our sums over $I_{a_{i}}$ and applies directly to the multinomial coefficient. This is because said sum just determines how many of each element in $\mathcal{A}_{n}$ is in our path, and then the multinomial coefficient contains the contributions from each different segment numbered path for the said combination of elements. Let $\vec{x},\vec{y}\in \mathbb{R}^{d}_{+}\times \mathbb{R}_{+}$. We obtain Equation~\ref{eq:final2}.

\small
\begin{multline}
	\label{eq:final2}
	\mathcal{T}_{cont}^{p}K_{n}([p\vec{y}],[p\vec{x}])=\sum_{I}(\sum_{I_{a_{1}}=0}^{C_{1}}...\textrm{no k}=\frac{|\mathcal{A}_{n}|-1}{2}...\sum_{I_{a_{|\mathcal{A}_{n}|}}=0}^{C_{|\mathcal{A}_{n}|}}\mathcal{T}_{cont}^{p}\begin{pmatrix}\sum_{\vec{a}\in\mathcal{A}_{n}}I_{a}\\\Pi_{a\in\mathcal{A}_{n}}(I_{\vec{a}})\end{pmatrix})e^{\frac{mI}{p}}
	\\
	C_{j}=min(\lfloor\frac{proj_{t}([p(\vec{y}-\vec{x})])-\sum_{i=1,i\neq\frac{|\mathcal{A}_{n}|-1}{2}}^{j-1}I_{a_{i}}proj_{t}(\vec{a}_{i})}{proj_{t}(\vec{a}_{j})}\rfloor,\lfloor\frac{I-\sum_{i=1,i\neq\frac{|\mathcal{A}_{n}|-1}{2}}^{j-1}I_{a_{i}}d_{n}(0,\vec{a})}{d_{n}(0,\vec{a}_{j})}\rfloor)
	\\
	\textrm{ where Eq~\ref{eq:propStep3} holds for $I_{\pm 0}$ and $I_{\frac{|\mathcal{A}_{n}|-1}{2}}$}
\end{multline}
\normalsize

From Equation~\ref{eq:final2} we can try to write down an expression for $K_{n}^{cont}(\vec{y},\vec{x})$.

\small
\begin{multline}
	\label{eq:dnprop}
	K_{n}^{cont}(\vec{y},\vec{x})=lim_{p\rightarrow\infty}\frac{\frac{1}{p^{|\mathcal{A}_{n}|}}\sum_{I}(\sum_{I_{a_{1}}=0}^{C_{1}}...\textrm{no k}=\frac{|\mathcal{A}_{n}|-1}{2}...\sum_{I_{a_{|\mathcal{A}_{n}|}}=0}^{C_{|\mathcal{A}_{n}|}}\mathcal{T}_{cont}^{p}\begin{pmatrix}\sum_{\vec{a}\in\mathcal{A}_{n}}I_{a}\\\Pi_{a\in\mathcal{A}_{n}}(I_{\vec{a}})\end{pmatrix})e^{\frac{mI}{p}}}{\frac{1}{p^{|\mathcal{A}_{n}|}}\mathcal{T}_{cont}^{p}(max_{\vec{x'},\vec{x'}\in X|_{m}}(\left|\Gamma_{n}^{\vec{x'},\vec{y'}}|_{m}\right|))}
	\\
	C_{j}=min(\lfloor\frac{proj_{t}([p(\vec{y}-\vec{x})])-\sum_{i=1,i\neq\frac{|\mathcal{A}_{n}|-1}{2}}^{j-1}I_{a_{i}}proj_{t}(\vec{a}_{i})}{proj_{t}(\vec{a}_{j})}\rfloor,\lfloor\frac{I-\sum_{i=1,i\neq\frac{|\mathcal{A}_{n}|-1}{2}}^{j-1}I_{a_{i}}d_{n}(0,\vec{a})}{d_{n}(0,\vec{a}_{j})}\rfloor)
	\\
	\textrm{ where Eq~\ref{eq:propStep3} holds for $I_{\pm 0}$ and $I_{\frac{|\mathcal{A}_{n}|-1}{2}}$}
\end{multline}
\normalsize

As is done in Theorem~\ref{thm:freeprop} for $d=2$, we absorb $p^{|\mathcal{A}_{n}|}$ into the numerator and denominator. And just as was done there, we note that the denominator converges to some integral over a continuous multinomial coefficient, which is finite by Theorem~\ref{thm:conttails}. That leaves us to consider the numerator.  Let $\mathcal{A}_{timeless}=\{k|\vec{a}_{k}\in\mathcal{A}_{n}\setminus\{(\pm 1,1),(0,1)\}\}$. In the theme of Theorem~\ref{thm:freeprop}, if the maximum of the multinomial in Equation~\ref{eq:dnprop} lies in the constraints of Equation~\ref{eq:dnprop} and if we have infinitely large magnitude I we sum over, then this converges to Equation~\ref{eq:FourN}, the well-defined function from $X^{cont}\rightarrow \mathbb{C}$

\tiny
\begin{multline}
    \label{eq:FourN}
    \mathcal{F}|_{I}^{m}(\int_{0}^{C_{i}}...\int_{0}^{C_{|\mathcal{A}_{timeless}|}}\begin{Bmatrix}\sum_{k\in\mathcal{A}_{timeless}}I_{k}+(I-\sum_{k\in\mathcal{A}_{timeless}}I_{k}d_{n}(0,\vec{a}_{k}))+f_{+}+f_{-}\\\{I_{k}\}_{k\in\mathcal{A}_{timeless}},I-\sum_{k\in\mathcal{A}_{timeless}}I_{k}d_{n}(0,\vec{a}_{k}),f_{+},f_{-}\end{Bmatrix}\Pi_{i\in\mathcal{A}_{timeless}}dI_{k})
    \\
    \textrm{ where }C_{j}=min(\frac{proj_{t}(\vec{y}-\vec{x})-\sum_{i=1,i\neq\frac{|\mathcal{A}_{n}|-1}{2}}^{j-1}I_{a_{i}}proj_{t}(\vec{a}_{i})}{proj_{t}(\vec{a}_{j})},\frac{proj_{t}(\vec{y}-\vec{x})-\sum_{i=1,i\neq\frac{|\mathcal{A}_{n}|-1}{2}}^{j-1}I_{a_{i}}d_{n}(0,\vec{a})}{proj_{t}(\vec{a}_{j})})
    \\
    \textrm{ where }f_{\pm}(I,\{I_{k}\},\vec{y},\vec{x})=\frac{proj_{t}(\vec{y}-\vec{x})\pm proj_{x}(\vec{y}-\vec{x})}{2}
    \\
    -\sum_{k\in\mathcal{A}_{timeless}}I_{k}(\frac{proj_{t}(\vec{a}_{k})\pm proj_{x}(\vec{a}_{k})-d_{n}(0,\vec{a}_{k})}{2})-\frac{I}{2}
\end{multline}
\normalsize

We note from our discrete sum over $I_{a_{1}}$ reaches arbitrarily high bounds. Since the continuous multinomial is rapidly decaying for small I and growing for large, this tells us we get the entire integration range for our integral. Similarly, we note the upper argument of the continuous multinomial can be expressed as $t-\sum_{k}(proj_{t}(\vec{a}_{k})-1)I_{k}$, which is bounded by $t$ and so our function is Schwartz class by Theorem~\ref{thm:conttails} and has a defined Fourier transform.


\end{proof}

Now that we have obtained $K_{n}^{cont}$, let us obtain a proof showing that $K_{n}^{cont}\rightarrow K_{l_{2}^{*}}^{cont}$, i.e. Theorem~\ref{thm:k1}

\begin{proof}

To begin, we want a function that will let us see paths in $\Gamma_{p}$ as approximate paths in $\Gamma_{q}$ for $q\ge p$. This approximation will let us constrain the difference between continuum propagators. Let's define the function $\pi_{N}:\mathcal{A}_{p+N}\rightarrow \mathcal{A}_{p}$ as follows. Let $\vec{a}\in \mathcal{A}_{p+N}$, and consider the set $argmin(\{|tan(\frac{proj_{t}(\vec{v})}{proj_{x}(\vec{v})}+\frac{\pi}{2})-tan(\frac{proj_{t}(\vec{a})}{proj_{x}(\vec{a})}+\frac{\pi}{2})|,\vec{v}\in\mathcal{A}_{p}\})$. This set has at most two vectors; if it contains one, we let $\pi_{N}(\vec{a})$ be said vector. If it has two, we let $\pi_{N}(\vec{a})$ be the vector of the two with least $proj_{x}(\vec{v})$. This gives us a well-defined function between our individual steps. Clearly, this extends to $\pi^{*}_{N}:\Gamma_{p+N}\rightarrow\Gamma_{p}$ by applying $\pi$ to the unique difference sequence of $\gamma$ in $\Gamma_{p+N}$ (given by Theorem~\ref{thm:generate}) to obtain a different sequence of a new path, denoted $\pi^{*}_{N}(\gamma)\in \Gamma_{p}$.

Let $\vec{u}_{p}^{1},\vec{u}_{p}^{2}$ be vectors aligned with $\vec{a}\in \mathcal{A}_{p+N}$ and $\pi_{p}(\vec{a})$, respectively, but such that $d_{l_{2}^{*}}(0,\vec{u}_{1})=1$. Then, as $p\rightarrow\infty$ it is necessarily the case that $d_{l_{2}}(\vec{u}_{2},\vec{u}_{1})=O(\frac{1}{p})$. This is because we know that there are asymptotically $\frac{p+N}{2\pi}$ circularly equidistributed points in $\mathcal{A}_{p+N}$ and $\frac{p}{2\pi}$ in $\mathcal{A}_{p}$ (by the circle equidistance Theorem~\ref{thm:tripledensity}), meaning that by the infinum definition of $\pi_{p}$, each vector in $\mathcal{A}_{q}$ is $O(\frac{1}{p})$ in $l_{2}$ and $l_{2}^{*}$ distance from its image under $\pi_{p}$ in $\mathcal{A}_{p}$ and that $|\pi_{p}^{-1}(\vec{a})|=\Theta(\frac{p+N}{p})$. Let $|\gamma|$ denote the number of difference sequence elements of $\gamma$. This second fact implies that $|\pi^{*-1}_{p}(\gamma)|=\frac{p+N}{p}^{|\gamma|}$ as the preimage of $\gamma\in \mathcal{A}_{p}$ includes for every difference sequence element a point in $\pi_{p}^{-1}$ of said difference sequence element. 


Let us define $G_{n}$ as in Equation~\ref{eq:denom}. We will define $t_{avg}$ in the next paragraph

\begin{equation}
    \label{eq:denom}
    G_{p}=(\frac{t}{t_{avg}})^{\frac{p}{4\pi}}max_{\vec{x}',\vec{y}'\in X,proj_{t}(\vec{y}'-\vec{x}')=t}(|\Gamma_{p}^{\vec{x}',\vec{y}'}|)
\end{equation}

Consider Theorem~\ref{thm:disctocont}. The continuous multinomial coefficient has a peak when all of its arguments are equal to their sum divided by $l$ and exponentially decays outside that range. Furthermore, when all the coefficients equal their sum divided by l, the Gaussian term becomes 1, and we are left with $\frac{l^{\sum I_{i}+\frac{l}{2}}}{\sqrt{2\pi \sum_{i=1}^{l}I_{i}}^{l-1}}$. By the work in Theorem~\ref{thm:k3} we know $\sum_{i=1}^{l}I_{i}$ equals $t-(\sum_{k}(proj_{t}(\vec{a}_{k})-1)I_{i})=t-(t_{avg}-1)(\sum_{i=1}^{l}I_{i})$ where $t_{avg}$ is the weighted average of the time increment in each walk. Because each $I_{i}$ will concentrate on being equally expressed, $t_{avg}$ is the average time increment of unit vectors along directions in $\mathcal{A}_{p}$; it is a constant function of $p$ indepednent of t and the linear segments in the directed paths the continuum multinomial expresses. By rearrangement this equals $\frac{t}{1+t_{avg}-1}=\frac{t}{t_{avg}}$. By substitution, this means the maximum of $|\Gamma_{n}|$ for time $t$ is Equation~\ref{eq:maximum}.

\begin{equation}
    \label{eq:maximum}
    \frac{l^{\frac{t}{t_{avg}}+\frac{l}{2}}}{\sqrt{2\pi \frac{t}{t_{avg}}}^{l-1}}
\end{equation}

Equation~\ref{eq:denom} we use in the denominator of $K^{cont}_{n}$ to normalize it, as $K_{n}(\vec{x}_{2},\vec{x}_{1})$ paths would grow on the order $\frac{proj_{t}(\vec{x}_{2},\vec{x}_{1})}{t_{avg}}^{\frac{n}{2\pi}}$ as there would be an average $t_{avg}$ time spent in line in the directed path and $\frac{n}{2\pi}$ directions for each line to take by the asymptotic property of pythagorean triples (Theorem~\ref{thm:tripledensity}). We've shown that the continuum limit under $\mathcal{T}^{cont}$ of $G_{l}$ grows as $(\frac{t}{t_{avg}})^{\frac{l}{2}}(\frac{t}{t_{avg}})(\frac{t}{1+\frac{\sum_{k=1}^{l}(proj_{t}(\vec{a}_{k})-1)I_{i}}{\sum I_{i}}})^{-\frac{l}{2}}(l^{\frac{t}{1+\frac{\sum_{k=1}^{l}(proj_{t}(\vec{a}_{k})-1)I_{i}}{\sum I_{i}}}})$ or $\Theta(\frac{t}{t_{avg}}^{\frac{l}{2}-\frac{l}{2}}l^{\frac{t}{t_{avg}}})$ for $t_{avg}$ an average time of each step in a directed path. This would imply that $\frac{G_{p}}{G_{q}}=\frac{t}{t_{avg}}^{\frac{q-p}{4\pi}}(\frac{p}{q})^{\frac{t}{t_{avg}}}$. Our same proof above allows us to conclude that $|\gamma|=\frac{t}{t_{avg}}$ and we note $|\pi^{*-1}_{p}(\gamma)|=\frac{q}{p}^{\frac{t}{t_{avg}}}$ so $\frac{G_{p}}{G_{q}}|\pi^{*-1}_{p}(\gamma)|=(\frac{p}{q})^{\frac{t}{t_{avg}}-\frac{t}{t_{avg}}}=1$. These last steps hold approximately around the sharp maximum of the continuum multinomial coefficient, where Stirling's Approximation may be used to find the subleading terms in $t$ and $p$.




These two tools allow us to approach our Cauchy claims. Consider first the terms in Equation~\ref{eq:propdiff}:

\begin{multline}
    \label{eq:propdiff}
    |\frac{K_{p}(\vec{x}_{1},\vec{x}_{1})}{G_{p}}-\frac{K_{q}(\vec{x}_{1},\vec{x}_{1})}{G_{q}}|=|\frac{\sum_{\gamma\in\Gamma_{p}^{\vec{x}_{2},\vec{x}_{1}}}e^{i\rho_{p}(\gamma)}}{G_{p}}-\frac{\sum_{\gamma\in\Gamma_{q}^{\vec{x}_{2},\vec{x}_{1}}}e^{i\rho_{q}(\gamma)}}{G_{q}}| 
    \\
    \le |\frac{\sum_{\gamma\in\Gamma_{p}^{\vec{x}_{2},\vec{x}_{1}}}e^{i\rho_{p}(\gamma)}}{G_{p}}-\frac{\sum_{\gamma\in\Gamma_{p}^{\vec{x}_{2},\vec{x}_{1}}}e^{i\rho_{l_{2}^{*}}(\gamma)}}{G_{p}}|+|\frac{\sum_{\gamma\in\Gamma_{q}^{\vec{x}_{2},\vec{x}_{1}}}e^{i\rho_{l_{2}^{*}}(\gamma)}}{G_{q}}-\frac{\sum_{\gamma\in\Gamma_{q}^{\vec{x}_{2},\vec{x}_{1}}}e^{i\rho_{q}(\gamma)}}{G_{q}}|\\
    +|\frac{\sum_{\gamma\in\Gamma_{p}^{\vec{x}_{2},\vec{x}_{1}}}e^{i\rho_{l_{2}^{*}}(\gamma)}}{G_{p}}-\frac{\sum_{\gamma\in\Gamma_{q}^{\vec{x}_{2},\vec{x}_{1}}}e^{i\rho_{l_{2}^{*}}(\gamma)}}{G_{q}}|
\end{multline}

We obtain Equation~\ref{eq:propdiff} by the Triangle Inequality. We note that some perimeters are equal ($\rho_{p}(\gamma_{1})=\sum_{i=0}^{n-1}d_{p}(x_{i+1},x_{i})=\sum_{i=0}^{n-1}d_{l_{2}^{*}}(x_{i+1},x_{i})=\rho_{d_{l_{2}^{*}}}(\gamma_{1})$ for $\{x_{i+1}-x_{i}\}_{i=1}^{n-1}$ a difference sequence of $\gamma_{1}$), so the first and second terms immediately disappear. Now, say $q=p+N$ where $N\in\mathbb{N}$, and $p$ is allowed to vary. Then, we can add and subtract a term to obtain Equation~\ref{eq:propdiffers3}.

\begin{multline}
    \label{eq:propdiffers3}
    |\frac{\sum_{\gamma\in\Gamma_{p}^{\vec{x}_{2},\vec{x}_{1}}}e^{i\rho_{l_{2}^{*}}(\gamma)}}{G_{p}}-\frac{\sum_{\gamma\in\Gamma_{q}^{\vec{x}_{2},\vec{x}_{1}}}e^{i\rho_{l_{2}^{*}}(\gamma)}}{G_{q}}|
    \\
    \le |\frac{\sum_{\gamma\in\Gamma_{p}^{\vec{x}_{2},\vec{x}_{1}}}e^{i\rho_{l_{2}^{*}}(\gamma)}}{G_{p}}-\frac{\sum_{\gamma\in\Gamma_{q}^{\vec{x}_{2},\vec{x}_{1}}}e^{i\rho_{l_{2}^{*}}(\pi_{p}^{*}(\gamma))}}{G_{q}}|+|\frac{\sum_{\gamma\in\Gamma_{q}^{\vec{x}_{2},\vec{x}_{1}}}(e^{i\rho_{l_{2}^{*}}(\pi_{p}^{*}(\gamma))}-e^{i\rho_{l_{2}^{*}}(\gamma)})}{G_{q}}|
\end{multline}

We can modify the first term of Equation~\ref{eq:propdiffers3} to obtain Equation~\ref{eq:propdiffers4}. To obtain this expression, we note that $\pi_{p}^{*}(\gamma)\in\Gamma_{p}$; so, while we originally had a sum over elements of $\Gamma_{q}$, we can rearrange our sum grouping together all $\gamma$ that are mapped to the same element $\pi_{p}^{*}(\gamma)$. This results in the cardinality of the preimage of a path $\gamma\in \Gamma_{p}$ being present in Equation~\ref{eq:propdiffers4}.
\tiny
\begin{multline}
    \label{eq:propdiffers4}
    |\frac{\sum_{\gamma\in\Gamma_{p}^{\vec{x}_{2},\vec{x}_{1}}}e^{i\rho_{l_{2}^{*}}(\gamma)}}{G_{p}}-\frac{\sum_{\gamma\in\Gamma_{q}^{\vec{x}_{2},\vec{x}_{1}}}e^{i\rho_{l_{2}^{*}}(\pi_{p}^{*}(\gamma))}}{G_{q}}|=|\frac{\sum_{\gamma\in\Gamma_{p}^{\vec{x}_{2},\vec{x}_{1}}}e^{i\rho_{l_{2}^{*}}(\gamma)}}{G_{p}}-\frac{\frac{G_{p}}{G_{q}}\sum_{\gamma\in\Gamma_{p}^{\vec{x}_{2},\vec{x}_{1}}}|(\pi_{p}^{*})^{-1}(\gamma)|e^{i\rho_{l_{2}^{*}}(\gamma)}}{G_{p}}|
    \\=|\frac{\sum_{\gamma\in\Gamma_{p}^{\vec{x}_{2},\vec{x}_{1}}}(1-\frac{G_{p}}{G_{q}}|(\pi_{p}^{*})^{-1}(\gamma)|)e^{i\rho_{l_{2}^{*}}(\gamma)}}{G_{p}}|
\end{multline}
\normalsize
If we demonstrate that if $\frac{G_{p}}{G_{q}}|(\pi_{p}^{*})^{-1}(\gamma)|\sim 1$ when we take $\mathcal{T}^{cont}$, this expression will disappear. This follows from our calculation of leading order approximations of $G_{n}$ and $|(\pi_{p}^{*})^{-1}(\gamma)|$ that we found above. Let us constrain the second expression of Equation~\ref{eq:propdiffers3}; we do so in Equation~\ref{eq:rsteppe}.

\begin{multline}
        \label{eq:rsteppe}
    G_{q}^{-1}|\sum_{\gamma\in\Gamma_{q}}e^{i\rho_{l_{2}^{*}}(\pi_{p}^{*}(\gamma))}-\sum_{\gamma\in\Gamma_{q}}e^{i\rho_{l_{2}^{*}}(\gamma)}|=G_{q}^{-1}\sum_{\gamma\in\Gamma_{q}}|e^{i\rho_{l_{2}^{*}}(\pi_{p}^{*}(\gamma))}-e^{i\rho_{l_{2}^{*}}(\gamma)}|
    \\=G_{q}^{-1}\sum_{\gamma\in\Gamma_{q}}\sqrt{2-2cos(\rho_{l_{2}^{*}}(\pi_{p}^{*}(\gamma))-\rho_{l_{2}^{*}}(\gamma))}
\end{multline}

Here we make the observation that $\sqrt{2-2cos(x)}\le x$ to obtain Equation~\ref{eq:rsteppe2}.

\begin{equation}
    \label{eq:rsteppe2}
    \le G_{q}^{-1}\sum_{\gamma\in\Gamma_{q}}|\rho_{l_{2}^{*}}(\pi_{p}^{*}(\gamma))-\rho_{l_{2}^{*}}(\gamma)|=G_{q}^{-1}(|\Gamma_{q}|)\frac{\alpha}{p}
\end{equation}

where $\frac{\alpha}{p}$ is a uniform bound on the difference of $\rho_{l_{2}^{*}}(\gamma)$ and $\rho_{l_{2}}(\pi_{p}^{*}(\gamma))$. We know each vector is $O(\frac{1}{p})$ can be chosen to be uniformly separated from each other in $l_{2}$ norm by $\sim\frac{1}{p}$, and $l_{2}^{*}$ is absolutely continuous with $l_{2}$ (not visa versa as null-like vectors can have quite different $l_{2}$ and equal $l_{2}^{*}$). We bounded $\Gamma_{q}$ and $G_{q}$ in the work above; this demonstrates the result.

\end{proof}

We want to provide Lemma~\ref{lemma:sneaky}, whose importance to this proof allows it to be excluded from the definitional proofs of Section~\ref{section:def}. To use and even prove lemma~\ref{lemma:sneaky} effectively, we must relate it to the continuous multinomial of Equation~\ref{eq:FourN}. For this, we require a more tautological lemma in the form of Lemma~\ref{lemma:idea}. Let $\vec{x}\in \mathbb{R}^{3}\textrm{ and }p\in 2\mathbb{N}+2$ such that $\vec{x}=(x,I,t)$ and consider the paths to $\vec{x}$ composed of a difference sequence among $\mathcal{A}_{p}$ in Equation~\ref{eq:directions}.

\tiny
\begin{equation}
    \label{eq:directions}
    \mathcal{A}_{p}=\{(t_{q}e^{i\phi},t_{q})|\phi\in 2\pi(\frac{k}{p}),k\in\{0,...,p\},\begin{pmatrix}-t_{q}\\t_{q}\end{pmatrix}=\begin{pmatrix}-\frac{1}{p}&&1\\1&&-\frac{1}{p}\end{pmatrix}^{q}\begin{pmatrix}1\\1\end{pmatrix},\textrm{ and }k\in\{0,...,p\},q\in\{-p,...,p\}\}
\end{equation}
\normalsize

We note that $\mathcal{A}_{p}$ is defined so that its points are uniformly distributed  across the surface $t^{2}=x^{2}+I^{2}$ (in direction atleast; in magnitude we are incapable of doing so due to it being non-compact). Consider $D(n,p)$, the set of Smirnov words of length n and p many letters. We can associate to each letter $\{c_{k,q}\}_{k=0,q=0}^{p}$ in these Smirnov words a direction in $\mathcal{A}_{p}$. Let $\rho_{p}$ for Lemma~\ref{lemma:idea} denote the length obtained from $d_{p}$ on a path in $\Gamma_{p}$ (paths whose sets lie in $\mathcal{A}_{p}$) where $d_{p}$ is the metric whose axes of symmetry are $\mathcal{A}_{p}$

\begin{lemma}
    \label{lemma:idea}

    Consider the function $F_{p}$ in Equation~\ref{eq:rotafunc}
\begin{equation}
    \label{eq:rotafunc}
    F_{p}(\vec{x})=\frac{\sum_{n=1}^{\infty}\sum_{c\in D(n,p)}\mu(P(\vec{x},c))}{(\frac{t}{t_{avg}})^{\frac{p}{4\pi}}max_{\vec{x}',\vec{y}'\in X,proj_{t}(\vec{y}'-\vec{x}')=t}(|\Gamma_{p}^{\vec{x}',\vec{y}'}|)}
\end{equation}

All portions of Equation~\ref{eq:rotafunc} are as they defined in \cite{continuouslatticepath}, referred to in Section~\ref{section:intro}, and rigorously shown to exist in Section~\ref{section:def}. Here, $D(n,p)$ represents a Smirnov word of length $n$ with p letters, and $P(\vec{x},c)$ is the polytope of directed paths from $0$ to $\vec{x}$ corresponding to that word with steps from $\mathcal{A}_{p}$. Then, if we let $\mathcal{A}_{timeless}$ denote the set of vectors in $\mathcal{A}_{p}$ except $(0,1)$, and if $d_{p}$ is the polygonal metric whose axes of symmetry is $\mathcal{A}_{p}$, then $F^{cont}_{p}$ is equal to the argument of the Fourier transform in Equation~\ref{eq:FourN} with the appropriate substitutions $\mathcal{A}_{timeless}$ and $d_{n}$ (with $d_{p}$) divided by the denominator of Equation~\ref{eq:rotafunc}.
    
\end{lemma}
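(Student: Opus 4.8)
The plan is to prove Lemma~\ref{lemma:idea} by unwinding the definition of the continuous multinomial coefficient in Equation~\ref{eq:contmult} and matching it term-by-term against the computation already carried out in the proof of Theorem~\ref{thm:k3}, with $\mathcal{A}_{n}$ and $d_{n}$ replaced throughout by $\mathcal{A}_{p}$ (as in Equation~\ref{eq:directions}) and $d_{p}$. First I would expand the numerator $\sum_{n=1}^{\infty}\sum_{c\in D(n,p)}\mu(P(\vec{x},c))$ of $F_{p}$ by Fubini: group the Smirnov words $c$ according to the vector $(I_{\vec{a}})_{\vec{a}\in\mathcal{A}_{p}}$ recording how many letters of $c$ equal each direction $\vec{a}$, and for each such count vector integrate the polytope volumes over all admissible orderings. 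By the very definition in Equation~\ref{eq:contmult}, summing $\mu(P(\vec{x},c))$ over the Smirnov words with a prescribed letter-multiplicity vector — subject to the endpoint constraint built into $P(\vec{x},c)$ — reproduces the continuous multinomial coefficient of the counts $\{I_{\vec{a}}\}$ restricted to the affine slice cut out by $\vec{x}$. This is the continuum analogue of the discrete regrouping $\sum_{\gamma}=\sum_{I}\sum_{\{I_{\vec{a}}\}\in\mathcal{I},\mathcal{II},\mathcal{III}}\binom{\cdots}{\cdots}$ of Equation~\ref{eq:propStep1}; it is here that I would invoke Theorem~\ref{thm:disctocont} and the decay bounds of Theorem~\ref{thm:conttails} to justify that applying $\mathcal{T}^{m}_{cont}$ and letting $m\to\infty$ converts the discrete count, with its $p^{|\mathcal{A}_{p}|}$ rescaling, into this volume integral.

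Next I would impose the three linear relations defining the slice: matching $proj_{x}$ (relation $\mathcal{I}$), matching $proj_{t}$ (relation $\mathcal{II}$), and matching the proper time $\rho_{p}(\gamma)=I$ (relation $\mathcal{III}$), exactly as in Theorem~\ref{thm:k3}. Solving these for the three distinguished coordinates $I_{+0}$, $I_{-0}$, $I_{(0,1)}$ attached to the directions $(\pm 1,1)$ and $(0,1)$ yields the closed forms of Equation~\ref{eq:propStep3} with $d_{p}$ in place of $d_{n}$, and one checks directly that these are precisely the functions $f_{+}$, $f_{-}$, and $I-\sum_{k\in\mathcal{A}_{timeless}}I_{k}d_{p}(0,\vec{a}_{k})$ appearing in Equation~\ref{eq:FourN}. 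Eliminating these three variables reduces the constrained volume integral to a free integral over $\{I_{k}\}_{k\in\mathcal{A}_{timeless}}$ whose limits are exactly the $C_{j}$ of Equation~\ref{eq:FourN} — each $C_{j}$ being where one of the solved-for coordinates would turn negative — and whose integrand is the continuous multinomial in the enlarged list $\{I_{k}\}_{k\in\mathcal{A}_{timeless}}$, $I-\sum_{k}I_{k}d_{p}(0,\vec{a}_{k})$, $f_{+}$, $f_{-}$. The parity and integrality side conditions noted after Equation~\ref{eq:final} concern only a measure-zero set and disappear in the continuum.

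Finally I would account for the proper-time variable $I$: the discrete numerator carried a sum $\sum_{I}$ weighted by $e^{imI}$ coming from $\rho_{p}(\gamma)=I$, and under the rescaling this sum becomes the Fourier transform $\mathcal{F}|_{I}^{m}$ of the $I$-indexed integral just produced — that is, precisely the argument of $\mathcal{F}|_{I}^{m}$ in Equation~\ref{eq:FourN}. Since the denominator of $F_{p}$ in Equation~\ref{eq:rotafunc} and the normalization $G_{p}=(\tfrac{t}{t_{avg}})^{\frac{p}{4\pi}}\max_{\vec{x}',\vec{y}'}|\Gamma_{p}^{\vec{x}',\vec{y}'}|$ used for $K_{n}^{cont}$ in Equation~\ref{eq:dnprop} coincide, dividing through gives the asserted identity: $F^{cont}_{p}$ equals the argument of $\mathcal{F}|_{I}^{m}$ in Equation~\ref{eq:FourN} (with $\mathcal{A}_{timeless}$ and $d_{p}$ substituted) divided by the denominator of Equation~\ref{eq:rotafunc}. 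I expect the genuinely delicate point to be the first step — showing that $\mathcal{T}^{m}_{cont}$ commutes with the Fubini regrouping and really converts $\sum_{n}\sum_{c\in D(n,p)}\mu(P(\vec{x},c))$ into the advertised volume integral, uniformly enough to pass to the limit; this rests entirely on the continuous-multinomial decay estimates from Section~\ref{section:def}, after which the remaining steps are bookkeeping identical in form to the proof of Theorem~\ref{thm:k3}.
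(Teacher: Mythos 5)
Your proposal takes a genuinely different route from the paper's. The paper constructs an auxiliary discrete space $X^{temp}$ (Equation~\ref{eq:tempdiscretespace}) consisting of all finite sums of $\mathcal{A}_{p}$-vectors, equips it with a closest-point map, and reruns the full Theorem~\ref{thm:k3} machinery on this surrogate lattice to identify $K_{p}^{cont}$ with the Equation~\ref{eq:FourN} form; it then invokes Theorem~\ref{thm:disctocont} together with Figure~\ref{fig:latticedirec} to argue that the same $K_{p}^{cont}$, as a common discrete limit, also has the polytopic sum $\sum_{n}\sum_{c}\mu(P(\vec{x},c))$ as its Fourier argument, and the two continuum expressions are tied together through that shared discrete object. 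You bypass $X^{temp}$ entirely and try to manipulate the continuum polytope sum directly. If the fibering step can be justified, your route is cleaner and more self-contained; both ultimately rest on the same ingredients (the Theorem~\ref{thm:k3} constraint solving, Theorem~\ref{thm:disctocont}, and the Section~\ref{section:def} convergence estimates), so the difference is one of framing rather than substance.

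However, your crucial first step is stated imprecisely and as written does not do what you want. You propose to group Smirnov words $c$ by their \emph{letter-multiplicity} vector and claim that each group reproduces the continuous multinomial of those counts. But the continuous multinomial in Equation~\ref{eq:contmult} sums over \emph{all} Smirnov words of all lengths; fixing a letter-multiplicity isolates a single frequency-vector term of the expansion in Equation~\ref{eq:contmults1}, not the whole coefficient. What the argument actually needs is to fibre each polytope $P(\vec{x},c)$ over the \emph{continuous} variables $I_{\vec{a}}=\sum_{k:c_{k}=\vec{a}}\lambda_{k}$ (total arc-length per direction), observe that the fibre is the polytope $P(\sum I_{\vec{a}}e_{\vec{a}},c)$ of Equation~\ref{eq:pathpoly}, then sum over all $c$ to recover the continuous multinomial of $(I_{\vec{a}})$, and finally integrate over the affine slice $\{(I_{\vec{a}}):\sum I_{\vec{a}}\vec{a}=\vec{x}\}$. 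This fibering also introduces a Jacobian from the linear change of variables between the $\mathbb{R}^{|\mathcal{A}_{p}|}$-valued multiplicity space and ambient spacetime; it is constant, hence absorbable into the normalization, but it must be acknowledged. One further small point: the parity/integrality conditions discussed after Equation~\ref{eq:final} are not measure-zero in the discrete setting — they eliminate a positive density of lattice terms — and the correct justification for dropping them is that they affect numerator and denominator of $F_{p}$ identically and therefore cancel, not that they are negligible.
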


\begin{proof}

We can show that the two desired expressions are equal by showing they are the same limit of $\mathcal{T}^{cont}_{m}$ of some discrete propagator-like object. Consider Equation~\ref{eq:tempdiscretespace}.

\begin{equation}
    \label{eq:tempdiscretespace}
    X^{temp}=\{\vec{a}\in\mathbb{R}^{3}|\vec{a}=\sum_{i=1}^{N}\vec{v}_{i},\{\vec{v}_{i}\}_{i=1}^{N}\subset \mathcal{A}_{p}\}\cup\{0\in\mathbb{R}^{3}\}
\end{equation}

These are all the points in $\mathbb{R}^{3}$ that can be approximated by some finite sum of elements in $\mathcal{A}_{p}$; we label it $X^{temp}$ to indicate it will only be used temporarily in the context of this proof. This space is isotropic like our original domain of $\mathbb{Z}^{3}$ in that any point may be treated as $0\in\mathbb{R}^{3}$ and you will still get the same points obtained from step-wise paths in $\mathcal{A}_{p}$ (just translated). Because for all $\vec{a}\in \mathcal{A}_{p}$ has $proj_{t}(\vec{a})=1$, the natural extension of the definition of $\Gamma^{\vec{x},\vec{y}}$ to this setting still obtains some finite collection of discrete paths. Similarly, we also would obtain a polygonal minkowski metric (call it $d_{p}$) from the same Equation~\ref{eq:polygon}, and from it, we could define $K_{p}(\vec{x},\vec{y})$. To define $K_{p}^{cont}$ in this setting, we need to define for $\vec{x}\in\mathbb{R}^{3}$ what the closest point function $[*]$ is. We may say:

$$[\vec{x}]=argmin(\{|\vec{x}-\vec{y}|_{l_{2}}|\vec{y}\in X^{temp}\})$$

This arg-min exists because the topology induced on $X^{temp}$ by $l_{2}$ is discrete; this fact is implied because each point in $X^{temp}$ must be achieved in finite time from some base point using steps st $proj_{t}(\vec{a})=1$. These are all the notions we need to define $K_{p}^{cont}$, and finally, if we replace $\mathcal{A}_{n}$ with $\mathcal{A}_{p}$ throughout all the steps of Theorem~\ref{thm:k3}, then we also would obtain as $K_{p}^{cont}$ the expression we find in the hypothesis of this lemma.
Now, we show $K_{p}^{cont}(\vec{x},\vec{y})$ has, as the argument to its Fourier transform, something that is equivalent to the polytopic sum of Lemma~\ref{lemma:sneaky}. In Theorem~\ref{thm:disctocont}, we see that all we need to show is that the discrete paths of $\Gamma^{\vec{x},\vec{y}}$ with some $l$ number of linear segments can be understood as discrete points lying in some continuum polytope (then because polytopes are Riemann integrable, the same arguments of Theorem~\ref{thm:disctocont} work in this setting). If we consider some Smirnov word composed of letters from $\mathcal{A}_{p}$ and length l, all of which describe one of the polytopes in the polytopic sum of Lemma~\ref{lemma:sneaky}, then we get that said polytope is a $l-3$ dimensional polytope embedded in $\mathbb{R}^{l}$. We see this in Figure~\ref{fig:latticedirec}.

\begin{figure}[H]
    \centering
    \includegraphics[width = 8 cm]{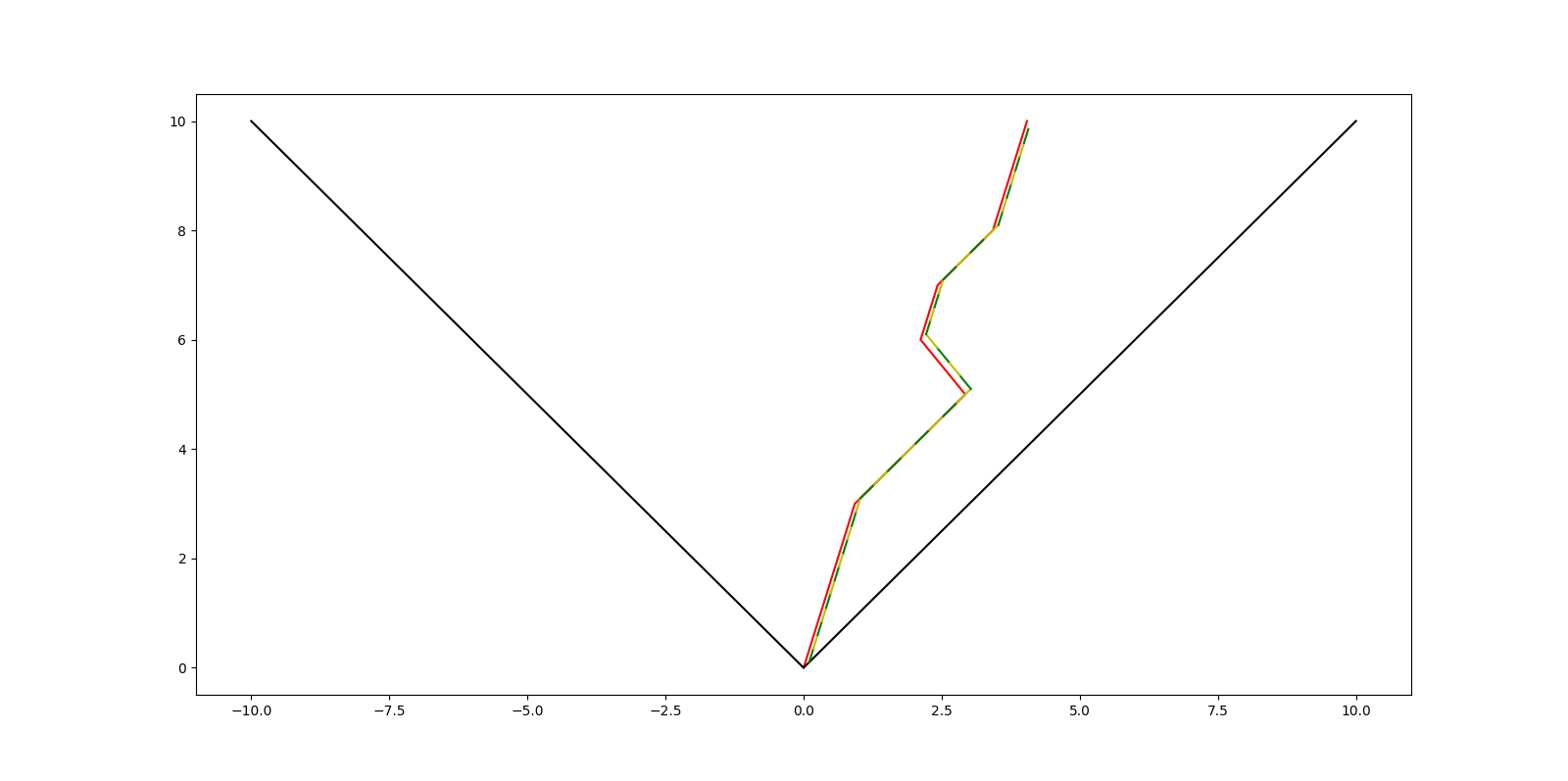}
    \caption{A lattice path (yellow and green) with integer coordinates approximating a red path from the path polytope}
    \label{fig:latticedirec}
\end{figure}

In Figure~\ref{fig:latticedirec}, we see an approximate lattice path to the directed one. It is clear from the figure that the integer coordinates of this lattice path would lie inside the polytope with l linear segments adjoining $\vec{x}_{1}$ to $\vec{x}_{2}$ (the polytope would include all integer paths adjoining $\vec{x}_{1}$ and $\vec{x}_{2}$ among its collection of direct paths). As we scale said polytope to larger sizes, it is still the case that the lattice paths between two points would be among the paths constituting the path polytope. The arguments of Theorem~\ref{thm:disctocont} proceed thereafter to obtain the desired result.


\end{proof}
For words  with letters among  $\{c_{k,q}\}_{k=0,q=-p}^{p}$, we have a mapping $H_{k}:D(n,2p^{2})\rightarrow D(n,2p^{2})$; this mapping takes $c_{j,q}$ to $c_{j+k,q}$ and is equivalent to rotating the underlying vectors by $2\pi\frac{k}{p}$ radians. If we let $R_{\theta}$ indicate the operator which rotates some $\vec{v}\in\mathbb{R}^{2}$ by $\theta$, then $H_{k}$ induces a volume preserving mapping from $P(q,c)$ and $P(R^{\frac{2\pi k}{p}}(proj_{x}(\vec{q})\hat{x}+proj_{I}(\vec{q})\hat{I})+proj_{t}(\vec{q})\hat{t},H_{k}(c))$.


From this we have Lemma~\ref{lemma:sneaky}.

\begin{lemma}{Sneaky Trick}
\label{lemma:sneaky}

Consider the function $F_{p}$ in Equation~\ref{eq:rotafunc}. Again, all portions of Equation~\ref{eq:rotafunc} are as they are defined in \cite{continuouslatticepath}, referred to in Section~\ref{section:intro}, and rigorously shown to exist in Section~\ref{section:def}. Here, $D(n,p)$ represents a Smirnov word of length $n$ with p letters, and $P(\vec{x},c)$ is the polytope of directed paths from $0$ to $\vec{x}$ corresponding to that word with steps from $\mathcal{A}_{p}$. Then, a sup norm cluster point of the sequence $\{F_{p}(\vec{x})\}_{p=1}^{\infty}$ exists and equals $(1-\frac{x^{2}+I^{2}}{t^{2}})^{-.5}$. 

\end{lemma}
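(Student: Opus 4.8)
The plan is to reduce the claim, via Lemma~\ref{lemma:idea}, to an explicit continuum-volume computation, to use the rotational symmetry recorded just before the statement to collapse the spatial dependence onto the single radial variable $r=\sqrt{x^{2}+I^{2}}$, and then to extract $(1-r^{2}/t^{2})^{-1/2}$ from a Fourier/saddle-point analysis of the continuous multinomial on the slice fixing the endpoint.

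First I would invoke Lemma~\ref{lemma:idea}: it identifies $F^{cont}_{p}$ with the normalized argument of the Fourier transform in Equation~\ref{eq:FourN}, i.e.\ with the continuum volume $\mathrm{Num}_{p}(\vec x)=\sum_{n}\sum_{c\in D(n,|\mathcal{A}_{p}|)}\mu(P(\vec x,c))$ of directed lightlike paths from $0$ to $\vec x$ with steps in the set $\mathcal{A}_{p}$ of Equation~\ref{eq:directions}, divided by the normalizing factor $G_{p}$ of Equation~\ref{eq:rotafunc}; so it is enough to study $\lim_{p\to\infty}\mathrm{Num}_{p}(\vec x)/G_{p}$. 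Then I would use the volume-preserving maps $H_{k}$ recorded just above the statement: since each $H_{k}$ is a bijection of $D(n,|\mathcal{A}_{p}|)$ carrying $P(\vec q,c)$ onto $P(R_{2\pi k/p}\vec q,H_{k}(c))$, the numerator $\mathrm{Num}_{p}$ is invariant under every spatial rotation by a multiple of $2\pi/p$; letting $p\to\infty$, any sup-norm cluster point of $\{F_{p}\}$ is fully rotationally invariant, hence a function of $(r,t)$ alone, and I may evaluate it at $\vec x=(r,0,t)$.

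For the radial value I would pass to the Fourier transform in $\vec x$. Expressing each polytope volume as an iterated one-dimensional integral gives $\widehat{\mathrm{Num}}_{p}(\vec k)=\sum_{n\ge 1}\bigl(S_{p}(\vec k)\bigr)^{n}$ up to a Smirnov correction of relative size $O(1/|\mathcal{A}_{p}|)$, where $S_{p}(\vec k)=\sum_{\vec a\in\mathcal{A}_{p}}(-i\,\vec a\cdot\vec k)^{-1}$. The key point is that, because the points of $\mathcal{A}_{p}$ equidistribute in angle and uniformly in $\log$-magnitude over the light cone, averaging $(\vec a\cdot\vec k)^{-1}$ over them produces in the limit a term proportional to $(\omega^{2}-K^{2})^{-1/2}$, with $\vec k=(k_{1},k_{2},\omega)$ and $K=\sqrt{k_{1}^{2}+k_{2}^{2}}$, via the classical identity $\tfrac1{2\pi}\int_{0}^{2\pi}\tfrac{d\phi}{\omega+K\cos\phi}=(\omega^{2}-K^{2})^{-1/2}$ valid for $\omega>K>0$; this is exactly the Lorentz-invariant combination dual to the light cone. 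Dividing by $G_{p}$ removes the divergence coming jointly from $|\mathcal{A}_{p}|\to\infty$ and from the concentration of the number of linear segments at $t/t_{avg}$ --- this is the cancellation $\tfrac{G_{p}}{G_{q}}|(\pi_{p}^{*})^{-1}(\gamma)|\to 1$ already established in the proof of Theorem~\ref{thm:k1} --- so that $\lim_{p\to\infty}\mathrm{Num}_{p}(r,0,t)/G_{p}$ is a finite inverse transform whose stationary-phase evaluation leaves $(1-r^{2}/t^{2})^{-1/2}$. The same fact has a more transparent (and genuinely sneaky) form: the light cone and the equidistributed measure on $\mathcal{A}_{p}$ are, in the limit, invariant under the boost $\Lambda_{v}$ with $v=r/t$ up to the conformal Jacobian $(\gamma(1+v\cos\phi))^{-1}$ whose average over $\phi$ is exactly $1$, so $\mathrm{Num}^{cont}(r,0,t)$ agrees with the rest-frame value $\mathrm{Num}^{cont}(0,0,\sqrt{t^{2}-r^{2}})$ up to subexponential Jacobian fluctuations; feeding in $F^{cont}(0,0,s)=1$ (the normalization pins the radial origin to the peak value) and tracking those residual factors against the $t$-scaling of $G^{cont}$ yields $(1-r^{2}/t^{2})^{-1/2}$ --- which is of course the relativistic $\gamma$ of the average velocity $\sqrt{x^{2}+I^{2}}/t$, the physically expected normalization.

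The obstacle I expect to dominate the work is analytic rather than conceptual. One must justify interchanging the limits in play --- $p\to\infty$, the $\mathcal{T}^{m}_{cont}$ limit $m\to\infty$, the sum over the number of segments $n$, and the $I$- and mass-transforms --- each of which is only controlled through the convergence estimates of Section~\ref{section:def}; one must check that the Smirnov correction is uniformly negligible as $|\mathcal{A}_{p}|\to\infty$; and, most delicately, one must show that the saddle-point expansion of the continuous multinomial on the codimension-three slice $\{\sum_{\vec a}I_{\vec a}\vec a=(r,0,t)\}$ has subleading Stirling corrections that vanish in sup norm, uniformly on compact subsets of $\{(r,t):0\le r<t\}$, so that a genuine sup-norm cluster point is obtained rather than merely a pointwise one. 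Quantifying those corrections uniformly in $(r,t,p)$ is the crux.
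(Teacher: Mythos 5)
Your opening move --- the rotational invariance of $F_{p}$ via the volume-preserving bijections $H_{k}$ --- is exactly the paper's, and you correctly flag the real analytic obstacles (limit interchanges, the Smirnov correction, sup-norm uniformity of the Stirling tail). The divergence, and the gap, is in how the specific power $-\tfrac12$ is extracted.

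The paper's route to Lorentz invariance is not a Fourier computation but a second application of the $H_{k}$ idea, this time acting on the rapidity index $q$ of the alphabet $\{c_{k,q}\}$: the difference $DF_{p}(\vec x)-DF_{p}(\vec x^{\oplus^{k}(1/p)})$ is bounded by the relative mass of Smirnov words forced to contain one of the $k$ boundary letters, a fraction $O\bigl((p-k)^{-1}\bigr)$ of the total. Your remark that the aberration Jacobian $\bigl(\gamma(1\pm v\cos\phi)\bigr)^{-1}$ averages to $1$ is correct as a fact, but an average-one Jacobian does not make the two volume counts agree; the word-counting bound is the rigorous substitute, and it is not implied by your heuristic. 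More importantly, boost invariance of the numerator together with a $t$-only normalizer gives only the factorized form $g(t)\,f(\tau^{2}-I^{2})$ with $\tau^{2}=t^{2}-x^{2}$; it does not determine $f$, and ``feeding in $F^{cont}(0,0,s)=1$'' is consistent with a constant $f$ under a different choice of $g$. The paper closes this with a change-of-variables step you do not have: the $p$-fold rotational symmetry forces the angle $\theta=\arcsin(I/\tau)$ to become uniformly distributed under $F_{p}$, and the Radon--Nikodym derivative $d\theta/dI=(\tau^{2}-I^{2})^{-1/2}$ hands over the $-\tfrac12$ power directly, the residual $t$ in $(\tau^{2}-I^{2})^{-1/2}=t^{-1}\bigl(1-\tfrac{x^{2}+I^{2}}{t^{2}}\bigr)^{-1/2}$ being absorbed by the time-only normalization. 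Your Fourier alternative --- inverting the geometric series $\sum_{n}S_{p}(\vec k)^{n}$ after the angular average produces the $(\omega^{2}-K^{2})^{-1/2}$ singularity, and then matching the inverse transform to $\bigl(1-\tfrac{x^{2}+I^{2}}{t^{2}}\bigr)^{-1/2}$ --- is a genuinely different route but is left as an assertion; until that inversion is carried out (or the paper's angular Radon--Nikodym step is supplied), the claimed limit value remains unjustified.
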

\begin{proof}

First, let $\vec{x}'_{\theta}$ denote $R^{\theta}(proj_{x}(\vec{q})\hat{x}+proj_{I}(\vec{q})\hat{I})+proj_{t}(\vec{q})\hat{t}$. Then, we have Equation~\ref{eq:firstone}, where we denote the denominator of Equation~\ref{eq:rotafunc} as D.


\begin{multline}
    \label{eq:firstone}
    F_{p}(\vec{x})=D^{-1}\sum_{n=1}^{\infty}\sum_{c\in D(n,p)}\mu(P(\vec{x},c))
    \\
    =D^{-1}\sum_{n=1}^{\infty}\sum_{c\in D(n,p)}\mu(P(\vec{x},H_{k}^{-1}(c)))
    \\
    =D^{-1}\sum_{n=1}^{\infty}\sum_{c\in D(n,p)}\mu(P(\vec{x}'_{\frac{2\pi k}{p}},c))=F_{p}(\vec{x}'_{\frac{2\pi k}{p}})
\end{multline}

We know $H_{k}$ is bijective on words $c\in D(n,p)$; hence, we can reorder terms in our sum between the first and second lines in our above equation depending on what $H_{k}^{-1}$ maps to. If we apply $H_{k}$ to every word throughout the whole sum, it will change into the same volume path polytope to $x'_{\frac{2\pi k}{p}}$ and revert $H_{k}^{-1}(c)$ giving us $F_{p}(\vec{x}'_{\frac{2\pi k}{p}})$. So, $F_{p}$ is invariant under the rotation group isomorphic to $\mathbb{Z}_{p}$, and any limit of it would be invariant under the rotation group on the first two coordinates and be a function of $x^{2}+I^{2}$. 

Let $x_{\eta}$ now denote $proj_{r}(B^{\eta}(r,t))e^{i\phi}+proj_{t}(B^{\frac{1}{p}}(r,t))\hat{t}$ and let $\oplus^{p} \eta$ denote the velocity obtained from iterating the einstein velocity summation formula (Eqaution~\ref{eq:einstein}) $p$ times using the same incremental velocity $\eta$.

\begin{equation}
    \label{eq:einstein}
    u\oplus v=\frac{u+v}{1+u*v}
\end{equation}

Then, we have Equation~\ref{eq:secone}, where again we denote the denominator of Equation~\ref{eq:rotafunc} as D.

\begin{multline}
    \label{eq:secone}
    |D^{-1}((DF_{p})(\vec{x})-(DF_{p})(\vec{x}^{\oplus^{k}\frac{1}{p}}))|
    \\=|D^{-1}\sum_{n=1}^{\infty}\sum_{c\in D(n,p)}(\mu(P(\vec{x},c))-\mu(P(\vec{x}^{\oplus^{k}\frac{1}{p}},c)))|
    \\    =D^{-1}|\sum_{n=1}^{\infty}\sum_{c_{1}\in D_{1}(n,p),c_{2}\in D_{2}(n,p)}(\mu(P(\vec{x},c_{1}))-\mu(P(\vec{x}^{\oplus^{k}\frac{1}{p}},c_{2})))|
\end{multline}

In the last line of Equation~\ref{eq:secone} we create the sets $D_{1}(n,p)$ and $D_{2}(n,p)$. These are the subset of Smirnov words from our original alphabet of $\mathcal{A}_{p}$ that must include the last $k$ directions for $D_{1}$ nor the first for $D_{2}$. 

We note that $H_{k}$ naturally induces a volume-preserving association between $P(q,c_{i,j})$ and $P(q^{\oplus^{k}\frac{1}{p}},c_{i+k,q}))$, and we have subtracted those terms away to obtain the last line of Equation~\ref{eq:secone}. $H_{k}$ maps those directions outside of $D_{1}$ to directions wholely outside of $\mathcal{A}_{p}$; it must have a restricted domain to be a proper map between two sets of words. It is for this reason that the only terms left in the sum in Equation~\ref{eq:secone} lie outside the domain and image of $\mathcal{H}_{k}$ (and it is in this manner $D_{1}$ and $D_{2}$ obtain their definitions).

This last line has the upper bound $\frac{\sum_{n=1}^{\infty}V_{n}(|D_{1}(n,p)|+|D_{2}(n,p)|)}{D}$. The $V_{n}$ denotes the average volume of the path polytope over the Smirnov words in $D_{1}$ and $D_{2}$. The numerator of this fraction is less than the denominator (by its sup definition). As $p$ increases (with $k\sim \alpha p$) the fraction will approach zero. This is because $D$ is proportional to the total number of Smirnov words without the extra constraint that they have some letter in their composition, and this extra constraint means that $D$ is $\sim p-k$ times as large as $D_{1}$ or $D_{2}$. Looking at a word in $D_{1}$ and $D_{2}$, we can find the letter it is forced to have. By substituting this letter with the $p-k$ other directions in $\mathcal{A}_{p}$ we find there are $p-k$ more words in the composition of $D$ than $D_{1}$. An important thing to note is that as $p$ increases the $V_{n}$ remains invariant w.r.t p.

Since $x^{\oplus^{k}(\frac{1}{p})}$ constitutes a macroscopic boost, we have demonstrated that $DF_{p}$ at large $p$ is approximately boost invariant, not $F_{p}$. Since $D$ only depends on time, however, we have shown that a limit of $F_{p}$ has its dependence on x enveloped as a dependence on $\tau=\sqrt{t^{2}-x^{2}}$ and therefore it is a function of $\tau^{2}-I^{2}$. Namely, the limit over p of $F_{p}$ must be of the form $g(t)f(\tau^{2}-I^{2})$, where $g(t)\sim D$ is constrained by our normalization factor. 

Say that the variable $\theta=arcsin(\frac{I}{\tau})$ has a distribution as governed by $F_{p}$ that tends towards uniformity in the interval $[0,2\pi)$. Then we note (from $d\theta=\frac{dI}{\sqrt{\tau^{2}-I^{2}}}$) that $F_{p}$ is distributed in the desired manner on the domain $[-\tau,\tau]$ (from the Radon-Nykodym derivative of measures on $\mathbb{R}$).  The quantity $\theta$ is the angle of paths drawn from $\mathcal{F}_{p}$; due to the p-fold rotational symmetry of $\mathcal{F}_{p}$ we obtain the desired uniformity at $p\rightarrow\infty$. We have the desired result.

\end{proof}

Now, the main motivating theorem we will provide in this document is Theorem~\ref{thm:k2}. We prove this now:

\begin{proof}

First, we derive $K_{n}^{Feyn}(\vec{y},\vec{x})$. Note that we are just allowing in Theorem~\ref{thm:k3} for the vectors in $\mathcal{A}_{n}$ to have negative phases. Let $\mathcal{A}_{timeless}'$ denote the set of pairs $\{(\vec{a}_{n},\pm)|\vec{a}_{n}\in\mathcal{A}_{n}\}\setminus \{((0,1),+)\}$ where $\pm$ denotes whether we associate to $\vec{a}_{n}$ the length $\pm d_{n}(0,\vec{a}_{n})$. Then, $I_{k}$ for $k$ indexing $\mathcal{A}_{timeless}$ is the number of steps in $\gamma$ devoted to the pair $(\vec{a},\pm)_{k}$. We are purposely re-indexing our array here because our expression in Equation~\ref{eq:Feyn} becomes unwieldy otherwise. All the same proof of Theorem~\ref{thm:k3} shows that $K_{n}^{Feyn}$ is equal to the expression in Equation~\ref{eq:Feyn}.
\tiny
\begin{multline}
    \label{eq:Feyn}
   \mathcal{F}|_{I}^{m}(\int_{0}^{C_{i}}...\int_{0}^{C_{|\mathcal{A}_{timeless}|}}\begin{Bmatrix}\sum_{k\in\mathcal{A}_{timeless}'}I_{k}+(I-\sum_{k\in\mathcal{A}_{timeless}'}I_{k}d_{n}(0,\vec{a}_{k}))+f_{+}+f_{-}\\\{I_{k}\}_{k\in\mathcal{A}_{timeless}'},I-\sum_{k\in\mathcal{A}_{timeless}'}I_{k}d_{n}(0,\vec{a}_{k}),f_{+},f_{-}\end{Bmatrix}\Pi_{i\in\mathcal{A}_{timeless}'}dI_{k})
    \\
    \textrm{ where }C_{j}=min(\frac{proj_{t}(\vec{y}-\vec{x})-\sum_{\in\mathcal{A}_{timeless}'}^{j-1}I_{a_{i}}proj_{t}(\vec{a}_{i})}{proj_{t}(\vec{a}_{j})},\frac{proj_{t}(\vec{y}-\vec{x})-\sum_{\in\mathcal{A}_{timeless}'}^{j-1}I_{a_{i}}d_{n}(0,\vec{a})}{proj_{t}(\vec{a}_{j})})
    \\
    \textrm{ where }f_{\pm}(I,\{I_{k}\},\vec{y},\vec{x})=\frac{proj_{t}(\vec{y}-\vec{x})\pm proj_{x}(\vec{y}-\vec{x})}{2}
    \\
    -\sum_{k\in\mathcal{A}_{timeless}'}I_{k}(\frac{proj_{t}(\vec{a}_{k})\pm proj_{x}(\vec{a}_{k})-d_{n}(0,\vec{a}_{k})}{2})-\frac{I}{2}
\end{multline}
\normalsize

Note that the expression for Equation~\ref{eq:Feyn} is almost identical to Equation~\ref{eq:FourN} with the exception of the inclusion of $\mathcal{A}'_{timeless}$ indices. As a sanity check, $C_{j}$ is finite and defines a compact domain of integration and the denominator of our continuous multinomial less than $proj_{t}(\vec{y}-\vec{x})$ as expected. A barely modified proof as to what was employed in Theorem~\ref{thm:k2} serves to demonstrate after this that the pointwise limit of $K_{n}^{Feyn}$ exists and is non-trivial. We now want to use Lemma~\ref{lemma:sneaky} to complete the derivation of $K_{l_{2}^{*}}^{Feyn}(\vec{y},\vec{x})$, namely that the two sequences $\mathcal{F}^{-1}|_{m}^{I}(K_{n}^{Feyn})$ and that used in the statement of Lemma~\ref{lemma:sneaky} converge to the same limit.

For this, we note that $\mathcal{A}_{n}$ and the directions in the polytope of Lemma~\ref{lemma:sneaky} for $p=\frac{n}{2\pi}$ are both equidistributed on the sphere by Theorem~\ref{thm:tripledensity}. The expression in Equation~\ref{eq:FourN} is continuous w.r.t the directions in $\mathcal{A}_{n}$ used to construct it; therefore, if we took a sequence of the difference point-by-point of the argument of the Fourier transform of Equation~\ref{eq:FourN} for the original $\mathcal{A}_{n}$ and those directions in Equation~\ref{eq:directions} for $p=\frac{n}{2\pi}$, we obtain a sequence which goes to zero (by equidistributedness of pythagorean triples these directions ultimately approximate each-other). 

There are a couple of steps before we can employ Lemma~\ref{lemma:sneaky}. First, we note that our above work, along with Theorem~\ref{thm:k1}, demonstrates that Equation~\ref{eq:FourN} has a sup norm limit for the directions in Equation~\ref{eq:directions}. We now only need to show that the argument of Equation~\ref{eq:FourN} is equivalent to the expression in the hypothesis of Lemma~\ref{lemma:sneaky}. This last proof is performed by Lemma~\ref{lemma:idea}.

\end{proof}

\section{Proofs of the $K_{1}$ Theorems}

Let us state the proof of Theorem~\ref{thm:freeprop}:

\begin{proof}

Let $\gamma=\{\vec{x}_{i}\}_{i=1}^{n}\in \Gamma_{1}^{\vec{x},\vec{y}}$. Then, we can take $\vec{x}_{i}-\vec{x}_{i-1}=\{(\pm e_{1},1),...,(\pm e_{d},1),(0,1)\}$ where $e_{i}$ is a unit direction in $\mathbb{Z}^{d}$. This is because by Theorem~\ref{thm:generate} any path in $\Gamma_{1}$ can be generated by the elements of $\mathcal{A}_{1}$ and by Theorem\ref{thm:axes} $\mathcal{A}_{1}$ is the set above. We will denote by $I_{\eta i}$ the number of elements in the difference sequence $\{\vec{x}_{i+1}-\vec{x}_{i}\}_{i=1}^{n-1}$ are $(\eta e_{i},1)$ for $\eta\in\{-1,1\}$. Let's also denote by $I_{0}$ the number of elements that are $(0,1)$ Then, we must have the following for $\gamma$ to end at $\vec{y}$:

\begin{itemize}
    \item $I_{i}-I_{-i}=proj_{x_{i}}(\vec{y}-\vec{x})$
    
    \item $I_{0}+\sum_{i=1}^{d}I_{\pm i}=proj_{t}(\vec{y}-\vec{x})$
    
\end{itemize}

Since $d_{1}(0,e_{i})=0$, the phase is only determined by $I_{0}$. In other words, $\rho_{1}(\gamma)=I_{0}=proj_{t}(\vec{y}-\vec{x})-\sum_{i=1}^{d}I_{\pm i}$. We will let $I=proj_{t}(\vec{y}-\vec{x})-I_{0}$ so that $I-proj_{t}(\vec{y}-\vec{x})$ is our phase. From the second condition on our $I_{i}$ we know that the number of different sequences must be one less than the time coordinate because each element of $\mathcal{A}_{p}$ increments our path by 1. So $n-1=proj_{t}(\vec{y}-\vec{x})-1$. The number of paths with a fixed phase $I_{0}$ and these properties is a simple multinomial; the number of ways to sort $n-1$ points into our $I_{\pm i}$ and $I_{0}$. So we have Equation~\ref{eq:step1}

\small
\begin{multline}
    \label{eq:step1}
    K_{1}(\vec{x},\vec{y})=
    \\
    \sum_{I=|x|_{l_{1}}}^{t-1}(\sum_{I=\sum_{i=1}^{d}(I_{i}+I_{-i}),proj_{x_{i}}(\vec{x}-\vec{y})=I_{i}-I_{-i}}\frac{(proj_{t}(\vec{y}-\vec{x})-1)!}{\Pi_{i=1}^{d}(I_{i}!I_{-i}!)(proj_{t}(\vec{y}-\vec{x})-1-I)!})e^{im(I-proj_{t}(\vec{y}-\vec{x}))}
\end{multline}
\normalsize

In deriving our multinomial in Equation~\ref{eq:step1}, we must also recognize that possibly different $I_{i}$ can satisfy our initial constraints and include a sum over them. Using our second condition (and the fact that $I_{i}\ge proj_{x_{i}}(\vec{y}-\vec{x})$ this becomes

\begin{multline}
    \label{eq:step2}
    \sum_{I=|x|_{l_{1}}}^{t-1}(\sum_{\mathcal{I},\mathcal{II}}\frac{(proj_{t}(\vec{y}-\vec{x})-1)!}{\Pi_{i=1}^{d}(I_{i}!(I_{i}-proj_{x_{i}}(\vec{y}-\vec{x}))!)(proj_{t}(\vec{y}-\vec{x})-1-I)!})e^{im(I-proj_{t}(\vec{y}-\vec{x}))}
    \\\textrm{ with conditions }\mathcal{I}=\{I_{i}|I=\sum_{i=1}^{d}(2I_{i}-proj_{x_{i}}(\vec{y}-\vec{x}))\},\mathcal{II}=\{I_{I}|I_{i}\ge proj_{x_{i}}(\vec{y}-\vec{x})\}
\end{multline}

We note that condition $\mathcal{I}$ of Equation~\ref{eq:step2} can be rewritten $\sum_{i}I_{e_{i}}=\frac{I+|\vec{y}-\vec{x}-proj_{t}(\vec{x}-\vec{y})\hat{t}|_{l_{1}}}{2}$. For this, however, we will obtain non-integer values should $f(I,|\vec{y}-\vec{x}-proj_{t}(\vec{x}-\vec{y})\hat{t}|_{l_{1}})=0$. This corresponds to no path being possible between these two points, hence the inclusion of $f$ in our final expression. The condition on the sum (and each $I_{i}\ge 0$) leads to the expansion and our result; we sum over all allowed $I_{1}$ first, and then with a choice of $I_{1}$ all allowed $I_{2}$, etc. This will yield every combination consistent with our equations and the expression desired for Theorem~\ref{thm:freeprop}. Now, let $d=2$. Then, for $t\in \mathbb{R}^{+},x\in[-t,t]$ we have

\begin{equation}
    \label{eq:step3}
    \frac{\mathcal{T}_{cont}^{n}K_{1}(0,\lfloor nx\rfloor\hat{x}+\lfloor nt\rfloor\hat{t})}{\mathcal{T}_{cont}^{n}max_{\vec{x
'}\in\mathbb{Z}\times\mathbb{Z}}(|\Gamma_{1}^{0,\vec{x}'}|)}=\sum_{I=\lfloor nx\rfloor}^{\lfloor nt\rfloor}\frac{\mathcal{T}_{cont}^{n}\frac{\lfloor nt\rfloor!}{(.5(I-\lfloor nx\rfloor))!(.5(I+|\lfloor nx\rfloor|))!(\lfloor nt\rfloor-I)!}}{\mathcal{T}_{cont}^{n}max_{\vec{x
'}\in\mathbb{Z}\times\mathbb{Z}}(|\Gamma_{1}^{0,\vec{x}'}|)}e^{im(I-\lfloor nt\rfloor)}
\end{equation}

Using Sterling's formula we obtain Equation~\ref{equation:Stirling}.

\begin{multline}
    \label{equation:Stirling}
    \\
    log(\frac{\lfloor nt\rfloor!}{(.5(I-\lfloor nx\rfloor))!(.5(I+|\lfloor nx\rfloor|))!(\lfloor nt\rfloor-I)!})=
    \\
    \lfloor nt\rfloor log(\lfloor nt\rfloor)-.5(I-|\lfloor nx\rfloor|)log(.5(I-|\lfloor nx\rfloor|))-.5(I+|\lfloor nx\rfloor|)log(.5(I+|\lfloor nx\rfloor|))
    \\-(\lfloor nt\rfloor-I)log(\lfloor nt\rfloor-I)+ln(\frac{1}{\pi}\sqrt{ \frac{\lfloor nt\rfloor}{(I^{2}-|\lfloor nx\rfloor|^{2})(\lfloor nt\rfloor-I)}})
    \\
    +\mathcal{O}(\lfloor nt\rfloor^{-1}+2(I-|\lfloor nx\rfloor|)^{-1}+2(I+|\lfloor nx\rfloor|)^{-1}-(\lfloor nt\rfloor-I)^{-1})
\end{multline}

Upon taking a derivative, we obtain Equation~\ref{eq:deriv}.

\begin{equation}
    \label{eq:deriv}
    \partial_{I}log(\frac{\lfloor nt\rfloor!}{(.5(I-\lfloor nx\rfloor))!(.5(I+|\lfloor nx\rfloor|))!(\lfloor nt\rfloor-I)!})= log(\frac{2(t-I)}{\sqrt{I^{2}-x^{2}}})+\mathcal{O}(\frac{1}{n})
\end{equation}

We note up to first order, this expression equals $log(\frac{2(t-I)}{\sqrt{I^{2}-x^{2}}})$, which has the zero located at $I_{max}$ in Equation~\ref{eq:max}

\begin{equation}
    \label{eq:max}
    I_{max}=\frac{4\lfloor nt\rfloor-\sqrt{4\lfloor nt\rfloor^{2}-3\lfloor nx\rfloor^{2}}}{3}
\end{equation}

Using the continuous multinomial distribution concentration property found in Theorem~\ref{thm:conttails}, we know that $\frac{\lfloor nt\rfloor!}{(.5(I-\lfloor nx\rfloor))!(.5(I+|\lfloor nx\rfloor|))!(\lfloor nt\rfloor-I)!}$ concentrates about $I_{max}$ with dramatically vanishing terms outside of $\sim\sqrt{n}$. Our phase converges to the same constant value over these non-negligible terms because we divide by $n$ in its argument. This implies that

\begin{multline}
    \label{eq:step4}
    lim_{n\rightarrow\infty}\frac{\mathcal{T}_{cont}^{n}K_{1}(0,[ nx]\hat{x}+[ nt ]\hat{t})}{\mathcal{T}_{cont}^{n}max_{\vec{x
'}\in\mathbb{Z}\times\mathbb{Z}}(|\Gamma_{1}^{0,[n\vec{x}']}|)}
\\
=lim_{n\rightarrow\infty}\frac{\sum_{I=I_{max}-[n\sqrt{t}]}^{I_{max}+[n\sqrt{t}]}\mathcal{T}_{cont}^{n}\frac{\lfloor nt\rfloor!}{(.5(I-\lfloor nx\rfloor))!(.5(I+|\lfloor nx\rfloor|))!(\lfloor nt\rfloor-I)!}e^{im(\frac{I-\lfloor nt\rfloor}{n})}}{\mathcal{T}_{cont}^{n}max_{\vec{x
'}\in\mathbb{Z}\times\mathbb{Z}}(|\Gamma_{1}^{0,\vec{x}'}|)}
\\
=lim_{n\rightarrow\infty}\frac{\sum_{I=I_{max}-[n\sqrt{t}]}^{I_{max}+[n\sqrt{t}]}\begin{Bmatrix}t\\.5(I-|x|),.5(I+|x|),t-I\end{Bmatrix}e^{im(\frac{I-\lfloor nt\rfloor}{n})}}{\mathcal{T}_{cont}^{n}max_{\vec{x
'}\in\mathbb{Z}\times\mathbb{Z}}(|\Gamma_{1}^{0,\vec{x}'}|)}
\end{multline}

We can multiply the numerator and denominator (which will behave similarly sans phase) by n, and then this sum (by the definition of a Riemann integral) converges to some normalized Fourier transform of the continuous multinomial in terms of the variable I to $m$ in Equation~\ref{eq:finalEq}. This is because the bounds of our sum contain the sharp maxima of the continuous multinomial at $\frac{I_{max}}{n}$ between $x$ and $t$; therefore, this integral amounts to all of the integral required for the Fourier transform.

Note that in the denominator the factor of $\frac{1}{n}$ acts on a sum very much like the numerator but without a phase. This will obtain an integral of the continuous multinomial, but by Theorem~\ref{thm:conttails}, we know this is integrable, and we can absorb it into some $C(t)$.

\begin{equation}
    \label{eq:finalEq}
    C(t)\mathcal{F}_{I}^{m}(\begin{Bmatrix}t\\.5(I-|x|),.5(I+|x|),t-I\end{Bmatrix})
\end{equation}

By Theorem~\ref{thm:conttails}, we know this Fourier transform exists. This is worked out in detail. Similarly, the discrete propagator for all the other cases becomes Fourier transforms, and as the above work shows, we need only show that the peak of the continuous multinomial lies in our sum bounds (more rigorously that is within $\sqrt{n}t$ from the boundary of the continuous region). That would give us Equation~\ref{eq:finaBo2}.
\small
\begin{multline}
    \label{eq:finaBo2}
    C(t)\mathcal{F}_{I}^{m}(\int_{B_{1}}^{T_{i}}\textrm{ ... }\int_{B_{d-1}}^{T_{d-1}}\begin{Bmatrix}
    proj_{t}(\vec{x}_{2}-\vec{x}_{1})\\\{I_{d}\}_{i=1}^{d}\{I_{i}-|proj_{x_{i}}(\vec{x}_{2}-\vec{x}_{1})|\}_{i=1}^{d},proj_{t}(\vec{x}_{2}-\vec{x}_{1})-I
    \end{Bmatrix}\Pi_{i=1}^{d}dI_{i})
    \\\textrm{ where }B_{i}=|proj_{x_{i}}(\vec{x}_{2}-\vec{x}_{1})|\textrm{ and }T_{i}=\frac{I+|\vec{x}_{2}-\vec{x}_{1}|_{l_{1}}}{2}-\sum_{j=1}^{i-1}|proj_{x_{j}}(\vec{x}_{2}-\vec{x}_{1})|
\end{multline}
\normalsize



\end{proof}

With $K_{1}$ for $X=\mathbb{Z}^{d}\times \mathbb{Z}$ obtained in Theorem~\ref{thm:freeprop}, we may use it to obtain any Riemann Surface, which has a simple universal cover in $\mathbb{R}^{2}$. We will perform the proofs of Theorem~\ref{thm:torus} and Theorem~\ref{thm:klien} simultaneously using their well-known covering by $\mathbb{R}^{d}$ in the following paragraphs.

\begin{proof}

The covering $\phi:\mathbb{Z}^{d}\rightarrow \mathbb{T}^{d}$ takes $\vec{x}$ to its representation $\vec{x}'\in \times_{i}\{-L_{i},...,L_{i}\}$ such that there is some $\{m_{i}\}_{i=1}^{d}\subset \mathbb{Z}$ where $\vec{x}'=\vec{x}+m_{i}(2L_{i}e_{i})$. If we want to describe some $\gamma\in \Gamma_{1}$ for $\mathbb{T}^{d}$, it's sufficient to see how the paths raise in $\phi^{-1}(\gamma)$. These will be paths from some element of $\phi^{-1}(\vec{x})$ to another of $\phi^{-1}(\vec{y})$. These would over-count the paths between $\vec{x}$ and $\vec{y}$ in the original $\mathbb{T}^{d}$; we want to fix a representative in the cover of our origin point. Then, the number of paths in $\mathbb{T}^{d}$ are in bijection with all paths between our fixed representative and all elements of $\phi^{-1}(\vec{y})$. These paths are shown in Figure~\ref{fig:torus}.

\begin{figure}[h]
    \centering
    \begin{tikzpicture}
    \foreach \i in {\xMin,...,\xMax}{
        \draw [very thin,gray] (.2*\i,.2*\yMin) -- (.2*\i,.2*\yMax);
    }
    \foreach \i in {\yMin,...,\yMax} {
        \draw [very thin,gray] (.2*\xMin,.2*\i) -- (.2*\xMax,.2*\i);
    }
    \pgfmathsetmacro{\helperX}{3.0}
    \pgfmathsetmacro{\helperY}{3.0}
    \pgfmathsetmacro{\l}{1}
    \pgfmathsetmacro{\m}{10}
    \foreach \i in {\xMiN,...,\xMaX} {
        \draw [very thick,gray] (2*\i,2*\yMiN) -- (2*\i,2*\yMaX);
        \foreach \j in {\yMiN,...,\yMaX} {
            \ifnum \i<\xMaX
                \ifnum \j<\yMaX
                    \fill[red] (.4+2*\i,.2+2*\j) circle[radius=2pt];
                \fi
            \fi
        }
    }
    \foreach \i in {\yMiN,...,\yMaX} {
        \draw [very thick,gray] (2*\xMiN,2*\i) -- (2*\xMaX,2*\i);
    }
    
    \draw [thick,black] (3.0,3.0) -- (3.0,2.8);
    \draw [thick,black] (3.0,2.8) -- (2.8,2.8);
    \draw [thick,black] (2.8,2.8) -- (2.5999999999999996,2.8);
    \draw [thick,black] (2.5999999999999996,2.8) -- (2.5999999999999996,2.5999999999999996);
    \draw [thick,black] (2.5999999999999996,2.5999999999999996) -- (2.3999999999999995,2.5999999999999996);
    \draw [thick,black] (2.3999999999999995,2.5999999999999996) -- (2.3999999999999995,2.3999999999999995);
    \draw [thick,black] (2.3999999999999995,2.3999999999999995) -- (2.4,2.2);
    
    \draw [thick,black] (3.0,3.0) -- (3.2,3.0);
\draw [thick,black] (3.2,3.0) -- (3.2,2.8);
\draw [thick,black] (3.2,2.8) -- (3.4000000000000004,2.8);
\draw [thick,black] (3.4000000000000004,2.8) -- (3.6000000000000005,2.8);
\draw [thick,black] (3.6000000000000005,2.8) -- (3.6000000000000005,2.5999999999999996);
\draw [thick,black] (3.6000000000000005,2.5999999999999996) -- (3.8000000000000007,2.5999999999999996);
\draw [thick,black] (3.8000000000000007,2.5999999999999996) -- (4.000000000000001,2.5999999999999996);
\draw [thick,black] (4.000000000000001,2.5999999999999996) -- (4.200000000000001,2.5999999999999996);
\draw [thick,black] (4.200000000000001,2.5999999999999996) -- (4.400000000000001,2.5999999999999996);
\draw [thick,black] (4.400000000000001,2.5999999999999996) -- (4.400000000000001,2.3999999999999995);
\draw [thick,black] (4.400000000000001,2.3999999999999995) -- (4.400000000000001,2.1999999999999993);
    
    \draw [thick,black] (3.0,3.0) -- (3.2,3.0);
\draw [thick,black] (3.2,3.0) -- (3.4000000000000004,3.0);
\draw [thick,black] (3.4000000000000004,3.0) -- (3.4000000000000004,3.2);
\draw [thick,black] (3.4000000000000004,3.2) -- (3.6000000000000005,3.2);
\draw [thick,black] (3.6000000000000005,3.2) -- (3.6000000000000005,3.4000000000000004);
\draw [thick,black] (3.6000000000000005,3.4000000000000004) -- (3.8000000000000007,3.4000000000000004);
\draw [thick,black] (3.8000000000000007,3.4000000000000004) -- (4.000000000000001,3.4000000000000004);
\draw [thick,black] (4.000000000000001,3.4000000000000004) -- (4.000000000000001,3.6000000000000005);
\draw [thick,black] (4.000000000000001,3.6000000000000005) -- (4.000000000000001,3.8000000000000007);
\draw [thick,black] (4.000000000000001,3.8000000000000007) -- (4.000000000000001,4.000000000000001);
\draw [thick,black] (4.000000000000001,4.000000000000001) -- (4.000000000000001,4.200000000000001);
\draw [thick,black] (4.000000000000001,4.200000000000001) -- (4.400000000000001,4.200000000000001);

    \draw [thick,black] (3.0,3.0) -- (3.0,3.2);
\draw [thick,black] (3.0,3.2) -- (3.0,3.4000000000000004);
\draw [thick,black] (3.0,3.4000000000000004) -- (3.0,3.6000000000000005);
\draw [thick,black] (3.0,3.6000000000000005) -- (3.0,3.8000000000000007);
\draw [thick,black] (3.0,3.8000000000000007) -- (3.0,4.000000000000001);
\draw [thick,black] (3.0,4.000000000000001) -- (2.8,4.000000000000001);
\draw [thick,black] (2.8,4.000000000000001) -- (2.8,4.200000000000001);
\draw [thick,black] (2.8,4.200000000000001) -- (2.5999999999999996,4.200000000000001);
\draw [thick,black] (2.5999999999999996,4.200000000000001) -- (2.3999999999999995,4.200000000000001);
\draw [thick,black] (2.3999999999999995,4.200000000000001) -- (2.4,4.2);
    
    \fill[blue] (3,3) circle[radius=2pt];
    \end{tikzpicture}
    \caption{Sample paths (projected onto spatial dimensions) from $\vec{x}$ to all points in $\phi^{-1}(\vec{y})$ in the $\mathbb{R}^{2}$ cover of $\mathbb{T}^{2}$ for time less than the $\mathbb{T}^{2}$ widths.}
    \label{fig:torus}
\end{figure}
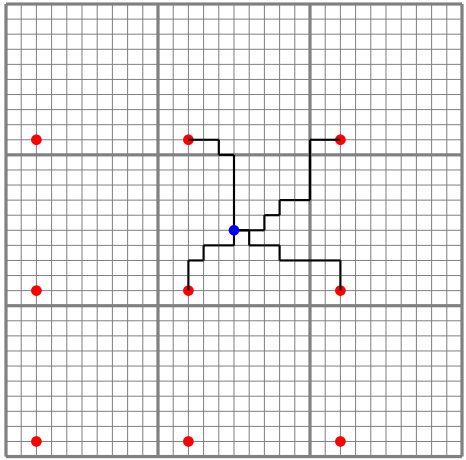

These will be paths in $\mathbb{Z}^{d}\times Z$ between $\vec{x}$ and $\vec{y}'=\vec{y}+\sum_{i}m_{i}2L_{i}e_{i}$. Of course, unless $|\vec{y}'-\vec{x}-proj_{t}(\vec{y}'-\vec{x})\hat{t}|_{l_{1}}\le proj_{t}(\vec{y}'-\vec{x})$ we can never reach $\hat{y}'$. So, our $K_{1}^{\mathbb{T}}$ just becomes the sum of $K_{1}$ for two points displaced as mentioned, which is the expression in the statement of Theorem~\ref{thm:torus}.

\begin{figure}[h]
    \centering
    \begin{tikzpicture}
    \foreach \i in {\xMin,...,\xMax}{
        \draw [very thin,gray] (.2*\i,.2*\yMin) -- (.2*\i,.2*\yMax);
    }
    \foreach \i in {\yMin,...,\yMax} {
        \draw [very thin,gray] (.2*\xMin,.2*\i) -- (.2*\xMax,.2*\i);
    }
    \foreach \i in {\xMiN,...,\xMaX}{
        \draw [very thick,gray] (2*\i,2*\yMiN) -- (2*\i,2*\yMaX);
        \foreach \j in {\yMiN,...,\yMaX} {
            \ifnum \i<\xMaX
                \ifnum \j<\yMaX
                    \pgfmathsetmacro{\helper}{\xMiN + 1.0}
                    \ifnum\i=\helper
                        \fill[red] (.4+2*\i,.2+2*\j) circle[radius=2pt];
                    \else
                        \fill[red] (.4+2*\i,.2+2*\j+1.6) circle[radius=2pt];
                    \fi
                \fi
            \fi
        }
    }
    \foreach \i in {\yMiN,...,\yMaX} {
        \draw [very thick,gray] (2*\xMiN,2*\i) -- (2*\xMaX,2*\i);
    }
    \fill[blue] (3,3) circle[radius=2pt];
    \draw [thick,black] (3.0,3.0) -- (3.0,2.8);
\draw [thick,black] (3.0,2.8) -- (3.0,2.5999999999999996);
\draw [thick,black] (3.0,2.5999999999999996) -- (3.0,2.3999999999999995);
\draw [thick,black] (3.0,2.3999999999999995) -- (2.8,2.3999999999999995);
\draw [thick,black] (2.8,2.3999999999999995) -- (2.5999999999999996,2.3999999999999995);
\draw [thick,black] (2.5999999999999996,2.3999999999999995) -- (2.3999999999999995,2.3999999999999995);
\draw [thick,black] (2.3999999999999996,2.3999999999999995) -- (2.3999999999999995,2.1999999999999995);
    
    \draw [thick,black] (3.0,3.0) -- (3.0,3.2);
\draw [thick,black] (3.0,3.2) -- (3.0,3.4000000000000004);
\draw [thick,black] (3.0,3.4000000000000004) -- (3.0,3.6000000000000005);
\draw [thick,black] (3.0,3.6000000000000005) -- (3.0,3.8000000000000007);
\draw [thick,black] (3.0,3.8000000000000007) -- (2.8,3.8000000000000007);
\draw [thick,black] (2.8,3.8000000000000007) -- (2.8,4.000000000000001);
\draw [thick,black] (2.8,4.000000000000001) -- (2.5999999999999996,4.000000000000001);
\draw [thick,black] (2.5999999999999996,4.000000000000001) -- (2.3999999999999995,4.000000000000001);
\draw [thick,black] (2.3999999999999995,4.000000000000001) -- (2.3999999999999995,4.200000000000001);
\draw [thick,black] (2.3999999999999995,4.200000000000001) -- (2.4,4.2);
    
    \draw [thick,black] (3.0,3.0) -- (3.2,3.0);
\draw [thick,black] (3.2,3.0) -- (3.2,3.2);
\draw [thick,black] (3.2,3.2) -- (3.4000000000000004,3.2);
\draw [thick,black] (3.4000000000000004,3.2) -- (3.6000000000000005,3.2);
\draw [thick,black] (3.6000000000000005,3.2) -- (3.8000000000000007,3.2);
\draw [thick,black] (3.8000000000000007,3.2) -- (4.000000000000001,3.2);
\draw [thick,black] (4.000000000000001,3.2) -- (4.000000000000001,3.4000000000000004);
\draw [thick,black] (4.000000000000001,3.4000000000000004) -- (4.200000000000001,3.4000000000000004);
\draw [thick,black] (4.200000000000001,3.4000000000000004) -- (4.400000000000001,3.4000000000000004);
\draw [thick,black] (4.400000000000001,3.4000000000000004) -- (4.400000000000001,3.8000000000000004);
    
    \draw [thick,black] (3.0,3.0) -- (3.2,3.0);
\draw [thick,black] (3.2,3.0) -- (3.2,2.8);
\draw [thick,black] (3.2,2.8) -- (3.2,2.5999999999999996);
\draw [thick,black] (3.2,2.5999999999999996) -- (3.4000000000000004,2.5999999999999996);
\draw [thick,black] (3.4000000000000004,2.5999999999999996) -- (3.4000000000000004,2.3999999999999995);
\draw [thick,black] (3.4000000000000004,2.3999999999999995) -- (3.4000000000000004,2.1999999999999993);
\draw [thick,black] (3.4000000000000004,2.1999999999999993) -- (3.6000000000000005,2.1999999999999993);
\draw [thick,black] (3.6000000000000005,2.1999999999999993) -- (3.8000000000000007,2.1999999999999993);
\draw [thick,black] (3.8000000000000007,2.1999999999999993) -- (3.8000000000000007,1.9999999999999993);
\draw [thick,black] (3.8000000000000007,1.9999999999999993) -- (3.8000000000000007,1.7999999999999994);
\draw [thick,black] (3.8000000000000007,1.7999999999999994) -- (4.400000000000001,1.7999999999999994);

    \end{tikzpicture}
    \caption{Paths from $\vec{x}$ to all points in $\phi^{-1}(\vec{y})$ in the $\mathbb{R}^{2}$ cover of Klein bottle for time less than the $L_{i}$ widths}
    \label{fig:klien}
\end{figure}
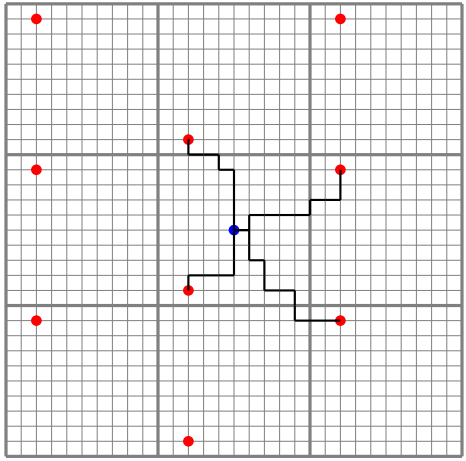

The covering map $\phi':\mathbb{R}^{2}\rightarrow\textrm{klien}$ takes $\vec{x}$ in $\mathbb{R}^{2}$ to its equivalence class in X given in the set $\mathcal{B}$ in the statement of Theorem~\ref{thm:klien}. Then, the exact same proof as above goes to show the desired result. We include Figure~\ref{fig:klien} as the analogous cover to Figure~\ref{fig:torus} but for the Klien bottle. In this manner, finding propagators for spaces that have a covering space $\mathbb{Z}^{d}$ is performed with relative ease (should you describe the equivalence class of points under raising by that cover in $\mathbb{Z}^{d}$ sufficiently well).

\end{proof}

Now that we have shown the ease of computation of $K_{1}$ in the context of free space and orbifolds, we will compute it on tropical surfaces. The surface of chief interest in physics at the moment is de-Sitter space, so we obtain the following proof of Theorem~\ref{thm:deSitter}.

\begin{proof}

Let $\gamma\in \Gamma_{1}$. We note that these are just the same paths as in $\mathbb{Z}^{d}\times\mathbb{Z}$ but constrained to the zero set of $\mathfrak{p}$. This zero set might change $\mathcal{A}_{1}$ (as warned of in the proof of Theorem~\ref{thm:axes}). We note that $n*(0,1)$ is no longer in $\mathcal{A}$. If you advance in time, then to remain on $\mathfrak{p}$, one of your spatial coordinates must change (or $|proj_{t}(\vec{x})|-\sum_{i}|proj_{x_{i}}(\vec{x})|\neq a$ as required of in the hypothesis of Theorem~\ref{thm:deSitter}). Say that $\vec{x}$ is in the positive orthant of $\mathbb{Z}^{d}\times\mathbb{Z}$. Then, $|proj_{t}(\vec{x})|-\sum_{i}|proj_{x_{i}}(\vec{x})|=a\implies |proj_{t}(\vec{x}+\hat{t})|-\sum_{i}|proj_{x_{i}}(\vec{x}+e_{i})|=a$ and we have shown that $(e_{i},1)$ still must be in $\mathcal{A}_{1}$. Our proof in Theorem~\ref{thm:axes} demonstrates that $\mathcal{A}$ can now only include the null vectors, and so  we have shown that in this context $\mathcal{A}_{1}$ is $\{(e_{i},1)|e_{i}\textrm{ is a unit vector in }\mathbb{Z}^{d}\}$. So, by Theorem~\ref{thm:generate}, we may consider $\gamma$ as having a difference sequence in $\mathcal{A}_{1}$.

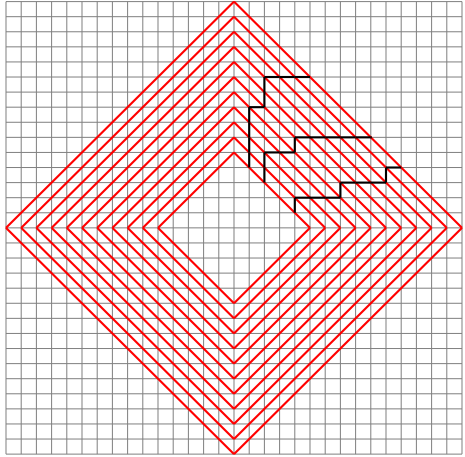
\begin{figure}[h]
    \centering
    \begin{tikzpicture}
    \foreach \i in {\xMin,...,\xMax}{
        \draw [very thin,gray] (.2*\i,.2*\yMin) -- (.2*\i,.2*\yMax);
    }
    \foreach \i in {\yMin,...,\yMax} {
        \draw [very thin,gray] (.2*\xMin,.2*\i) -- (.2*\xMax,.2*\i);
    }
    \foreach \t in {0,...,10}{
        \draw [thick,red] (2.0-.2*\t,3.0) -- (3.0,2.0-.2*\t);
        \draw [thick,red] (3.0,2.0-.2*\t) -- (4.0+.2*\t,3.0);
        \draw [thick,red] (4.0+.2*\t,3.0) -- (3.0,4.0+.2*\t);
        \draw [thick,red] (3.0,4.0+.2*\t) -- (2.0-.2*\t,3.0);
    }
    
    \draw [thick,black] (3.4,3.6) -- (3.4,3.8000000000000003);
\draw [thick,black] (3.4,3.8000000000000003) -- (3.4,4.0);
\draw [thick,black] (3.4,4.0) -- (3.6,4.0);
\draw [thick,black] (3.6,4.0) -- (3.8000000000000003,4.0);
\draw [thick,black] (3.8000000000000003,4.0) -- (3.8000000000000003,4.2);
\draw [thick,black] (3.8000000000000003,4.2) -- (4.0,4.2);
\draw [thick,black] (4.0,4.2) -- (4.2,4.2);
\draw [thick,black] (4.2,4.2) -- (4.4,4.2);
\draw [thick,black] (4.4,4.2) -- (4.6000000000000005,4.2);
\draw [thick,black] (4.6000000000000005,4.2) -- (4.800000000000001,4.2);
\draw [thick,black] (4.800000000000001,4.2) -- (4.8,4.2);
    
\draw [thick,black] (3.2,3.8) -- (3.2,4.0);
\draw [thick,black] (3.2,4.0) -- (3.2,4.2);
\draw [thick,black] (3.2,4.2) -- (3.2,4.4);
\draw [thick,black] (3.2,4.4) -- (3.2,4.6000000000000005);
\draw [thick,black] (3.2,4.6000000000000005) -- (3.4000000000000004,4.6000000000000005);
\draw [thick,black] (3.4000000000000004,4.6000000000000005) -- (3.4000000000000004,4.800000000000001);
\draw [thick,black] (3.4000000000000004,4.800000000000001) -- (3.4000000000000004,5.000000000000001);
\draw [thick,black] (3.4000000000000004,5.000000000000001) -- (3.6000000000000005,5.000000000000001);
\draw [thick,black] (3.6000000000000005,5.000000000000001) -- (3.8000000000000007,5.000000000000001);
\draw [thick,black] (3.8000000000000007,5.000000000000001) -- (4.000000000000001,5.000000000000001);

\draw [thick,black] (3.8,3.2) -- (3.8,3.4000000000000004);
\draw [thick,black] (3.8,3.4000000000000004) -- (4.0,3.4000000000000004);
\draw [thick,black] (4.0,3.4000000000000004) -- (4.2,3.4000000000000004);
\draw [thick,black] (4.2,3.4000000000000004) -- (4.4,3.4000000000000004);
\draw [thick,black] (4.4,3.4000000000000004) -- (4.4,3.6000000000000005);
\draw [thick,black] (4.4,3.6000000000000005) -- (4.6000000000000005,3.6000000000000005);
\draw [thick,black] (4.6000000000000005,3.6000000000000005) -- (4.800000000000001,3.6000000000000005);
\draw [thick,black] (4.800000000000001,3.6000000000000005) -- (5.000000000000001,3.6000000000000005);
\draw [thick,black] (5.000000000000001,3.6000000000000005) -- (5.000000000000001,3.8000000000000007);
\draw [thick,black] (5.000000000000001,3.8000000000000007) -- (5.200000000000001,3.8000000000000007);

    \end{tikzpicture}
    \caption{Red diamonds represent the curves $|x|+|y|-|t|=5$ for $t\in\{0,...,10\}$, and paths upon 1d de-Sitter space from $t=0\rightarrow t=10$}
    
    \label{fig:sitter}
\end{figure}

So, $\gamma$ may only move monotonically in any direction in $\mathbb{Z}^{d}$. This is demonstrated in Figure~\ref{fig:sitter} for $d=1$. Since it's composed only of the null paths, $\rho_{n}(\gamma)$ is necessarily zero. The number of monotonically moving lattice paths is the multinomial in the statement of Theorem~\ref{thm:deSitter}. This is well known, refer to \cite{continuouslatticepath};  you can separate each step in your path $\vec{x}\rightarrow\vec{y}$ into sets based on which direction they moved. This will partition the time steps $proj_{t}(\vec{y}-\vec{x})$, so we obtain the combinatorial result. The continuum result is obtained from Theorem~\ref{thm:contconv}.





\end{proof}

With the rigorous proofs included, I now conclude with Section~\ref{sec:disc}, which includes conjectures and computational work with interaction terms in our propagators.

\section{Computational Work and Conjectures}
\label{sec:disc}

\subsection{Mathematical Conjectures}
In this section, we include computational work and conjectures that will be addressed in future work. First, we would like to present an immediate conjecture from our work above. In proving Theorem~\ref{thm:k2}, we obtained Equation~\ref{equation:PythagoreanTriples} for $x,t\in \mathbb{R}_{+}\textrm{ and }t>x$.

\begin{multline}
    \label{equation:PythagoreanTriples}
    lim_{n\rightarrow\infty}\int...\int \begin{Bmatrix}t-\sum_{k\in\mathcal{A}_{timeless}'}(proj_{t}(\vec{a}_{k})-1)I_{k}\\\{I_{k}\}_{k\in\mathcal{A}_{timeless}'},I-\sum_{k\in\mathcal{A}_{timeless}'}d_{n}(0,\vec{a}_{k})I_{k},f_{\pm}\end{Bmatrix}\Pi_{k\in\mathcal{A}_{timeless}'dI_{k}}
    \\
    =\sqrt{I^{2}+t^{2}-x^{2}}
    \\
    f_{\pm}=\frac{t^{2}-x^{2}-\sum_{k\in\mathcal{A}_{timeless}'}(proj_{t}(\vec{a}_{k})\pm proj_{x}(\vec{a}_{k})-d_{n}(0,\vec{a}_{k}))I_{k}-I}{2}
\end{multline}

where $\mathcal{A}_{timeless}'$ is $\mathcal{A}_{n}\setminus\{(-1,1),(0,1),(1,0)\}$ with the added constraint that we include negative $\rho_{DB}$ corresponding to a 'anti-particle.' This is explained in Section~\ref{section:intro}. The result of Theorem~\ref{thm:freeprop} seems to suggest (if pythagorean tuples are equidistributed for all dimensions) that a similar result would be true for all dimensions. Specifically, let $\mathcal{A}^{d}_{n}$ be the set of all primitive Pythagorean tuples \cite{rational} of dimension d+1 with radii less than n. Let us index $\vec{a}\in\mathcal{A}^{d}$ such that $proj_{x_{0}}(\vec{a})$ is the hypotenuse, and fix a constant index for the other coordinates. We may also denote as $proj_{t}(\vec{a})$ the hypotenuse of the tuple, defining $x_{0}$ as a time-like direction. We let $proj_{x_{d+1}}(\vec{a})=d_{l_{2}^{*}}^{d}(0,\vec{a})$. It is necessarily the case that one of the non-hypotenuse tuple elements satisfies $d_{l_{2}^{*}}^{d}(0,\vec{a})=\sqrt{proj_{t}(\vec{a})^{2}-\sum_{i=1}^{d}proj_{x_{i}}(\vec{a})^{2}}$, and so we fix it as the last coordinate. Then, necessarily it always contains $(1,\vec{0})$ and all vectors from Theorem~\ref{thm:axes}. We denote as $\mathcal{A}_{timeless}'$ the set $\mathcal{A}_{n}^{d}$ set minus the vectors from Theorem~\ref{thm:axes} with the negative phases as used in the definition of $K_{l_{2}^{*}}^{Feyn}$. Altogether, Theorem~\ref{thm:freeprop} and Theorem~\ref{thm:k2}, along with the notion that our approach rigorously yields the relativistic scalar field propagator \cite{Hong_Hao_2010}, suggests Equation~\ref{equation:PythagoreanTupples}.

\begin{equation}
\label{equation:PythagoreanTupples}
    lim_{n\rightarrow\infty}K_{n}^{Feyn,cont}(\vec{x},\vec{y})=C\mathcal{F}|_{I}^{m}((1-\frac{I^{2}+\sum_{i=1}^{d}x_{i}^{2}}{t^{2}})^{\frac{1-2d}{2}})
\end{equation}

Here, $K_{n}^{Feyn}$ will be defined as in Section~\ref{section:intro} (the limit of $\mathcal{T}^{cont}$ of a lattice path sum over higher dimensional paths), and we should derive a similarly insightful expression for it as a limit of continuum multinomial coefficients. We will not write this expression for the sake of expediency. The relationship between the Pythagorean tuples and rational solutions to the equation $d_{l_{2}^{*}}(0,\vec{a})=1$ \cite{rational} as well as the extension to select Riemann Surfaces suggests that lattice path integrals can be of use towards algebraic geometry. We could try to discretize the continuum lattice path integral as was done in \cite{continuousbin} to study rational points on general surfaces.


\subsection{Interactions}

Now, I would like to move to a discussion of interactions in this picture. The path-indexed sum over phases, along with $\mathcal{T}_{cont}^{m}$, suggests a computational way to compute the above interactions. We can generate all paths (choosing those will low numbers of distinct linear segments) and perform the phase-indexed sum over them. These computations, however, also allow us to consider introducing an interaction term. We first will consider our particle moving w.r.t a static charged particle (static in our observer's relativistic frame). For $1-1$ space, we use known results for Coulomb gases \cite{Lewin_2022} and have $V_{coul}$ in Equation~\ref{eq:coulCharge}.

\begin{equation}
\label{eq:coulCharge}
    V_{coul}(x)=-|x-x_{q}|
\end{equation}

This potential has the property that $\nabla V_{coul}=\delta(x-x_{q})$, which implies that it is the analog of an electric charge potential in 1 spatial dimension. We can add it to our action in the natural way; action is classically kinetic minus potential energy \cite{pesky}. Since our isoperimetry in Section~\ref{section:def} covers kinetic energy, we simply add $-im\sum_{\vec{x}_{i}\in\gamma}V_{coul}(\vec{x}_{i})$ to our phase. Here we note $\vec{x}_{i}$ does not refer to the difference sequence of $\gamma$, but rather its moving endpoint. When we desire our continuous quantity, we must normalize the units of our potential. If $\rho$ is normalized by division by $n$, it makes sense to normalize $V_{coul}$ by division by $n^{2}$. This is because it is a sum over $t\sim n$ terms of magnitude $\vec{x}_{i}\sim n$ (not a difference sequence), meaning its scaling ought to be as $n^{2}$; Figure~\ref{fig:freePart} shows the free particle $l_{1}$ propagator at 12 seconds, while Figure~\ref{fig:electricField} demonstrates the propagator for a 1d electric charge located at $x=1$ light-second, $m=1$, and at $t=2$ seconds.

\begin{figure}[h]
    \centering
    \includegraphics[width = 6 cm]{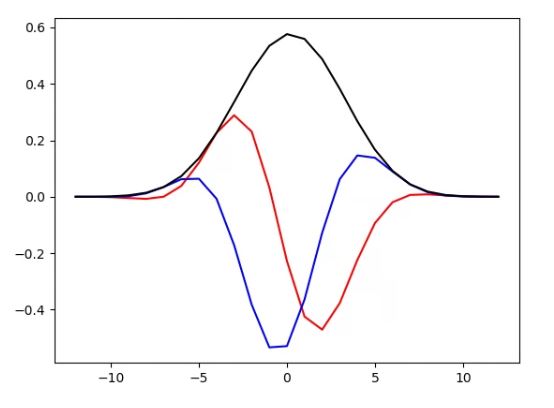}
    \caption{Free Taxicab Propogator at $t=12.0$ seconds}
    \label{fig:freePart}
\end{figure}

\begin{figure}[h]
    \centering
    \includegraphics[width = 6 cm]{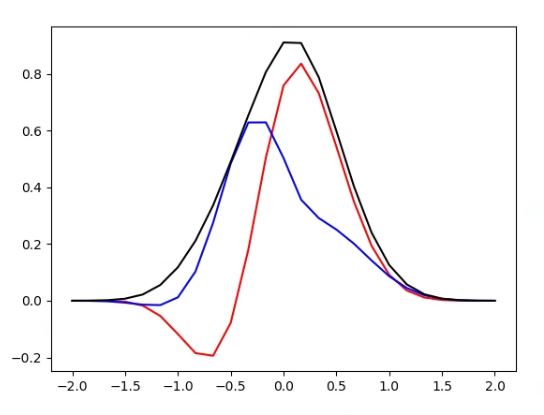}
    \caption{Attractive Coulomb Potential at $x=1.0$ light seconds after $t=2.0$ seconds}
    \label{fig:electricField}
\end{figure}

In Figure~\ref{fig:freePart} and Figure~\ref{fig:electricField}, the red is real, the blue imaginary, and the black is the magnitude. Clearly, as time moves on, the probability amplitude of our scalar particle moves towards the attractive coulomb well. We can be more sure we are performing this simulation correctly by taking an inverse Fourier transform of the complex-valued propagator and probing it for spikes in $m$. The author believes that these spikes would correspond to bound or resonant states of the above system.

\subsection{Discussion}

The above interactions are obtained by considering different, non-geometric terms in our action.  It is difficult to obtain physically interesting interaction terms from purely geometric quantities; this is one of the promising features of String and M-Theory \cite{vafa2005string}. In these theories, we consider sheets and membranes rather than paths, and the quantities of interest are discrete sheet-indexed sums of phases. Like in our above work, these phases are determined by the area of the sheets, and interactions are directly built into high-genus sheets. There are a number of technical hurdles with using our above scheme to find a Membrane Theory. The first is that our input and output states are no longer points with finite degrees of freedom; string and membrane initial and final states have infinitely many degrees of freedom. The author has already obtained some results for discrete $l_{1}$ string propagators when the degrees of freedom are limited by allowing world-sheets (the sheets traced out by strings in String Theory \cite{vafa2005string}) to only be rectangular. Another means of solving this problem would be to treat the Membrane propagators as a function between sequences, where a sequence would encode uniquely the moments of a string or membrane. In either case, the work of Cano and Diaz \cite{continuousbin} must be generalized to obtain Membrane Theory's version of $\mathcal{T}_{m}^{cont}$. It is not so easily seen that the space of directed sheets has some sensible expression for its volume; this would be necessary to move from discrete string propagators (of which the author has already performed some computation) to the continuous case.



\section{Definitional Proofs}
\label{section:def}

This section includes longer and more difficult proofs of concepts required throughout this work. Some concepts are results in themselves, and therefore ought not to be relegated to an appendix.

\begin{theorem}
\label{thm:multiNomBound}

Let $\{x_{i}\}_{i=1}^{l}\subset\mathbb{N}$ such that $min(x_{i}-\frac{\sum_{i=1}^{l}x_{i}}{l})\ge n$

$$\begin{pmatrix}
\sum_{i=1}^{l}x_{i}\\x_{1},...,x_{l}
\end{pmatrix}= \frac{l^{\sum_{i=1}^{l}x_{i}+\frac{l}{2}}}{\sqrt{2\pi \sum_{i=1}^{l}x_{i}}^{l-1}}e^{-\frac{l}{2\sum_{i=1}^{l}x_{i}}(\sum_{i=1}^{l}(x_{i}-\frac{\sum_{i=1}^{l}x_{i}}{l})^{2})+o(1)}$$

where $o(1)$ denotes some function that goes to zero as $n\rightarrow \infty$. An alternative approximation is the following

$$\sim e^{-\sum_{i=1}^{l}x_{i}ln(\frac{x_{i}}{\sum_{i=1}^{1}x_{i}})+o(\sum_{i=1}^{l}x_{i})}$$

relating the continuum multinomial to Shannon entropy.

\end{theorem}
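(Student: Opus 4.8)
The plan is to reduce everything to Stirling's formula $\ln m! = m\ln m - m + \tfrac12\ln(2\pi m) + O(1/m)$ applied to each of the $l+1$ factorials in the multinomial coefficient, followed by a Taylor expansion of the resulting entropy-type expression around the balanced composition. Throughout I write $N=\sum_{i=1}^l x_i$ and $\delta_i = x_i - N/l$, so that $\sum_i\delta_i=0$.

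First I would dispatch the second, cruder estimate, which needs only the leading Stirling terms. Writing $\ln\binom{N}{x_1,\dots,x_l} = \ln N! - \sum_i\ln x_i!$ and inserting $\ln m! = m\ln m - m + O(\ln m)$, the linear terms cancel because $\sum_i x_i = N$, leaving
\[
\ln\binom{N}{x_1,\dots,x_l} = N\ln N - \sum_i x_i\ln x_i + O(\ln N) = -\sum_i x_i\ln\frac{x_i}{N} + o(N),
\]
since every surviving prefactor contribution is $O(\ln N)$, hence $o(N)$. The right side is $N$ times the Shannon entropy of $(x_i/N)_i$, which is the second display.

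For the sharp Gaussian form I would retain the $\tfrac12\ln(2\pi m)$ terms. After the same cancellation one obtains $\ln\binom{N}{x_1,\dots,x_l} = -\sum_i x_i\ln(x_i/N) + \tfrac12\ln(2\pi N) - \tfrac{l}{2}\ln(2\pi) - \tfrac12\sum_i\ln x_i + O(1/n)$. Expanding $\ln(x_i/N) = \ln(\tfrac1l + \tfrac{\delta_i}{N}) = -\ln l + \tfrac{l\delta_i}{N} - \tfrac{l^2\delta_i^2}{2N^2} + \cdots$, multiplying by $x_i = N/l + \delta_i$, summing, and using $\sum_i\delta_i = 0$ collapses all $O(\delta_i)$ pieces and yields
\[
-\sum_i x_i\ln\frac{x_i}{N} = N\ln l - \frac{l}{2N}\sum_i\delta_i^2 + \frac{l^2}{2N^2}\sum_i\delta_i^3 - \cdots.
\]
For the prefactor, near the balanced point $\sum_i\ln x_i = l\ln(N/l) + O\!\big(\sum_i\delta_i^2/N^2\big)$, so $\tfrac12\ln(2\pi N) - \tfrac l2\ln(2\pi) - \tfrac12\sum_i\ln x_i = -\tfrac{l-1}{2}\ln(2\pi N) + \tfrac l2\ln l + o(1)$. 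Exponentiating and recombining, $e^{N\ln l}$ supplies $l^{N}$, the prefactor assembles into $l^{l/2}/\sqrt{2\pi N}^{\,l-1}$, and what remains is the Gaussian factor, giving exactly the first display.

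The main obstacle will be making the "$o(1)$" honest, i.e. uniform control of the error. The cubic and higher terms in the expansion of $-\sum x_i\ln(x_i/N)$ are of size $\sum_i|\delta_i|^3/N^2$, and the $\sum_i\ln x_i$ correction is $O(\sum_i\delta_i^2/N^2)$; so the asserted asymptotic genuinely requires $\max_i|\delta_i| = o(N^{2/3})$ (for fixed $l$), not merely $\min_i x_i\to\infty$, and I would either add this to the hypotheses or absorb the cubic term into a coarser error. One should also flag that the stated condition "$\min_i(x_i - N/l)\ge n$" has to be read as a growth regime rather than literally (since $\sum_i\delta_i=0$ forbids all $\delta_i$ equal to a fixed positive constant): the operative content is that $n\to\infty$ forces $\min_i x_i\to\infty$, which is what makes both the Stirling remainders $O(1/x_i)$ and the Taylor remainders negligible. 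The rest is bookkeeping — tracking the powers of $2\pi$ and $N$ through the two cancellations and confirming they collect into $l^{l/2}/\sqrt{2\pi N}^{\,l-1}$.
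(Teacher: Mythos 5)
The paper does not actually prove Theorem~\ref{thm:multiNomBound}: immediately after the statement it simply writes ``This theorem is given in \cite{344669} and the shannon entropy relationship in \cite{vigneaux2020homological}.'' So there is no internal proof to compare against; you are supplying a direct Stirling-based derivation where the paper only cites. Your derivation is correct and is the standard one: Stirling with the $\tfrac12\ln(2\pi m)$ correction, cancellation of the linear terms because $\sum x_i=N$, Taylor expansion of $-\sum x_i\ln(x_i/N)$ about the balanced point, and collection of the prefactor into $l^{l/2}/\sqrt{2\pi N}^{\,l-1}$. Your bookkeeping of the quadratic coefficient ($-\tfrac{l}{2N}+\tfrac{l}{N}=\tfrac{l}{2N}$, sign flipped by the overall minus) and of the $-\tfrac{l-1}{2}\ln(2\pi N)+\tfrac l2\ln l$ prefactor is right. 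The leading-order version for the Shannon-entropy form is also handled correctly.

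Your two criticisms of the statement are well taken and worth keeping visible. First, the hypothesis ``$\min_i(x_i - N/l)\ge n$'' is literally vacuous for $n>0$, since $\sum_i(x_i-N/l)=0$ forces the minimum of the centered values to be $\le 0$; it can only be read as shorthand for a growth regime in which $\min_i x_i\to\infty$ (or more plausibly, as a typo for $\min_i x_i\ge n$). Second, the $o(1)$ in the exponent is not free: the neglected cubic term is of order $\sum_i|\delta_i|^3/N^2$, so one needs $\max_i|\delta_i|=o(N^{2/3})$ (for fixed $l$) or, in the natural local-CLT regime, $\delta_i=O(\sqrt N)$. Neither restriction is stated in the theorem. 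Since the theorem is used elsewhere in the paper (e.g.\ Theorem~\ref{thm:conttails} and in the estimation of $G_p$ in Theorem~\ref{thm:k1}) precisely near the concentration peak where $\delta_i=O(\sqrt N)$, the intended use is sound, but the statement as written overclaims its range of validity, and your proposal is the place where that gap is made precise.
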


This theorem is given in \cite{344669} and the shannon entropy relationship in \cite{vigneaux2020homological}

\begin{theorem}
\label{thm:contconv}
The continuous multinomial

$$\begin{Bmatrix}\sum x_{i}\\x_{1},x_{2},...,x_{l}\end{Bmatrix}<\infty$$

is a finite and analytic function from $\mathbb{R}^{l}\rightarrow\mathbb{R}$.

\end{theorem}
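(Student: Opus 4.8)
The plan is to read off the volume $\mu(P(q,c))$ explicitly, show that the resulting series converges uniformly on compact subsets of the complexified parameter space $\mathbb{C}^{l}$, and then conclude by the Weierstrass theorem that the sum is entire, so that its restriction to $\mathbb{R}^{l}$ is real‑analytic. First I would recall from \cite{continuouslatticepath} that if $c\in D(n,l)$ is a Smirnov word in which letter $i$ occurs $\nu_{i}\ge 1$ times (so $\sum_{i}\nu_{i}=n$), then $P(q,c)$ is affinely equivalent to a product of simplices and $\mu(P(q,c))=\prod_{i}\sqrt{\nu_{i}}\,x_{i}^{\nu_{i}-1}/(\nu_{i}-1)!$, a monomial in $x_{1},\dots,x_{l}$ whose total degree is $n$ minus the number of letters used. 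Grouping the inner sum by the frequency vector $\nu$ and writing $N(\nu)$ for the number of Smirnov words with that frequency vector shows that $\sum_{c\in D(n,l)}\mu(P(q,c))$ is a polynomial in $x$; in particular every partial sum $\sum_{n\le N}$ is a polynomial, hence entire on $\mathbb{C}^{l}$.

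The key estimate is a uniform bound on the $n$‑th block over a polydisc $\{|x_{i}|\le R\}$. Using $N(\nu)\le\binom{n}{\nu_{1},\dots,\nu_{l}}\le l^{n}$ and $|x_{i}|\le R$ one obtains
\[
\Bigl|\sum_{c\in D(n,l)}\mu(P(q,c))\Bigr|\ \le\ l^{n}R^{-l}\sum_{\nu_{1}+\dots+\nu_{l}=n}\ \prod_{i=1}^{l} b_{\nu_{i}},\qquad b_{k}:=\frac{\sqrt{k}\,R^{k}}{(k-1)!}.
\]
Since $\sum_{\nu_{1}+\dots+\nu_{l}=n}\prod_{i}b_{\nu_{i}}$ is exactly the coefficient of $z^{n}$ in $g(z)^{l}$, where $g(z):=\sum_{k\ge 1}b_{k}z^{k}$ is entire (the ratio $b_{k+1}/b_{k}\to 0$) and has nonnegative coefficients, Cauchy's estimate on the circle $|z|=2l$ bounds that coefficient by $g(2l)^{l}(2l)^{-n}$. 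Hence $\bigl|\sum_{c\in D(n,l)}\mu(P(q,c))\bigr|\le R^{-l}g(2l)^{l}\,2^{-n}$, uniformly on the polydisc. Summing the geometric series over $n$ shows that the series defining the continuous multinomial converges absolutely and uniformly on every polydisc $\{|x_{i}|\le R\}$, hence on every compact subset of $\mathbb{C}^{l}$; in particular it is finite. The bookkeeping for words that omit some letters only costs at most a constant factor $2^{l}$ (replace $g(z)^{l}$ by $g(z)^{j}$, $j<l$) and is handled identically.

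Finally, a locally uniform limit of holomorphic functions is holomorphic, so the series sums to an entire function $\mathbb{C}^{l}\to\mathbb{C}$; its restriction to $\mathbb{R}^{l}$ is therefore real‑analytic, and it is real‑valued there because each term $N(\nu)\prod_{i}\sqrt{\nu_{i}}\,x_{i}^{\nu_{i}-1}/(\nu_{i}-1)!$ is real when the $x_{i}$ are real. This yields both assertions of the theorem. I expect the one genuinely load‑bearing step to be the estimate in the second paragraph: the number of Smirnov words grows exponentially (of order $(l-1)^{n}$), so the bound must exhibit super‑exponential decay of the monomial contributions, which is precisely what the passage to $g(z)^{l}$ and the Cauchy estimate extract from the factorials $\prod_{i}(\nu_{i}-1)!$. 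A variant avoiding generating functions is to insert Stirling's formula (in the form of Theorem~\ref{thm:multiNomBound}) directly and observe that $\prod_{i}(\nu_{i}-1)!$ forces the $n$‑th block to be $O((C_{R}/n)^{n})$; the volume formula and the Weierstrass argument are then routine in either route.
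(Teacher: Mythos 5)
Your proof is correct, and it takes a genuinely different route from the one in the paper. The paper's argument is recursion-based: it invokes the Wakhare--Vignat--Le--Robins result that the operator $\prod_{i=1}^{l}(1+\partial_{x_{i}})-\sum_{i=1}^{l}\prod_{j\neq i}(1+\partial_{x_{j}})$ annihilates the continuous multinomial, converts this into a linear recursion among the Taylor coefficients $a_{i_{1},\dots,i_{l}}$ at the origin, and propagates a factorial-type bound through the recursion to obtain an infinite radius of convergence. You instead estimate the defining polytope-volume series directly, block by block: the product-of-simplices formula $\mu(P(q,c))=\prod_{i}\sqrt{\nu_{i}}\,x_{i}^{\nu_{i}-1}/(\nu_{i}-1)!$ reduces the $n$-th block to a finite sum over frequency vectors, the crude word count $N(\nu)\le l^{n}$ together with a generating-function Cauchy estimate gives a uniform $O(2^{-n})$ bound on any polydisc in $\mathbb{C}^{l}$, and Weierstrass then delivers an entire limit whose restriction to $\mathbb{R}^{l}$ is real-analytic. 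Your route is more elementary and self-contained --- it avoids the recursion bookkeeping and any appeal to the PDE --- and it proves the slightly stronger statement that the continuous multinomial extends holomorphically to all of $\mathbb{C}^{l}$, with explicit locally uniform bounds as a byproduct; the paper's route exposes the PDE structure, which it then reuses in later arguments. The genuinely load-bearing step is the same in both cases, namely that the factorial denominators in the volume formula overwhelm the exponentially many Smirnov words; your passage to the coefficient of $z^{n}$ in $g(z)^{l}$ isolates this cleanly, and your remark about the $2^{l}$ bookkeeping for words omitting letters correctly disposes of the remaining technicality.
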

\begin{proof}

In \cite{continuouslatticepath}, they demonstrate that Eq~\ref{eq:contmult} becomes the following

\begin{equation}
    \label{eq:contmults1}
    \begin{Bmatrix}\sum x_{i}\\x_{1},x_{2},...,x_{l}\end{Bmatrix}=\sum_{\eta_{1},...,\eta_{l}=0}^{\infty}(f_{\nu_{1},...,\nu_{l}}\Pi_{k=1}^{l}(\frac{\sqrt{\nu_{k}}x_{k}^{\nu_{k}}}{\nu_{k}!}))
\end{equation}

where $f_{v_{1},...,v_{n}}$ counts Smirnov words with frequency vector $v_{i}$. This vector denotes how many of each letter occurs in a Smirnov word. This coefficient has a nice generation function, and in \cite{continuouslatticepath} it was shown that Eq~\ref{eq:contmults1} without the root terms has nice analytic properties. Namely, it satisfies Equation~\ref{eq:contmults2}. 

\begin{equation}
    \label{eq:contmults2}
    (\Pi_{i=1}^{l}(1+\partial_{x_{i}})-\sum_{i=1}^{l}\Pi_{j\neq i}(1+\partial_{x_{j}}))\begin{Bmatrix}\sum x_{i}\\x_{1},x_{2},...,x_{l}\end{Bmatrix}=0
\end{equation}

We will use this to obtain an absolutely convergent multidimensional taylor series at 0 for the continuous multinomial (with an infinite radius of convergence). This will demonstrate the claim. Suppose we let $a_{i_{1},...,i_{l}}\in \mathbb{R}$ denote the taylor series coefficients for the continuous multinomial; i.e. we have Equation~\ref{eq:contmults3}.

\begin{equation}
    \label{eq:contmults3}
    \begin{Bmatrix}\sum x_{i}\\x_{1},x_{2},...,x_{l}\end{Bmatrix}=\sum a_{i_{1},...,i_{l}}\Pi_{k=1}^{l}x_{k}^{i_{k}}
\end{equation}

We note that from the work of \cite{continuouslatticepath} (expressing Equation~\ref{eq:contmults1} sans square root as a Borel transform) that the following coefficients are immediate:

$$a_{0,...,0}=1$$

The non-mixed terms for $x_{1}$ in Equation~\ref{eq:contmults2} are first order; they give us a recursion relation for $a_{i,0,...,0}$:

$$((1-l)+(2l-l^{2})\partial_{x_{1}})\sum_{i_{1}=1}a_{i_{1},...,0}x_{1}^{i_{1}}=0$$

This gives us the recursion relation $a_{i,...,0}=\frac{(l-1)}{(2l-l^{2})i}a_{i-1,...,0}$ with the solution:

$$a_{i,...,0}=\frac{(\frac{l-1}{2l-l^{2}})^{i}}{i!}$$

for $i>0$. This has an infinite radius of convergence in its one direction. To proceed further, we acknowledge the symmetries of the function, and our differential equation gives us $a_{i_{1},...,i_{l}}=a_{\sigma(i_{1},...,i_{l})}$ where $\sigma\subset S_{l}$ is an arbitrary permutation of our indices. So, we need only solve $l$ consecutive partial recursion relations to obtain our full solution. Namely, we need a recursion relation for arbitrary terms of the form $a_{i_{1},...,i_{j},0,..,0}$, where terms like $a_{i_{1},...,i_{j-1},0,..,0}$ provide boundary conditions. From Equation~\ref{eq:contmults2}, we can list the lowest order differential operators with terms with $\{i_{1},...,i_{j}\}$ in them

$$(\begin{pmatrix}l\\j\end{pmatrix}-(l-j)\begin{pmatrix}l-1\\j\end{pmatrix})\Pi_{k=1}^{i}\partial_{x_{k}}$$

which, when applied to our taylor series, gives us terms as in Equation~\ref{eq:middlePart}.

\begin{equation}
    \label{eq:middlePart}
    (\begin{pmatrix}l\\j\end{pmatrix}-(l-j)\begin{pmatrix}l-1\\j\end{pmatrix})(\Pi_{k=1}^{j}i_{k})a_{i_{1},i_{2},...,i_{j},0,...,0}
\end{equation}

Now, to find the partial recurrence relation for $a_{i_{1},i_{2},...,i_{j},0,...,0}$, we need to combine all terms in Equation~\ref{eq:contmults1}, which have only differential operators from $i_{1}\rightarrow i_{j}$. If we are missing an index, then that corresponds to an $a$ term with the decremented index itself. Each of these contributions to the differential operators will look like the expression from Equation~\ref{eq:middlePart}. This gives us the partial recursion relation Equation~\ref{eq:finalRecurse} for $a_{i_{1},...,i_{j},0,..,0}$, where $Q$ runs over the number of indices missing from some differential term.

\begin{multline}
    \label{eq:finalRecurse}    
    (\begin{pmatrix}l\\j\end{pmatrix}-(l-j)\begin{pmatrix}l-1\\j\end{pmatrix})(\Pi_{k=1}^{j}i_{k})a_{i_{1},i_{2},...,i_{j},0,...,0}
    \\
    +\sum_{Q=1}^{l}\sum_{\{q_{i}\}_{i=1}^{Q},q_{i}\in \{1,...,j\},q_{i}=q_{j}\implies i=j}(C_{l,j,Q})(\Pi_{k=1,k\not\in\{q_{1},..., q_{Q}\}}i_{k})a_{i_{1},...,i_{q_{1}}-1,...,i_{q_{Q}}-1,...,i_{j},...,0}=0   
    \\
    \textrm{ where }C_{l,j,Q}=\begin{pmatrix}l\\j-Q\end{pmatrix}-(l-j+Q)\begin{pmatrix}l-1\\j-Q\end{pmatrix}
\end{multline}

This gives us an algorithm to calculate $a_{i_{1},...,i_{l}}$ for any index, by running through a higher order recursion relation until we hit zero in some index of $a$, and then moving to a lower recursion relation until we reach $a_{0,...,0}=1$. Our goal now is to use Equation~\ref{eq:finalRecurse} to bound the taylor terms into some absolutely convergent series as we have for $a_{i,0,...,0}$ when we evaluate the continuous multinomial along only a single coordinate. Rewriting Equation~\ref{eq:finalRecurse}, we obtain Equation~\ref{eq:simpleRecurse}.

\begin{multline}
    \label{eq:simpleRecurse}    
    a_{i_{1},i_{2},...,i_{j},0,...,0}
    \\
    =-\sum_{Q=1}^{l}\sum_{\{q_{i}\}_{i=1}^{Q},q_{i}\in \{1,...,j\},q_{i}=q_{j}\implies i=j}(\frac{C_{l,j,Q}}{C_{l,j,0}})(\Pi_{k=1,k\in\{q_{1},..., q_{Q}\}}i_{k}^{-1})a_{i_{1},...,i_{q_{1}}-1,...,i_{q_{Q}}-1,...,i_{j},...,0}   
    \\
    \textrm{ where }C_{l,j,Q}=\begin{pmatrix}l\\j-Q\end{pmatrix}-(l-j+Q)\begin{pmatrix}l-1\\j-Q\end{pmatrix}
\end{multline}

Now, we can use the above recursion relation to obtain bounds on the taylor series terms. We have a first bound in Equation~\ref{eq:simpleRecurse2}.
\tiny
\begin{multline}
    \label{eq:simpleRecurse2}    
    |a_{i_{1},i_{2},...,i_{j},0,...,0}|
    \\
    \le \sum_{Q=1}^{l}\sum_{\{q_{i}\}_{i=1}^{Q},q_{i}\in \{1,...,j\},q_{i}=q_{j}\implies i=j}\left|(\frac{C_{l,j,Q}}{C_{l,j,0}})\right|(\Pi_{k=1,k\in\{q_{1},...,q_{Q}\}}i_{k}^{-1})\left |a_{i_{1},...,i_{q_{1}}-1,...,i_{q_{Q}}-1,...,i_{j},...,0}\right|   
    \\
    \textrm{ where }C_{l,j,Q}=\begin{pmatrix}l\\j-Q\end{pmatrix}-(l-j+Q)\begin{pmatrix}l-1\\j-Q\end{pmatrix}
\end{multline}
\normalsize
Exhaustively applying our recursion relation Equation~\ref{eq:simpleRecurse}, we obtain Equation~\ref{eq:simpleRecurse3}.
\tiny
\begin{multline}
    \label{eq:simpleRecurse3}    
    |a_{i_{1},i_{2},...,i_{j},0,...,0}|\le
    \\
    \frac{1}{\Pi_{k=1}^{j}i_{k}!}\sum_{Q=1}^{l}\sum_{\{q_{i}\}_{i=1}^{Q},q_{i}\in \{1,...,j\},q_{i}=q_{j}\implies i=j}\left|(\frac{C_{l,j,Q}}{C_{l,j,0}})\right|(\sum_{Q=1}^{l}(\sum_{\{q_{i}\}_{i=1}^{Q},q_{i}\in \{1,...,j'\},q_{i}=q_{j}\implies i=j}\left|(\frac{C_{l,j',Q}}{C_{l,j',0}})\right|...)))   
    \\
    \textrm{ where }C_{l,j,Q}=\begin{pmatrix}l\\j-Q\end{pmatrix}-(l-j+Q)\begin{pmatrix}l-1\\j-Q\end{pmatrix}
\end{multline}
\normalsize
Each of these sums ends in $a_{0,...,0}=1$. The fraction of $C_{l,j,Q}$ can be bounded absolutely by a constant in terms of l, and we can consume each intermediate double sum by that constant as well. Let us call that constant $C_{l}$. Then, we have a factor $C_{l}^{\Pi_{k=1}^{j}i_{k}}$ bounding this whole series, giving us the final bound in Equation~\ref{eq:finalBound}.

\begin{equation}
    \label{eq:finalBound}    
    |a_{i_{1},i_{2},...,i_{j},0,...,0}|\le \frac{C_{l}^{\Pi_{k=1}^{j}i_{k}}}{\Pi_{k=1}^{j}i_{k}!}
\end{equation}

So, this nice modified expression is analytic and finite at all values. The original expression for the continuous multinomial coefficient (in terms of volumes) has an extra factor of $\Pi_{k=1}^{l}\sqrt{i_{k}}$ that multiplies each $a_{i_{1},...,i_{l}}$. It is clear from the above work that this would not affect the infinite radius of convergence, and we have demonstrated the well-definedness of our continuous multinomial coefficient. Since our series converges absolutely, the other expression for the multinomial (in Equation~\ref{eq:contmult}) may be a permutation of the series obtained above. It will converge to the same value because of our absolute convergence, and we can alternatively adopt either expression. It shall be important to adopt the more geometric expression in the coming proof.

\end{proof}

\begin{theorem}{The Continuous Multinomial is a Limit of the Discrete One}
\label{thm:disctocont}

Let $\{x_{1}\}_{i=1}^{l}\subset\mathbb{R}_{+}$. Then we have

$$lim_{m\rightarrow\infty}\frac{\mathcal{T}^{m}_{cont}\begin{pmatrix}\sum [m x_{i}]\\ [mx_{1}],..., [mx_{l}]\end{pmatrix}}{\mathcal{T}^{m}_{cont}\begin{pmatrix}\sum_{i}\lfloor nx_{i}\rfloor\\\lfloor \frac{\sum_{i} nx_{i}}{l}\rfloor,...,\lfloor\frac{\sum_{i} nx_{i}}{l}\rfloor\end{pmatrix}}\rightarrow \frac{\begin{Bmatrix}\sum_{i}x_{i}\\x_{1},...,x_{l}\end{Bmatrix}}{\begin{Bmatrix}\sum_{i}x_{i}\\ \frac{\sum_{i} nx_{i}}{l},...,\frac{\sum_{i} nx_{i}}{l}\end{Bmatrix}}$$

This will also prove other forms of convergence (here we normalize by the maxima, we could also normalize by the integral of each expression).

\end{theorem}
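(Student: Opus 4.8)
The plan is to pass through the Smirnov-word decomposition of the multinomial coefficient used in \cite{continuouslatticepath} and then observe that $\mathcal{T}^{m}_{cont}$ is exactly the rescaling that converts a lattice-point count into the corresponding polytope volume. Recall that a word of length $N=\sum_{i=1}^{l}k_{i}$ over an $l$-letter alphabet with frequency vector $(k_{1},\dots,k_{l})$ is determined by its reduced Smirnov word $c=c_{1}c_{2}\cdots c_{n}\in D(n,l)$ together with a choice of run lengths $(\lambda_{1},\dots,\lambda_{n})$ with $\lambda_{j}\ge 1$ and $\sum_{j:\,c_{j}=i}\lambda_{j}=k_{i}$ for each $i$. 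Hence
\[
\begin{pmatrix}\sum_{i=1}^{l}k_{i}\\ k_{1},\dots,k_{l}\end{pmatrix}=\sum_{n\ge l}\ \sum_{c\in D(n,l)}\#\!\left(P(k,c)\cap\mathbb{Z}^{n}\right),
\]
where $P(k,c)$ is the path polytope of Equation~\ref{eq:pathpoly}; it is a product over the letters $i$ of scaled simplices of dimension $\nu_{i}-1$, $\nu_{i}:=\#\{j:c_{j}=i\}$, so $\dim P(k,c)=n-l$. The continuous multinomial of Equation~\ref{eq:contmult} is the same double sum with $\#(P\cap\mathbb{Z}^{n})$ replaced by $\mu(P)$, the $(n-l)$-dimensional volume in the lattice-compatible normalization of \cite{continuouslatticepath}.

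Next, read $\begin{pmatrix}\sum[mx_{i}]\\ [mx_{1}],\dots,[mx_{l}]\end{pmatrix}$ as a sum $\sum_{\gamma}f(\gamma)$ in which $\gamma$ runs over these words, $f\equiv 1$ (radial part $1$, angular part $0$), and $|\gamma|=n$ is the number of runs, i.e.\ the number of distinct linear segments. By the definition of $\mathcal{T}^{m}_{cont}$ in Equation~\ref{eq:operator} (here $d=l$, the minimal run count) the block with $|\gamma|=n$ gets multiplied by $m^{l-n}$, so
\[
\mathcal{T}^{m}_{cont}\begin{pmatrix}\sum_{i}[mx_{i}]\\ [mx_{1}],\dots,[mx_{l}]\end{pmatrix}=\sum_{n\ge l}\ \sum_{c\in D(n,l)}m^{l-n}\,\#\!\left(P([mx],c)\cap\mathbb{Z}^{n}\right).
\]
Since $P([mx],c)$ is the $m$-dilate of $P(x,c)$ up to an $O(1)$ boundary shift, and each factor of $P(x,c)$ is a simplex cut out by the elementary linear forms $\lambda_{j}\ge0$ and $\sum_{c_{j}=i}\lambda_{j}=x_{i}$, a one-line Riemann-sum (leading-order Ehrhart) argument gives $\#(P([mx],c)\cap\mathbb{Z}^{n})=m^{n-l}\mu(P(x,c))+o(m^{n-l})$. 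Thus each summand satisfies $m^{l-n}\#(P([mx],c)\cap\mathbb{Z}^{n})\to\mu(P(x,c))$, so once the interchange of limit and sums is licensed the numerator tends to $\begin{Bmatrix}\sum_{i}x_{i}\\ x_{1},\dots,x_{l}\end{Bmatrix}$, which is finite by Theorem~\ref{thm:contconv}. The same computation with every $x_{i}$ replaced by $\tfrac1l\sum_{i}x_{i}$ gives the denominator, whose limit $\begin{Bmatrix}\sum_{i}x_{i}\\ \tfrac1l\sum_{i}x_{i},\dots,\tfrac1l\sum_{i}x_{i}\end{Bmatrix}$ is strictly positive (indeed it is the maximal value of the continuous multinomial with that top argument, by Theorem~\ref{thm:multiNomBound}); dividing the two limits yields the claimed identity.

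The only delicate point — and the one I expect to be the main obstacle — is justifying $\lim_{m\to\infty}\sum_{n}(\cdots)=\sum_{n}\lim_{m\to\infty}(\cdots)$, i.e.\ dominated convergence for the series in $n$. For this I would bound $W_{n}(m):=\sum_{c\in D(n,l)}\#(P([mx],c)\cap\mathbb{Z}^{n})$, the number of length-$N_{m}$ words (with $N_{m}=\sum_{i}[mx_{i}]$ and the prescribed frequencies) having exactly $n$ runs, crudely from above by $l(l-1)^{n-1}\binom{N_{m}-1}{n-1}$ (choose the Smirnov pattern, then the run lengths, forgetting the frequency constraint). Then
\[
m^{l-n}W_{n}(m)\ \le\ \frac{l\,m}{N_{m}}\,\frac{\big((l-1)N_{m}/m\big)^{n}}{(n-1)!},
\]
and since $(l-1)N_{m}/m\to(l-1)\sum_{i}x_{i}$, the right-hand side is, for all large $m$, dominated by a fixed summable sequence $b_{n}=C\,(1+(l-1)\sum_{i}x_{i})^{n}/(n-1)!$; in particular the tail $\sum_{n>M}m^{l-n}W_{n}(m)$ is small uniformly in large $m$. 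This supplies the dominating function and closes the interchange (the identical bound in the balanced case controls the denominator). Everything else is the bookkeeping of the Smirnov decomposition and the standard polytope Riemann-sum limit.
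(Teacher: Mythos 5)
Your approach is essentially the one the paper takes: decompose the discrete multinomial over Smirnov words into lattice-point counts of the path polytopes $P(k,c)$, observe that $\mathcal{T}^m_{cont}$ supplies the $m^{l-n}$ factor, and pass to volumes term by term. To your credit, you explicitly flag the exchange $\lim_m\sum_n=\sum_n\lim_m$ as the delicate step; the paper's own proof (Eqs.~\ref{eq:limit}--\ref{eq:limit7}) performs the term-by-term limit without comment, so you identified a real gap that the paper itself does not close.

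However, the domination bound you supply does not actually work for $l\ge 2$. Starting from your crude estimate $W_n(m)\le l(l-1)^{n-1}\binom{N_m-1}{n-1}\le l(l-1)^{n-1}N_m^{n-1}/(n-1)!$, one has
\begin{equation*}
m^{l-n}W_n(m)\ \le\ l\,m^{l-1}\,\frac{\bigl((l-1)N_m/m\bigr)^{n-1}}{(n-1)!},
\end{equation*}
not the $\frac{lm}{N_m}\frac{((l-1)N_m/m)^n}{(n-1)!}$ you wrote; you have dropped a factor of $m^{l-1}/(l-1)$. For fixed $n$ this bound diverges with $m$ whenever $l\ge 2$, so it cannot serve as a dominating sequence. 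The source of the trouble is precisely that your count $l(l-1)^{n-1}\binom{N_m-1}{n-1}$ forgets the frequency constraint $\sum_{c_j=i}\lambda_j=[mx_i]$; that constraint is what pins each path polytope to an $(n-l)$-dimensional flat rather than an $(n-1)$-dimensional one, and hence what reduces the lattice-point count by the factor $\sim m^{l-1}$ that you need. To get a usable bound you would have to incorporate the frequency constraint -- e.g.\ observe that $P([mx],c)\cap\mathbb{Z}^n$ injects into the integer points of $[0,\max_i[mx_i]]^{n-l}$ after eliminating one run length per letter, giving $\#(P([mx],c)\cap\mathbb{Z}^n)\le(\max_i[mx_i]+1)^{n-l}$, and then sum over the $\le l(l-1)^{n-1}$ Smirnov words. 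With $\max_i[mx_i]/m\to\max_i x_i$ this yields $m^{l-n}W_n(m)\le C\,l(l-1)^{n-1}(\max_i x_i+1)^{n-l}$, which still grows geometrically in $n$ and so does not by itself dominate; you would additionally need a tail estimate matching the super-exponential decay built into the continuous multinomial (Theorem~\ref{thm:contconv}), along the lines of bounding $W_n(m)$ by the true multinomial value. In short: your instinct that uniform integrability is the obstacle is correct and is a genuine improvement in rigor over the paper, but the specific bound proposed is too lossy to deliver it.
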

\begin{proof}

Consider the polyhedron $P(q,c)$ used in the expression in Equation~\ref{eq:contmult}. As a polyhedron, it lies in the intersection of Lebesgue and Reimmanian measurable sets, so we have Equation~\ref{eq:limit}.

\begin{equation}
    \label{eq:limit}
    \mu(P(q,c))=lim_{m\rightarrow\infty}(|\{\vec{v}\in\mathbb{Z}^{n-d}\textrm{ and }\frac{\vec{v}}{m}\in P(q,c)\}|*(\frac{1}{m})^{n-d})
\end{equation}

The fact that $P(q,c)$ is an $n-l$ dimensional polyhedron, where n refers to the number of steps allowed in a directed path, is a fact made apparent by \cite{continuouslatticepath}. We will just use it here. Let $\{x_{i}\}_{i=1}^{l}\subset \mathbb{N}$. We note that the multinomial function is well known to be equal to the number of monotonic paths with steps among $\{e_{i}\}_{i=1}^{l}$ from zero to $\sum x_{i}$. Therefore, we would have Equation~\ref{eq:limit2}

\begin{equation}
    \label{eq:limit2}
    \begin{pmatrix}
    \sum x_{i}\\x_{1},...,x_{l}
    \end{pmatrix}=\sum_{n=0}^{\sum x_{i}-1}\sum_{c\in D(n,l)}|\{\vec{v}\in\mathbb{Z}^{n-d}\cap P(\sum_{i=1}^{l}x_{i}\vec{e}_{i},c)\}|
\end{equation}

Now, let $\{x_{i}\}_{i=1}^{l}\in \mathbb{R}_{+}$. Plugging in $m x_{i}$ into Equation~\ref{eq:limit2} we get Equation~\ref{eq:limit3}.

\begin{equation}
    \label{eq:limit3}
    \begin{pmatrix}\sum [m x_{i}]\\ [mx_{1}],..., [mx_{l}]\end{pmatrix}
    =\sum_{n=0}^{\sum [m x_{i}]-1}\sum_{c\in D(n,l)}|\{\vec{v}\in\mathbb{Z}^{n-d}\cap P(\sum_{i=1}^{l}[mx_{i}]\vec{e}_{i},c)\}|
\end{equation}

This becomes Equation~\ref{eq:limit4}.

\begin{equation}
    \label{eq:limit4}
    \begin{pmatrix}\sum [m x_{i}]\\ [mx_{1}],..., [mx_{l}]\end{pmatrix}
    =m^{-d}\sum_{n=0}^{\sum [m x_{i}]-1}(\sum_{c\in D(n,l)}\frac{|\{\vec{v}\in\mathbb{Z}^{n-l}\cap P(\sum_{i=1}^{l}[mx_{i}]\vec{e}_{i},c)\}|}{m^{n-d}})m^{n}
\end{equation}

Now, we want to introduce the fraction we have from the start of this theorem's statement. When we do so, Equation~\ref{eq:limit4} becomes Equation~\ref{eq:limit5}.

\begin{equation}
    \label{eq:limit5}
    \frac{\begin{pmatrix}\sum [m x_{i}]\\ [mx_{1}],..., [mx_{l}]\end{pmatrix}}{\begin{pmatrix}\sum_{i}\lfloor nx_{i}\rfloor\\\lfloor \frac{\sum_{i} nx_{i}}{l}\rfloor,...,\lfloor\frac{\sum_{i} nx_{i}}{l}\rfloor\end{pmatrix}}
    =\frac{\sum_{n=0}^{\sum [m x_{i}]-1}(\sum_{c\in D(n,l)}\frac{|\{\vec{v}\in\mathbb{Z}^{n-l}\cap P(\sum_{i=1}^{l}[mx_{i}]\vec{e}_{i},c)\}|}{m^{n-l}})m^{n}}{\sum_{n=0}^{\sum [m x_{i}]-1}(\sum_{c\in D(n,l)}\frac{|\{\vec{v}\in\mathbb{Z}^{n-l}\cap P(\sum_{j=1}^{l}\frac{(\sum_{i=1}^{l}[mx_{i}])}{l}\vec{e}_{j},c)\}|}{m^{n-l}})m^{n}}
\end{equation}

Employing Equation~\ref{eq:limit} into Equation~\ref{eq:limit5}, we get Equation~\ref{eq:limit6}

\begin{equation}
    \label{eq:limit6}
    \frac{\begin{pmatrix}\sum [m x_{i}]\\ [mx_{1}],..., [mx_{l}]\end{pmatrix}}{\begin{pmatrix}\sum_{i}\lfloor nx_{i}\rfloor\\\lfloor \frac{\sum_{i} nx_{i}}{l}\rfloor,...,\lfloor\frac{\sum_{i} nx_{i}}{l}\rfloor\end{pmatrix}}
    \rightarrow \frac{\sum_{n=0}^{\sum [m x_{i}]-1}(\sum_{c\in D(n,l)}\mu P(\sum x_{i}e_{i},c))m^{n}}{\sum_{n=0}^{\sum [m x_{i}]-1}(\sum_{c\in D(n,l)}\mu P(\sum_{j=1}^{l}\frac{\sum_{i=1}^{l}x_{i}}{l}\vec{e}_{j},c))m^{n}}
\end{equation}

We apply the series transform $\mathcal{T}_{cont}$ described in Section~\ref{section:def} to this series; in order to reweight the series back towards less segmented paths. This gives us Equation~\ref{eq:limit7}.

\begin{equation}
    \label{eq:limit7}
    lim_{m\rightarrow\infty}\frac{\mathcal{T}^{m}_{cont}\begin{pmatrix}\sum [m x_{i}]\\ [mx_{1}],..., [mx_{l}]\end{pmatrix}}{\mathcal{T}^{m}_{cont}\begin{pmatrix}\sum_{i}\lfloor nx_{i}\rfloor\\\lfloor \frac{\sum_{i} nx_{i}}{l}\rfloor,...,\lfloor\frac{\sum_{i} nx_{i}}{l}\rfloor\end{pmatrix}}
    \rightarrow \frac{\begin{Bmatrix}\sum_{i}x_{i}\\x_{1},...,x_{l}\end{Bmatrix}}{\begin{Bmatrix}\sum_{i}x_{i}\\ \frac{\sum_{i} x_{i}}{l},...,\frac{\sum_{i} x_{i}}{l}\end{Bmatrix}}
\end{equation}

\end{proof}

\begin{theorem}
\label{thm:conttails}

Let $\{x_{i}\}_{i=1}^{l}\in \mathbb{R}$ where $min(x_{i}-\frac{\sum_{i=1}^{n}x_{i}}{l})>R\in\mathbb{R}_{+}$. Then


$$\begin{Bmatrix}
\sum_{i=1}^{l}x_{i}\\x_{1},...,x_{l}
\end{Bmatrix}= \frac{l^{\sum_{i=1}^{l}x_{i}+\frac{l}{2}}}{\sqrt{2\pi \sum_{i=1}^{l}x_{i}}^{l-1}}e^{-\frac{l}{2\sum_{i=1}^{l}x_{i}}(\sum_{i=1}^{l}(x_{i}-\frac{\sum_{i=1}^{l}x_{i}}{l})^{2})+o(1)}$$

This implies that should $\sum_{i=1}^{l}x_{i}$ be some constant large number, then its Schwartz class \cite{1972iv}, so all derivatives of it have Fourier transforms which are Schwartz class.

\end{theorem}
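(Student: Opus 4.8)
The plan is to transport the Stirling-type asymptotics of the \emph{discrete} multinomial coefficient, recorded in Theorem~\ref{thm:multiNomBound}, to the continuous one by way of the identification of the continuous multinomial as a normalized scaling limit of discrete multinomials (Theorem~\ref{thm:disctocont}), and then to read off the Schwartz-class statement from the resulting Gaussian decay together with the analyticity of Theorem~\ref{thm:contconv}. Write $S=\sum_{i=1}^{l}x_{i}$. First I would record the value at the balanced configuration: applying the expansion of Theorem~\ref{thm:multiNomBound} to the denominator that appears in Theorem~\ref{thm:disctocont} — equivalently, a one-dimensional Laplace analysis of the diagonal of the series $\sum_{\nu_{1},\dots,\nu_{l}\ge 1}f_{\nu_{1},\dots,\nu_{l}}\prod_{k}\tfrac{x_{k}^{\nu_{k}}}{\nu_{k}!}$ of Equation~\ref{eq:contmults1}, using the Smirnov-word generating function $\sum_{\vec{\nu}}f_{\vec{\nu}}\prod_{k}t_{k}^{\nu_{k}}=\big(1-\sum_{k}\tfrac{t_{k}}{1+t_{k}}\big)^{-1}$ — gives
$$\begin{Bmatrix}S\\ S/l,\dots,S/l\end{Bmatrix}=\frac{l^{\,S+l/2}}{\sqrt{2\pi S}^{\,l-1}}\,(1+o(1)).$$

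Next I would extract the Gaussian factor. By Theorem~\ref{thm:disctocont},
$$\frac{\begin{Bmatrix}S\\x_{1},\dots,x_{l}\end{Bmatrix}}{\begin{Bmatrix}S\\ S/l,\dots,S/l\end{Bmatrix}}=\lim_{m\to\infty}\frac{\mathcal{T}_{cont}^{m}\begin{pmatrix}\sum[mx_{i}]\\ [mx_{1}],\dots,[mx_{l}]\end{pmatrix}}{\mathcal{T}_{cont}^{m}\begin{pmatrix}\sum[mx_{i}]\\ \lfloor mS/l\rfloor,\dots,\lfloor mS/l\rfloor\end{pmatrix}}.$$
Because $\min_{i}(mx_{i}-mS/l)=m\,\min_{i}(x_{i}-S/l)\to\infty$, the hypothesis of Theorem~\ref{thm:multiNomBound} holds for all large $m$, so numerator and denominator inside the limit both admit the quoted Stirling expansion, and the ratio of those expansions is $\exp\!\big(-\tfrac{l}{2S}\sum_{i}(x_{i}-S/l)^{2}+o(1)\big)$, with the error uniform over $\{\,\min_{i}(x_{i}-S/l)>R\,\}$ and tending to $0$ as $R\to\infty$. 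The point that needs checking is that the reweighting $\mathcal{T}_{cont}^{m}$ — which multiplies the level-$n$ term of the segment-indexed expansion of a multinomial by $m^{l-n}$ — is compatible with this passage: running the lattice-point-to-volume argument used in the proof of Theorem~\ref{thm:disctocont} (Equations~\ref{eq:limit}--\ref{eq:limit7}), $\mathcal{T}_{cont}^{m}$ turns each segment level into the corresponding path-polytope volume, so the limit of the ratio is precisely the ratio of continuous multinomials, while the tail in $n$ is dominated uniformly in $m$ by the coefficient bound $|a_{i_{1},\dots,i_{l}}|\le C_{l}^{\,\prod i_{k}}/\prod i_{k}!$ of Theorem~\ref{thm:contconv}, licensing the interchange of limit, re-summation and asymptotic expansion. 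Multiplying the result by the balanced value of the previous paragraph yields the displayed formula. A self-contained alternative would be a multidimensional steepest-descent evaluation of the Borel-type contour representation $\frac{1}{(2\pi i)^{l}}\oint\!\cdots\!\oint\big(1-\sum_{k}\tfrac{t_{k}}{1+t_{k}}\big)^{-1}\prod_{k}e^{x_{k}/t_{k}}\tfrac{dt_{k}}{t_{k}}$ of the series above (the $\sqrt{\nu_{k}}$ factors of the full coefficient absorbed as in Theorem~\ref{thm:contconv}), with the relevant saddle sitting near the singular variety $\sum_{k}t_{k}/(1+t_{k})=1$; this reproduces the same Gaussian but is more computational.

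For the Schwartz conclusion, fix $S$ large and regard $\begin{Bmatrix}S\\x_{1},\dots,x_{l}\end{Bmatrix}$ as a function of $(x_{1},\dots,x_{l-1})$ after substituting $x_{l}=S-\sum_{i<l}x_{i}$. It is real-analytic, hence $C^{\infty}$, by Theorem~\ref{thm:contconv}, and it vanishes whenever some $x_{i}<0$ because the defining polytopes $P(q,c)$ are then empty; hence, in the variables in which the Fourier transforms of this paper are taken, it is smooth and — by the formula just proved, or already by compact support on the simplex $\{x_{i}\ge 0\}$ — decays faster than any polynomial, together with all of its derivatives (differentiation of the analytic representation only multiplies by polynomially bounded factors, again controlled by the Taylor bound of Theorem~\ref{thm:contconv}). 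Therefore it is Schwartz class \cite{1972iv}, and the remaining assertions are the standard facts that the Fourier transform of a Schwartz function is Schwartz and that every derivative of a Schwartz function is Schwartz.

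The genuine obstacle is the middle step: showing that the reweighting $\mathcal{T}_{cont}^{m}$ together with the limit $m\to\infty$ does not disturb the Gaussian asymptotics, i.e. that the many-segment tail of the multinomial series can be discarded uniformly over the regime $\min_{i}(x_{i}-S/l)>R$. This is exactly where the absolute convergence and the quantitative Taylor estimates of Theorem~\ref{thm:contconv} are indispensable; without them the interchange of limit, re-summation and Stirling expansion is not justified.
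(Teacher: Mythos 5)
Your proposal is correct in its main ideas and leans on the same three supporting results as the paper (Theorem~\ref{thm:multiNomBound}, Theorem~\ref{thm:disctocont}, Theorem~\ref{thm:contconv}), but it organizes the argument along a genuinely different route. The paper's proof does not pass through the ratio of Theorem~\ref{thm:disctocont} nor a separate balanced-value computation: it instead takes the Taylor coefficients $a_{i_{1},\dots,i_{l}}$ of the asymptotic function appearing in Theorem~\ref{thm:multiNomBound}, observes that $\mathcal{T}_{cont}^{m}$ applied to the discrete multinomial at arguments $[mx_{1}],\dots,[mx_{l}]$ reorganizes, by the segment number (equivalently the total degree in the Taylor expansion), into a series in those same coefficients evaluated at $[mx_{j}]/m$, and then pushes the limit $m\to\infty$ past the sum using the absolute convergence bound from Theorem~\ref{thm:contconv}, landing directly on the asymptotic function evaluated at $(x_{1},\dots,x_{l})$. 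Your route instead applies the discrete Stirling formula to both the numerator and the denominator of Theorem~\ref{thm:disctocont} and multiplies the resulting Gaussian ratio by an independently determined balanced value $\begin{Bmatrix}S\\ S/l,\dots,S/l\end{Bmatrix}$; the Laplace/Smirnov-generating-function sketch you give for that balanced value is plausible but is an additional step the paper's coefficient-matching argument avoids, and if you are not careful it risks circularity (the balanced value is itself a special case of the statement being proved). Your diagnosis of the main obstacle — uniformity of the Stirling error under the reweighting $\mathcal{T}_{cont}^{m}$ together with the $m\to\infty$ limit — is exactly right, and is precisely where the paper invokes Theorem~\ref{thm:contconv} to license interchanging the limit with the sum.

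One small caveat on the Schwartz conclusion: the parenthetical fallback "already by compact support on the simplex $\{x_{i}\ge 0\}$" is in tension with the real-analyticity you invoke from Theorem~\ref{thm:contconv}, since a nonzero real-analytic function on $\mathbb{R}^{l-1}$ cannot be compactly supported. The rapid decay should be read off from the Gaussian factor in the formula just proved (your primary argument), not from a claim of compact support.
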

\begin{proof}

Let $a_{i_{1},...,i_{n}}$ be the taylor coefficients of the asymptotic function in Theorem~\ref{thm:multiNomBound}. Then:

$$lim_{m\rightarrow\infty}\mathcal{T}_{cont}^{m}\begin{pmatrix}\sum_{i=1}^{l}[m x_{i}] \\ [m x_{1}],...,[m x_{l}] \end{pmatrix}=lim_{m\rightarrow\infty}\sum a_{i_{1},...,i_{l}}\Pi_{j=1}^{l}([m x_{j}]+o(m))^{i_{j}}m^{\sum_{j=1}^{l}i_{j}}$$

This is due to the fact that from \cite{continuouslatticepath}, the paths with $n$ linear segments in the continuous multinomial correspond to taylor series coefficients with a total mixed degree of $n$. The asymptotic function above clearly indicates where terms of fixed power in $x_{i}$ are going to be located, allowing us to properly place the powers $m^{n}$ dictated by the operator $T_{cont}^{m}$.

Splitting the factor of m and using Theorem~\ref{thm:contconv} for well-behavedness to push the limit past the sum, this becomes:

$$\sum a_{i_{1},...,i_{l}}\Pi_{j=1}^{l}(lim_{m\rightarrow\infty}\frac{[m x_{i}]}{m}+o(1))^{i_{j}}=\sum a_{i_{1},...,i_{l}}\Pi_{j=1}^{l}(x_{i}+o(1))^{i_{j}}$$

This yields the theorem.

\end{proof}
\begin{theorem}
\label{thm:axes}

Let $d=2$. Then one $\mathcal{A}_{n}$ is the set $\{\frac{1}{gcd(x,I,t)}(x,t)\in\mathbb{N}^{2}|x^{2}+I^{2}=t^{2},(x,I,t)\in\mathbb{Z}^{3},0\le t\le n\}\cup\{(\pm 1,1)\}$ (i.e. a minimal $\mathcal{A}^{gen}$ for $d_{n}$ is this above set).

\end{theorem}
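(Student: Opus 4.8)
The plan is to exhibit the stated set $S=\{\tfrac{1}{\gcd(x,I,t)}(x,t)\in\mathbb{N}^{2}:x^{2}+I^{2}=t^{2},\ 0\le t\le n\}\cup\{(\pm1,1)\}$ as a generating subset of $\mathcal{A}^{all}$ that carries exactly one primitive representative of each direction it meets, which is precisely the defining property of an axes of symmetry of $d_{n}$.

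First I would fix the geometry behind $d_{n}$. Writing points of $\mathbb{Z}\times\mathbb{Z}$ as $(x,t)$, in the forward cone $d_{l_{2}^{*}}(0,(x,t))=\min_{|v|<1}\frac{t-vx}{\sqrt{1-v^{2}}}$, the minimum being attained at the boost $v=x/t$, so the Minkowski unit ``sphere'' is the strictly convex hyperbola $H=\{t^{2}-x^{2}=1,\ t>0\}$. By its definition in Eq~\ref{eq:polygon}, $d_{n}(0,\vec z)$ is this same expression with the continuum of boosts replaced by the finite family of edge slopes of the polygon $P_{n}$ inscribed in $H$ whose finite vertices are the scaled primitive Pythagorean directions $\vec a_{i}=(x_{i}/I_{i},t_{i}/I_{i})$ of hypotenuse $t_{i}\le n$ and whose two unbounded edges lie along the light rays $t=\pm x$; equivalently $d_{n}(0,\vec z)=\min_{j}L_{j}(\vec z)$ over the affine functionals $L_{j}$ that equal $1$ on the $j$-th edge of $P_{n}$ and are strictly increasing in $t$. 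Since the $\vec a_{i}$ lie on the strictly convex $H$ they are in convex position, hence are the genuine vertices of $P_{n}$, and a secant-versus-curve comparison shows $L_{j}(\vec a_{i})\ge1$ for all $j$ with equality only for the two edges through $\vec a_{i}$, so $d_{n}(0,\vec a_{i})=1=d_{l_{2}^{*}}(0,\vec a_{i})$. Conversely, if a forward lattice vector $\vec v$ is neither a positive multiple of some $\vec a_{i}$ nor on a light ray, the ray through $\vec v$ meets the relative interior of an edge of $P_{n}$, a chord of $H$ lying strictly below $H$ in $t$ between its endpoints; since $d_{l_{2}^{*}}(0,\cdot)$ increases in $t$ along a vertical line this forces $d_{n}(0,\vec v)<d_{l_{2}^{*}}(0,\vec v)$. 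Consequently
$$\mathcal{A}^{all}=\bigl\{\,m(x,t):m\ge1,\ (x,I,t)\text{ a primitive Pythagorean triple with }t\le n\,\bigr\}\ \cup\ \bigl\{(k,k),(-k,k):k\ge1\bigr\},$$
where the degenerate triple $(0,1,1)$ is included (it contributes the pure-time direction) but $(\pm1,0,1)$ is not; and since a primitive Pythagorean triple has pairwise coprime entries, $\gcd(x_{i},I_{i})=1$, so the primitive integer vector in the direction $\vec a_{i}$ is exactly $(x_{i},t_{i})=\tfrac{1}{\gcd(x,I,t)}(x,t)$.

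Next I would verify the three clauses directly. For $S\subseteq\mathcal{A}^{all}$: a scaled primitive Pythagorean vector $(x,t)$ has $d_{l_{2}^{*}}(0,(x,t))=\sqrt{t^{2}-x^{2}}=I\ge0$ and $(x,t)=I\vec a_{i}$, so $d_{n}(0,(x,t))=I\,d_{n}(0,\vec a_{i})=I$; the vectors $(\pm1,1)$ have $d_{n}=d_{l_{2}^{*}}=0$; and $\mathrm{proj}_{t}>0$ throughout. For generation, take $\vec v\in\mathcal{A}^{all}$: a light vector $(\pm k,k)$ is $k$ copies of $(\pm1,1)$; a pure-time vector $(0,k)$ is $k$ copies of $(0,1)\in S$; a multiple $m(x,t)$ with $x>0$ a primitive Pythagorean direction is $m$ copies of $(x,t)\in S$; and a multiple $m(-|x|,t)$ with $x<0$ satisfies $t>|x|$ (it lies strictly inside the cone) and equals $m|x|\cdot(-1,1)+m(t-|x|)\cdot(0,1)$, a non-negative integer combination of $S$. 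Hence $S$ generates $\mathcal{A}^{all}$. For uniqueness of representatives, every element of $S$ is a primitive integer vector, and distinct scaled primitive Pythagorean directions together with $(0,1)$ and $(\pm1,1)$ are pairwise non-parallel, so $S$ contains at most one vector from each line through the origin — exactly the condition that $\vec x_{1}=c\vec x_{2}$ with $\vec x_{1},\vec x_{2}\in S$ forces $c=1$.

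These three properties already make $S$ an axes of symmetry of $d_{n}$; the minimality asserted in the theorem is the observation that $S$ retains only the primitive representative of each direction of $\mathcal{A}^{all}$ that it meets, and that one cannot do with less in the essential directions: $(0,1)$ and $(\pm1,1)$ lie in every generating subset of $\mathcal{A}^{all}$, since each remaining element of $S$ has time coordinate at least the smallest non-degenerate Pythagorean hypotenuse, which is $>1$, whereas $(0,1),(1,1),(-1,1)$ have time coordinate $1$ and so cannot be non-negative integer sums of the others. This identifies a valid $\mathcal{A}_{n}$ and reconciles the two uses of the symbol. The step I expect to be the real obstacle is the characterization of $\mathcal{A}^{all}$ in the first paragraph — proving that $d_{n}$ and $d_{l_{2}^{*}}$ agree on a forward lattice vector \emph{exactly} on the rays through the $\vec a_{i}$ and along the light rays, with strict inequality elsewhere — since that is where the strict convexity of $H$ and the edge/vertex combinatorics of $P_{n}$ (in particular checking $d_{n}(0,\vec a_{i})=1$ rather than $<1$, and matching the explicit $\min$ in Eq~\ref{eq:polygon} to the gauge of $P_{n}$) must be used carefully; once it is in hand the generation and uniqueness arguments are routine.
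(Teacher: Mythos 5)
Your geometric core is right and essentially matches the paper's argument. The strict convexity of $H=\{t^{2}-x^{2}=1\}$, together with the fact that $d_{n}$ is the gauge of the polygon $P_{n}$ inscribed in $H$ with vertices at the scaled primitive Pythagorean directions and unbounded edges along the light rays, forces $d_{n}=d_{l_{2}^{*}}$ on a forward lattice vector exactly along the Pythagorean and null rays; the paper says the same thing in the language of the curves $\mathcal{C}_{n}$ and $\mathcal{C}_{l_{2}^{*}}$ and the observation that $\mathcal{C}_{l_{2}^{*}}$ is ``concave up.'' The verification that $S\subset\mathcal{A}^{all}$ and the pairwise non-parallelism of primitive representatives also match.

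The gap is in the generation step for negative-$x$ directions, and it undermines the minimality claim. You take ``$\in\mathbb{N}^{2}$'' at face value, so $S$ has only $x\ge0$ Pythagorean directions, and you then produce $m(-|x|,t)$ as the non-parallel sum $m|x|\cdot(-1,1)+m(t-|x|)\cdot(0,1)$. If generation were this purely additive notion, your own device would show $\{(0,1),(1,1),(-1,1)\}$ already generates $\mathcal{A}^{all}$ (since $m(x,t)=mx\cdot(1,1)+m(t-x)\cdot(0,1)$ for $x>0$), so $S$ could not be a minimal $\mathcal{A}^{gen}$, and indeed your minimality paragraph only shows $(0,1)$, $(\pm1,1)$ are unavoidable and is silent on why the non-null Pythagorean vectors must be retained. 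The notion of generating the paper actually needs, as Theorem~\ref{thm:generate} makes clear, is parallel generation: a step $\vec{a}\in\mathcal{A}^{all}$ must decompose into $\mathcal{A}_{n}$-elements whose partial sums stay on the same line segment, because the equivalence on $\Gamma_{n}$ identifies paths only when they trace the same piecewise-linear curve. Under that reading a negative-$x$ Pythagorean direction is not a non-negative multiple of anything in your $S$, so it must itself belong to $\mathcal{A}_{n}$; the ``$\mathbb{N}^{2}$'' in the statement is a slip and should include both signs, consistent with the paper's ordering ``from least $x$ coordinate to most,'' the symmetric labels $\vec a_{\pm 0}=(\pm1,1)$, and its own conclusion $\mathcal{C}_{l_{2}^{*}}\cap\mathcal{C}_{n}=\{(\tfrac{x}{I},\tfrac{t}{I})\in\mathbb{Q}^{2}\,|\,x^{2}+I^{2}=t^{2},(x,I,t)\in\mathbb{Z}^{3},0\le t\le n\}$. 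With the symmetric set and parallel generation the rest of your argument goes through and coincides with the paper's.
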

\begin{proof}

Consider the slopes of the half-plane boundaries which form $d_{n}(0,\vec{x})=0$ in Eq~\ref{eq:polygon}. The smallest magnitude slope is 1 (corresponding to $k=n-1$); if we had smaller slopes, $\mathcal{A}_{n}$ would intersect non-trivially with non-local paths (a contradiction). For $\vec{v}\in \mathbb{R}^{2}$, there is some R sufficiently such that if $proj_{x}(\vec{v})>R$, then because $1$ is the smallest slope, the equation for it must be the minimum of $d_{n}$. So, $\pm proj_{x}(\vec{v})=proj_{t}(\vec{x})$. But, we can take any vector in $\{d_{n}(0,\vec{x})=0\}$ and multiply its magnitude by $R$ and see that it necessarily must be parallel to $(\pm 1,1)$. Therefore, $(\pm 1,1)$ generates all null paths. Furthermore, the arguments of Theorem~\ref{thm:axes} show that $(\pm 1,1)$ cannot be generated except by themselves and must then lie in $\mathcal{A}_{n}$. So, we now need only need concentrate on non-null elements in $\mathcal{A}_{n}$ (i.e. $d_{n}(0,\vec{x})>0$).

\begin{figure}[h]
    \centering
    \includegraphics[width = 6 cm]{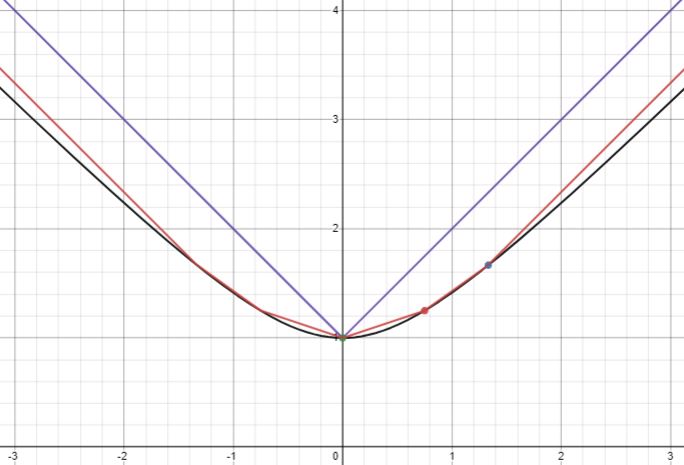}
    \caption{The sets $\mathcal{C}_{n}=\{d_{n}(0,\vec{x})=1\}$ for $n=\{1,2,3,4\}$ in blue, $n=5$ in red, and $C_{l_{2}^{*}}=\{d_{l_{2}^{*}}(0,\vec{x})=1\}$ in black.}
    \label{fig:CloserCirc}
\end{figure}

Let $d_{n}(0,\vec{x})=d_{l_{2}^{*}}(0,\vec{x})=r$. This yields $d_{n}(0,\frac{\vec{x}}{r})=d_{l_{2}{*}}(0,\frac{\vec{x}}{r})=1$. If we define $\mathcal{C}_{n}=\{d_{n}(0,\vec{x})=1\}$ and $\mathcal{C}_{l_{2}}=\{d_{l_{2}^{*}}(0,\vec{x})=1\}$, then the aforementioned set is contained within $\mathcal{C}_{n}\cap\mathcal{C}_{l_{2}^{*}}$. These objects are all plotted in Figure~\ref{fig:CloserCirc}. If we look at the argument of the minimum defining $d_{n}$ in Equation~\ref{eq:polygon}, you will see implicit equations for some lines. Namely, if we take that argument and set it equal to 1, we obtain Equation~\ref{eq:settoone}.

\begin{equation}
    \label{eq:settoone}
    \frac{|proj_{x}(\vec{a}_{i+1}-\vec{a}_{i})proj_{t}(\vec{x}_{2}-\vec{x}_{1})|-|proj_{t}(\vec{a}_{i+1}-\vec{a}_{i})proj_{x}(\vec{x}_{2}-\vec{x}_{1})|}{|proj_{x}(\vec{a}_{i+1}-\vec{a}_{i})proj_{t}(\vec{a}_{i})|-|proj_{t}(\vec{a}_{i+1}-\vec{a}_{i})proj_{x}(\vec{a}_{i})|}=1
\end{equation}

We introduce in Section~\ref{section:intro} the points in $\{(\frac{x}{I},\frac{t}{I})\in\mathbb{Q}^{2}|x^{2}+I^{2}=t^{2},(x,I,t)\in\mathbb{Z}^{3},0\le t\le n\}$ as the normalized Pythagorean triples with hypotenuse less than n. If we index this set by order of their x coordinate from most to least negative, then Equation~\ref{eq:settoone}'s linear equations are precisely those lines adjoining $\vec{a}_{i}$ to $\vec{a}_{i+1}$ as points in $\mathbb{R}^{2}$. If the minimum of these lines equals 1, it implies one of these lines equals, so $\mathcal{C}_{n}$ is composed of piecewise linear segments that adjoin $\vec{a}_{i}$ and $\vec{a}_{i+1}$. Every point above one of these lines satisfies Equation~\ref{eq:settoone} where its equal to $r\in (1,\infty)$ instead of 1. This fact, along with the monotonically increasing slope between $\{\vec{a}_{i},\vec{a}_{i+1}\}$ as points on $\mathbb{C}_{l_{2}^{*}}$, implies that between $(proj_{x}(\vec{a}_{i}),proj_{x}(\vec{a}_{i+1}))\subset\mathbb{R}$ the linear segment that composes $\mathcal{C}_{n}$ is precisely the line adjoining $\vec{a}_{i}$ and $\vec{a}_{i+1}$. This is again illustrated in Figure~\ref{fig:CloserCirc}. Since $\mathcal{C}_{l_{2}^{*}}$ is concave up, we have $\mathcal{C}_{l_{2}}\cap\mathcal{C}_{n}=\{(\frac{x}{I},\frac{t}{I})\in\mathbb{Q}^{2}|x^{2}+I^{2}=t^{2},(x,I,t)\in\mathbb{Z}^{3},0\le t\le n\}$.

The above fact shows us $\mathcal{A}^{all}$ are vectors parallel to the set $\mathcal{A}_{n}$ in the hypothesis of this Theorem. Here, we employ some facts about primitive Pythagorean triples \cite{rational}: if we divide $(x,I,t)$, a Pythagorean triple, by the gcd of the three elements, then the resulting object is a primitive Pythagorean triple. These primitive pythagorean triples build up all pythagorean triples along their direction and by extension all $\mathcal{A}^{all}$. They also all lie in different directions, so they constitute $\mathcal{A}^{n}$ as defined in Section~\ref{section:notation}.

\end{proof}

\begin{theorem}

Let $d\ge 2$. Then $\mathcal{A}_{1}=\{(\pm e_{1},1),...,(\pm e_{d},1),(0,1)\}$ where $e_{i}$ denote unit directions in $\mathbb{Z}^{d}$

\label{thm:axes2}

\end{theorem}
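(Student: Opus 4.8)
The plan is to follow the strategy of Theorem~\ref{thm:axes}, but here the taxicab light cone makes the geometry collapse very quickly. Recall that, by definition, $\mathcal{A}_1$ is a minimal generating set $\mathcal{A}^{gen}\subseteq\mathcal{A}^{all}$ (with the relevant light rays adjoined) in which no two elements are parallel, where for $n=1$ the governing set is $\mathcal{A}^{all}=\{\vec x\in\mathbb{Z}^d\times\mathbb{Z}\ :\ d_{l_2^*}(0,\vec x)=d_1(0,\vec x)\ge 0,\ proj_t(\vec x)>0\}$ and $d_1=d_{l_1}$ is the taxicab minkowski metric of Equation~\ref{eq:taxicab}. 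So the first step is to characterize $\mathcal{A}^{all}$ explicitly.

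First I would fix $\vec x=(x_1,\dots,x_d,t)$ with $t=proj_t(\vec x)>0$ and analyze the equation $\sqrt{t^2-\sum_i x_i^2}=t-\sum_i|x_i|\ge 0$. Since $d_1(0,\vec x)\ge 0$ already forces $t\ge\sum_i|x_i|\ge\sqrt{\sum_i x_i^2}$, both sides are nonnegative reals and we may square, getting $2t\sum_i|x_i|=\sum_i x_i^2+(\sum_i|x_i|)^2$. Now I invoke the elementary inequality $(\sum_i|x_i|)^2\ge\sum_i x_i^2$, whose equality case is precisely that at most one $x_i$ is nonzero: this bounds the right-hand side by $2(\sum_i|x_i|)^2$, hence $t\le\sum_i|x_i|$ whenever some $x_i\neq 0$, and combined with $t\ge\sum_i|x_i|$ this yields $t=\sum_i|x_i|$, so $d_1(0,\vec x)=0$ and then $\sum_i x_i^2=t^2=(\sum_i|x_i|)^2$, which forces at most one nonzero spatial coordinate. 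Adjoining the case in which all spatial coordinates vanish, I obtain
$$\mathcal{A}^{all}=\{(\pm k\,e_i,k)\ :\ 1\le i\le d,\ k\in\mathbb{N}\}\cup\{(0,k)\ :\ k\in\mathbb{N}\},$$
i.e. the null rays through the $(\pm e_i,1)$ and the timelike ray through $(0,1)$; in particular every element of $\mathcal{A}^{all}$ is a nonnegative integer multiple of one of the $2d+1$ vectors $(\pm e_i,1)$ or $(0,1)$.

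With this description in hand the remaining steps are short. For minimality I note that every element of $\mathcal{A}^{all}$ has $proj_t\ge 1$, so a vector $\vec a$ with $proj_t(\vec a)=1$ cannot be written as a sum of two or more elements of $\mathcal{A}^{all}$; hence each of $(\pm e_i,1)$ and $(0,1)$ is irreducible and must belong to every generating set. Conversely the set $\{(\pm e_i,1):1\le i\le d\}\cup\{(0,1)\}$ does generate, since $(\pm k e_i,k)=\sum_{j=1}^{k}(\pm e_i,1)$ and $(0,k)=\sum_{j=1}^{k}(0,1)$, and no two of its elements are parallel, so it satisfies the unique-direction requirement. Therefore the minimal generating set with distinct directions is exactly $\{(\pm e_1,1),\dots,(\pm e_d,1),(0,1)\}$, i.e. $\mathcal{A}_1$. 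As a consistency check one can instead argue through $\Gamma_1$ directly: an $l_\infty$-unit step $\vec a=(\varepsilon_1,\dots,\varepsilon_d,1)$ with $\varepsilon_i\in\{-1,0,1\}$ satisfies $d_{l_2^*}(0,\vec a)=d_1(0,\vec a)\ge 0$ iff $1-\sum_i|\varepsilon_i|=\sqrt{1-\sum_i|\varepsilon_i|}$, i.e. iff $\sum_i|\varepsilon_i|\in\{0,1\}$, which recovers the same list and matches Theorem~\ref{thm:generate}.

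The computations are all routine; the only step requiring care is the equality analysis in the first step, namely genuinely ruling out every off-axis lattice vector from $\mathcal{A}^{all}$, since that is exactly what collapses the higher-dimensional picture down to the $2d+1$ coordinate-type directions. Everything afterward — irreducibility, the generating identities, and pairwise non-parallelism — is immediate once $\mathcal{A}^{all}$ has been pinned down.
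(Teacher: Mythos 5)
Your proof is correct and follows the same overall strategy as the paper's: pin down $\mathcal{A}^{all}$ for $d_1=d_{l_1}$, observe that every element of $\mathcal{A}^{all}$ has $proj_t\ge1$ so the $proj_t=1$ vectors are irreducible and pairwise non-parallel, and conclude that $\{(\pm e_i,1)\}\cup\{(0,1)\}$ is the unique minimal generating set with one representative per direction. The only place the two diverge is the technique used to rule out off-axis vectors. The paper first observes that the $d_1$-null cone is generated by $\{(\pm e_i,1)\}$ and then handles the strictly timelike part geometrically: in each fixed-time slice $proj_t=T>1$ it compares the $l_1$-sphere of radius $T-1$ (the $d_1=1$ level set) with the Euclidean sphere of radius $\sqrt{T^2-1}$ (the $d_{l_2^*}=1$ level set), notes they are disjoint because $T-1<\sqrt{T^2-1}$, and scales to conclude the non-null agreement locus is the ray through $(0,1)$. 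You instead square $\sqrt{t^2-\sum_i x_i^2}=t-\sum_i|x_i|$ and apply the elementary inequality $\big(\sum_i|x_i|\big)^2\ge\sum_i x_i^2$, whose equality case is at most one nonzero coordinate; this handles the null and timelike cases in a single stroke and produces an explicit description of $\mathcal{A}^{all}$ as $\{(\pm k e_i,k)\}\cup\{(0,k)\}$, which the paper leaves somewhat implicit for the null part. Both arguments then finish identically via the $proj_t\ge1$ irreducibility observation. Your algebraic version is a bit more self-contained; the paper's sphere comparison is chosen to mirror the style of the $d_n$ analysis in Theorem~\ref{thm:axes}.
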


\begin{proof}

First if $d_{l_{1}}(0,\vec{x})=0$ then $proj_{t}(\vec{x})=\sum_{i}|proj_{x_{i}}(\vec{x})|$. This implies that the sum of the absolute value of the spatial coordinates equals $proj_{t}(\vec{x})$. Therefore, it may be generated by adding $proj_{t}(\vec{x})$ elements of $\{(\pm e_{1},1),...,(\pm e_{d},1)\}$. This set (except in some tropical examples) lies in $\mathcal{A}$. Any $\mathcal{A}^{gen}\subset \mathcal{A}^{all}$ would need to contain all of these elements as if $\sum_{i}\vec{a}_{i}=(e_{1},1)$ for $\{\vec{a}\}_{i=1}^{N}\subset\mathcal{A}$ then $\sum_{i}proj_{t}(\vec{a}_{i})=1$ and all except one $\vec{a}_{i}$ is zero. So, we have obtained a minimal set required to generate null vectors. Say $\vec{x}\in X$ such that $\vec{x}-proj_{t}(\vec{x})\hat{t}=0$. Then $d_{l_{1}}(0,\vec{x})=d_{l_{2}^{*}}=proj_{t}(\vec{x})$ and furthermore its generated by $(0,1)$. Again, any $A^{gen}$ would need to contain $(0,1)$. So, if we show that these above vectors are the only ones upon which $d_{l_{1}}$ and $d_{l_{2}^{*}}$ agree, then we are done. We note that the surface $\{d_{1}(0,\vec{x})=1\}$ is a polyhedral cone which touches $\{d_{l_{2}^{*}}(0,\vec{x})=1\}$ atleast at $(\vec{0},1)$. Let $proj_{t}(\vec{x})=T\in [1,\infty)$. Then, $\{d_{l_{2}^{*}}(0,\vec{x})=1\}$ are rational points on a sphere in $\mathbb{Z}^{d}$ of radius $\sqrt{T^{2}-1}$. Meanwhile, $\{d_{n}(0,\vec{x})=1\}$ will be a polyhedral shape with maximum distance from the origin of $T-1$ (as the maximum distances are aligned with the axes in the $l_{1}$ sphere). It is immediate that $T-1<\sqrt{T^{2}-1}$ for $T\in (1,\infty)$, and so our $d_{1}$ 'sphere' lies above the $d_{l_{2}^{*}}$ in the upper half space at all points except $(0,1)$. This implies, by scaling, that the only non-null point upon which the metrics agree is multiples of $(0,1)$.






\end{proof}
\begin{theorem}
\label{thm:generate}

Let $\gamma=\{x_{i}\}_{i=1}^{n}\in \Gamma_{n}$. Let $\mathcal{A}_{n}$ denote some axes of symmetry of $d_{n}$. Then, we can choose our $\{x_{i}\}^{n}$ such that $x_{i+1}-x_{i}\in\mathcal{A}_{n}$. This alteration will yield the same path up to our equivalence relation on $\Gamma$, and this equivalence class representative is unique.

\end{theorem}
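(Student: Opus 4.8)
The plan is to argue that Theorem~\ref{thm:generate} is essentially a restatement of the generating property built into the definition of an axes of symmetry $\mathcal{A}_n$, combined with the achronality/locality constraints defining $\Gamma_n$. First I would fix $\gamma = \{x_i\}_{i=1}^n \in \Gamma_n^{\vec{x},\vec{y}}$ and consider its difference sequence $\{v_i\}$ with $v_i = x_{i+1}-x_i$. Each $v_i$ is an edge of the graph (so $|v_i|_{l_\infty}=1$) and satisfies $d_{l_2^*}(0,v_i) = d_n(0,v_i)\ge 0$ with $\mathrm{proj}_t(v_i)\ge 0$, i.e.\ $v_i \in \mathcal{A}^{all}$ (extending $\mathcal{A}^{all}$ to allow the null directions, as the remark after the definition notes). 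Since $\mathcal{A}_n$ is by definition a generating set for $\mathcal{A}^{all}$, there exist $\{a_{i,j}\}_{j=1}^{N_i}\subset\mathcal{A}_n$ with $\sum_j a_{i,j} = v_i$. Replacing each single segment $v_i$ by the concatenation of the segments $a_{i,1},\dots,a_{i,N_i}$ (traversed in that order) gives a new sequence of lattice points $\{x_i'\}$ whose difference sequence lies entirely in $\mathcal{A}_n$.

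The key steps, in order: (1) check that the new walk is still a path in $\Gamma_n$ — this is where I use that $d_n$ is additive along the decomposition. Concretely, since the $a_{i,j}$ sum to $v_i$ and both $d_n$ and $d_{l_2^*}$ agree on all of them, and since $\mathrm{proj}_t(a_{i,j})\ge 0$, each intermediate vertex has nondecreasing time coordinate and each new edge is achronal; moreover $\rho_n$ of the new path equals $\sum_i \sum_j d_n(0,a_{i,j})$, which by the agreement $d_n = d_{l_2^*}$ on $\mathcal{A}^{all}$ and by linearity of $d_{l_2^*}$ along a fixed direction must equal $\sum_i d_n(0,v_i) = \rho_n(\gamma)$ — so the phase is unchanged and the new object genuinely lies in $\Gamma_n^{\vec{x},\vec{y}}$. (2) Check that the new path traces out the same piecewise-linear curve in $X$ as $\gamma$ up to the equivalence relation on $\Gamma_n$: subdividing a linear segment into sub-segments pointing in the same direction does not change the underlying point set, and where $v_i$ genuinely decomposes into non-parallel pieces of $\mathcal{A}_n$ we only refine, never reorder, so the traced curve is the same. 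Actually one must be slightly careful: $v_i$ itself could already be parallel to but a multiple of some $\mathcal{A}_n$ vector, or decompose into several collinear unit steps; in all cases the image in $X$ is unchanged. (3) Prove uniqueness of the equivalence-class representative with difference sequence in $\mathcal{A}_n$: given two such representatives, they trace the same PL curve, so they have the same ordered list of maximal linear segments; along each maximal segment the direction is a fixed primitive vector, and by the defining property of an axes of symmetry ($\vec{x}_1,\vec{x}_2\in\mathcal{A}_n$ with $\vec{x}_1 = c\vec{x}_2 \Rightarrow c=1$, i.e.\ one representative per direction) the decomposition of that segment into $\mathcal{A}_n$-steps is forced to be repetition of the unique primitive representative of that direction. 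Hence the two difference sequences coincide.

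The main obstacle I expect is step (3), the uniqueness claim, because it rests on the subtle interplay between the "one representative per direction" clause in the definition of $\mathcal{A}_n$ and the possibility that a maximal linear segment of the curve might a priori be written as a sum of $\mathcal{A}_n$-vectors in more than one way even after the direction is pinned down — e.g.\ if $\mathcal{A}_n$ contained two parallel vectors, which the definition forbids, but one must still rule out mixed decompositions along a segment that is a lattice multiple of a primitive $\mathcal{A}_n$ direction. The resolution is that $\mathcal{A}_n$, being a \emph{minimal} generating set (as established for $d_n$ in Theorem~\ref{thm:axes}), contains exactly the primitive vectors along each admissible direction, so a segment of lattice length $k$ in a primitive direction $\vec{a}$ is uniquely the $k$-fold repetition of $\vec{a}$; there is no freedom. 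A secondary, more bookkeeping-level obstacle is making the equivalence relation on $\Gamma_n$ (two paths equivalent iff their indexing sets trace the same PL curve) precise enough that "same ordered list of maximal linear segments" is a legitimate invariant — but this is routine once one observes that a PL curve determines its maximal segments canonically.
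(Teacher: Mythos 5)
Your proposal follows the same route as the paper: decompose each step of $\gamma$ into $\mathcal{A}_n$-vectors via the generating property, observe the result traces the same piecewise-linear curve and so lies in the same equivalence class, and deduce uniqueness from the fact that an axes of symmetry has exactly one representative per direction (the paper phrases this as finite induction on the first difference-sequence element, you phrase it via maximal linear segments — these are equivalent). You add explicit checks the paper leaves implicit (membership of $\gamma'$ in $\Gamma_n$, phase preservation, and the concern that a non-parallel decomposition would alter the traced curve), but the core argument is identical.
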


\begin{proof}

Let $\gamma\in \Gamma_{n}$ such that there is some index $i$ where $x_{i+1}-x_{i}$ is not in $\mathcal{A}_{n}$. We know $d_{l_{2}^{*}}(\vec{x}_{i+1},\vec{x}_{i})=d_{n}(\vec{x}_{i+1},\vec{x}_{i})$ so by the definition of $\mathcal{A}_{n}$  we know $\vec{x}_{i+1}-\vec{x}_{i}=\sum_{i}^{N}\vec{a}_{i}$ for $\{\vec{a}_{i}\}_{i=1}^{N}\subset \mathcal{A}_{n}$. Let $\gamma'$ be defined such that $x'_{j}=x_{j}$ for $j\le i$, that $x'_{i+k}=x_{j}+\vec{a}_{i}$ for $k<n$, that $x'_{j+n}=x_{j+1}$ for $j\ge i$. We just replaced our single difference sequence element with a finite number of difference sequence elements, all of which were in $\mathcal{A}_{n}$. This portion of the path all lies on the same points in $X$; therefore, $\gamma'\sim\gamma$ under our equivalence relation.

If we take $\gamma\in \Gamma_{n}$, by a finite number of recursions of the above argument, we obtain a $\gamma'\in \Gamma_{n}$ such that $\gamma'\sim \gamma$ and its composed only of difference sequences among $\mathcal{A}_{n}$. We can proceed by induction to show this representation of the difference sequence of $\gamma$ among $\mathcal{A}_{n}$. If $x_{1}-x_{0},x_{1}'-x_{0}'\in \mathcal{A}_{n}$ are not equal, then they are not directed in the same direction. So, they are locally not the same piecewise linear graph. This is a contradiction. We proceed by finite induction and show that the difference sequence is the same.

\end{proof}

\begin{theorem}
    \label{thm:tripledensity}
    Denote the set of primitive pythagorean triples with hypotenuse below $n\in \mathbb{R}$ as $\mathcal{A}_{n}$. Then, $\mathcal{A}_{n}=\frac{n}{2\pi}+O(\sqrt{n}ln(n))$, and they are equidistributed on the unit circle when ordered according to hypotenuse (as the hypotenuse goes to infinity).
\end{theorem}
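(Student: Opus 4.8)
The plan is to reduce both assertions to lattice–point counting through the Euclid parametrization of primitive Pythagorean triples. Recall (see \cite{rational}) that $(m,k)\mapsto(m^{2}-k^{2},\,2mk,\,m^{2}+k^{2})$ is a bijection between the pairs $(m,k)\in\mathbb{Z}^{2}$ with $m>k\ge 1$, $\gcd(m,k)=1$ and $m+k$ odd, and primitive Pythagorean triples (up to the reflection $a\leftrightarrow b$, which I absorb into the $O(1)$ below and into the symmetry the paper already exploits). Since the hypotenuse is $m^{2}+k^{2}$,
\[
|\mathcal{A}_{n}| \;=\; \#\{(m,k): m>k\ge 1,\ \gcd(m,k)=1,\ 2\nmid m+k,\ m^{2}+k^{2}\le n\}+O(1),
\]
the $O(1)$ covering the non-integrality of $n$ and the parametrization bookkeeping. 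First I would Möbius–invert the coprimality condition, writing the count as $\sum_{d\ge 1}\mu(d)N_{d}(n)$ with $N_{d}(n)$ the count of pairs additionally divisible by $d$. Even $d$ contribute nothing (they force $m,k$ both even, hence $m+k$ even), and for $d>\sqrt n$ one has $N_{d}(n)=0$ since $k\ge 1$ already requires $m^{2}+k^{2}\ge 1>n/d^{2}$. For odd $d\le\sqrt n$ the substitution $m=dm',k=dk'$ turns $N_{d}(n)$ into the number of $(m',k')$ in the disk of radius $\sqrt n/d$ with $m'>k'\ge 1$ and $2\nmid m'+k'$.

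The innermost quantity is a Gauss–circle estimate: the region $\{x>y\ge 1,\ x^{2}+y^{2}\le R^{2}\}$ has area $\tfrac{\pi R^{2}}{8}+O(R)$, and the parity condition selects exactly two of the four residue classes mod $2$, i.e. density $\tfrac12$, so $N_{d}(n)=\tfrac12\cdot\tfrac{\pi}{8}\cdot\tfrac{n}{d^{2}}+O\!\big(\tfrac{\sqrt n}{d}\big)$ uniformly in $d$. Summing over odd $d\le\sqrt n$ gives
\[
|\mathcal{A}_{n}| \;=\; \frac{\pi n}{16}\sum_{d\ \mathrm{odd}}\frac{\mu(d)}{d^{2}} + O(\sqrt n\,\log n) \;=\; \frac{\pi n}{16}\cdot\frac{8}{\pi^{2}} + O(\sqrt n\,\log n) \;=\; \frac{n}{2\pi} + O(\sqrt n\,\log n),
\]
where $\sum_{d\ \mathrm{odd}}\mu(d)/d^{2}=\prod_{p\ \mathrm{odd}}(1-p^{-2})=\zeta(2)^{-1}(1-\tfrac14)^{-1}=8/\pi^{2}$ and the truncation of this series beyond $\sqrt n$ only costs $O(\sqrt n)$. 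This proves the cardinality statement.

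For equidistribution I would rerun the same computation with the generating pair confined to an angular sector. The direction of $(m^{2}-k^{2},2mk)$ is the argument of the Gaussian integer $(m+ki)^{2}$, i.e. $2\phi$ with $\phi=\arctan(k/m)\in(0,\pi/4)$; as $\phi\mapsto 2\phi$ is a smooth monotone change of variable it suffices to show the generating angles $\phi$ equidistribute in $(0,\pi/4)$. Fix $0\le\alpha<\beta\le\pi/4$ and repeat the Möbius/Gauss argument with the quarter–disk replaced by the sector $\{\alpha\le\arctan(y/x)\le\beta\}$: the area factor $\tfrac{\pi R^{2}}{8}$ becomes $\tfrac{(\beta-\alpha)R^{2}}{2}$, while the density factor $\tfrac12\cdot\tfrac{8}{\pi^{2}}$ and the error $O(\sqrt n\,\log n)$ are unchanged, so the number of primitive triples with hypotenuse $\le n$ and generating angle in $[\alpha,\beta]$ is $\tfrac{2(\beta-\alpha)}{\pi^{2}}n+O(\sqrt n\,\log n)$. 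Dividing by $|\mathcal{A}_{n}|=\tfrac{n}{2\pi}+O(\sqrt n\,\log n)$ yields the limiting proportion $\tfrac{4(\beta-\alpha)}{\pi}=\tfrac{\beta-\alpha}{\pi/4}$, which is uniform. Because ordering by hypotenuse is the same as ordering by the radius $\sqrt{m^{2}+k^{2}}$ of the generating pair, these arc–by–arc partial–count ratios converge along the hypotenuse ordering, which is exactly equidistribution of the sequence on the quarter circle; the full–circle statement then follows from the sign/reflection symmetry used throughout the paper.

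The main obstacle is keeping the error term genuinely uniform both in $d$ and in the sector endpoints $\alpha,\beta$: the $O(R)$ boundary bound for a disk sector must treat the arc and the two radial edges together, and one must check that the parity sublattice does not distort the count near the radial edges. Since equidistribution only needs an $o(n)$ error rather than the sharp $O(\sqrt n\log n)$, these are routine. The one preliminary to dispatch is the $a\leftrightarrow b$ reflection and the non-integer $n$ in the Euclid parametrization; both contribute only $O(1)$ and a harmless constant factor that cancels in every ratio.
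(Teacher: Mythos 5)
Your argument is correct, and it is \emph{genuinely different} from the paper's treatment in the sense that the paper supplies no proof at all: it disposes of this theorem by citing \cite{Takloo-Bighash2018}, pp.\ 217 and 242, whereas you give a self-contained derivation. Your route is the classical one going back to D.~N.~Lehmer: Euclid's parametrization $(m,k)\mapsto(m^{2}-k^{2},2mk,m^{2}+k^{2})$ turns the count into a lattice-point problem in a dilated octant; M\"obius inversion peels off the $\gcd(m,k)=1$ constraint, the even moduli vanish automatically because they would force $m+k$ even, and the crucial observation that the parity condition $2\nmid m+k$ survives the dilation $m=dm',\ k=dk'$ precisely because $d$ is odd keeps the density factor $\tfrac12$ uniform in $d$; summing $\mu(d)/d^{2}$ over odd $d$ yields $8/\pi^{2}$ and hence $n/(2\pi)$, and truncating at $d\le\sqrt n$ together with the Gauss-circle $O(R)$ boundary error gives $O(\sqrt n\log n)$. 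Running the identical computation on a sector $\{\alpha\le\arctan(k/m)\le\beta\}$ instead of the full octant gives the angular equidistribution; the one thing worth making explicit is that the $O(R)$ boundary estimate for a disk sector is uniform in $(\alpha,\beta)$, since the boundary consists of two radii of length $\le R$ and an arc of length $\le 2\pi R$, so the implied constant is absolute. What your version buys is a working proof inside the document with an explicit error term; what the paper's version buys is brevity, at the cost of sending the reader to an external reference and leaving the exact form of the error unstated.
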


\begin{proof}

    The proof of this theorem can be found in \cite{Takloo-Bighash2018} on pgs. 217 and 242.
    
\end{proof}

\begin{theorem}
    \label{thm:properties}
    The continuum multinomial has the following property
    $$\begin{Bmatrix}\sum_{i}x_{i}\\x_{1},...,x_{n}\end{Bmatrix}=\int_{-\infty}^{\infty}\begin{Bmatrix}\sum_{i}x_{i}\\x_{1},...,I\end{Bmatrix}\begin{Bmatrix}I\\x_{n-1},x_{n}\end{Bmatrix}dI$$

    This implies that should any of the coefficients of the continuum multinomial coefficient be zero, than the function is also zero.
\end{theorem}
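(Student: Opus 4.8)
The plan is to read the identity as the continuous counterpart of the ``merging of two coordinates'' rule for multinomial coefficients,
\[\binom{\sum x_i}{x_1,\dots,x_n}=\binom{\sum x_i}{x_1,\dots,x_{n-2},x_{n-1}+x_n}\binom{x_{n-1}+x_n}{x_{n-1},x_n},\]
with the discrete product replaced by a convolution over the intermediate amount $I$ because, in the continuous model, the displacement carried in the ``merged'' direction by a single step is a free real parameter rather than an integer. I would work directly from the polytope form of Equation~\ref{eq:contmult}, $\begin{Bmatrix}\sum x_i\\x_1,\dots,x_n\end{Bmatrix}=\sum_{N\ge0}\sum_{c\in D(N,n)}\mu(P(q,c))$ with $q=\sum x_ie_i$, so that the geometry stays visible, using Theorem~\ref{thm:contconv} for the absolute convergence needed to rearrange later.

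First I would introduce the substitution bijection on Smirnov words. Writing $\star$ for a new symbol standing for the pair $\{n-1,n\}$, collapse in each word $c\in D(N,n)$ every maximal block of letters from $\{n-1,n\}$ to a single $\star$; this produces a Smirnov word $\bar c$ over $\{1,\dots,n-2,\star\}$ together with the ordered list $(R_1,\dots,R_m)$ of the collapsed blocks (each $R_j$ a nonempty Smirnov word over $\{n-1,n\}$, $m$ the number of $\star$'s in $\bar c$), and $c\mapsto(\bar c;R_1,\dots,R_m)$ is a bijection. Then I would carry out the matching factorization of the polytope: the coordinates of $P(q,c)$ split into the lengths of the $1,\dots,n-2$-segments of $\bar c$ and the lengths internal to the blocks $R_j$; the defining equations decouple into ``total displacement $x_k$ in direction $k$ for $k\le n-2$'', which involves only the first group, and ``total displacement $x_{n-1}$, resp.\ $x_n$'', which involves only the block coordinates. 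Introducing $s_j$ for the length of block $R_j$ and $u_j$ for its direction-$(n-1)$ part, Fubini slices $\mu(P(q,c))$ over $(s_1,\dots,s_m)$ and $(u_1,\dots,u_m)$: the ``base'' factor depends only on the non-$\star$ part of $\bar c$, and summation over the block words $R_j$ turns the inner integrand into a product $\prod_j\begin{Bmatrix}s_j\\u_j,s_j-u_j\end{Bmatrix}$ of $2$-variable continuous multinomials.

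The point that makes the two sides meet is that, after summing over all $\bar c$ with a fixed number $m$ of $\star$'s and integrating the pinned $\star$-lengths over their simplex, the base part reassembles exactly $\begin{Bmatrix}\sum_{k\le n-2}x_k+I\\x_1,\dots,x_{n-2},I\end{Bmatrix}$ with $I=\sum_js_j$, while summing the block part over all tuples $(R_1,\dots,R_m)$ and over $m$, with $I$ held fixed, reassembles exactly $\begin{Bmatrix}I\\x_{n-1},x_n\end{Bmatrix}$; this last step is itself an $m$-fold convolution of $2$-variable continuous multinomials, which I would establish from the generating-function identity for the Smirnov counts $f_{\nu_1,\dots,\nu_l}$. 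Interchanging the sum over $c$ with the $I$-integral — legitimate by the absolute convergence and the Schwartz-type tail bounds of Theorems~\ref{thm:contconv} and \ref{thm:conttails} — and re-grouping then yields $\int_{-\infty}^{\infty}\begin{Bmatrix}\sum x_i\\x_1,\dots,x_{n-2},I\end{Bmatrix}\begin{Bmatrix}I\\x_{n-1},x_n\end{Bmatrix}\,dI$. I expect the bookkeeping in this last paragraph to be the main obstacle: one must keep the number $m$ of merged blocks as a free index throughout, verify that no spurious $m$-dependent constant survives when the base and block pieces are recombined, and handle the degenerate contributions ($N=0$, or a direction used zero times) carefully — this is exactly where a convention slip in the normalization of the continuous multinomial would corrupt the identity.

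Finally, the corollary. If some argument $x_j$ equals $0$, relabel (using the evident invariance of $\begin{Bmatrix}\cdot\end{Bmatrix}$ under permuting its lower entries) so that $j\in\{n-1,n\}$; then the factor $\begin{Bmatrix}I\\x_{n-1},x_n\end{Bmatrix}$ vanishes identically, since any Smirnov word using the missing letter forces a strictly positive displacement in that direction, which is impossible against displacement $0$, while the remaining words contribute nothing because the simplex associated to the missing letter is empty (equivalently the normalization $\sqrt{\nu}$ attached to a letter of multiplicity $0$ is $0$). Hence the integrand vanishes for every $I$, the right-hand side is $0$, and so is $\begin{Bmatrix}\sum x_i\\x_1,\dots,x_n\end{Bmatrix}$. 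The same empty-polytope observation proves the corollary directly as well; the value of deriving it from the identity is that it then propagates cleanly through the analytic formulas of Theorems~\ref{thm:k3}, \ref{thm:k2} and \ref{thm:freeprop}, in which such multinomials appear.
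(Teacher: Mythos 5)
Your proposal takes a genuinely different route from the paper. For the convolution identity, the paper disposes of it in two sentences by appealing to Theorem~\ref{thm:disctocont}: the (collapsed) discrete multinomial splitting identity is asserted to ``commute past limits'' to the continuum object, in the same spirit as Theorem~\ref{thm:conttails}. You instead stay entirely in the continuous picture and argue combinatorially: decompose each Smirnov word over $\{1,\dots,n\}$ into a base word over $\{1,\dots,n-2,\star\}$ plus a list of nonempty $\{n-1,n\}$-blocks, factor the polytope $P(q,c)$ into an $\alpha$-part (directions $\le n-2$) and a $\beta$-part (directions $n-1,n$) using the decoupling of the defining linear system, and slice by the total $\star$-mass $I$. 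This is an attractive and more transparent route, because the paper's limit argument conceals a real subtlety — in the discrete multinomial the intermediate variable $I$ is pinned to $x_{n-1}+x_n$, so the ``sum over $I$'' is a single delta term, and turning that delta into a genuine $dI$-kernel under $\mathcal{T}^m_{\mathrm{cont}}$ is precisely the content that the paper does not spell out. Your polytope decomposition bypasses that entirely and, if completed, would be the stronger proof.

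The gap is the reassembly step you yourself flag. After fixing $m$ blocks of masses $(s_1,\dots,s_m)$ with $\sum s_j=I$ and integrating the block direction-$(n-1)$ masses $(u_1,\dots,u_m)$ against $\sum u_j=x_{n-1}$, you need the $m$-fold convolution of two-argument continuous multinomials, summed over $m$ with the appropriate Smirnov weight, to collapse to the single factor $\begin{Bmatrix}I\\x_{n-1},x_n\end{Bmatrix}$. You defer this to ``the generating-function identity for the Smirnov counts $f_{\nu}$,'' but it is not a routine consequence: the weight attached to each configuration is not simply multiplicative across blocks, because the number of allowed base words $\bar c$ with exactly $m$ occurrences of $\star$ depends on $m$ and on how many $\{1,\dots,n-2\}$-letters separate them, while the Smirnov count $f_{\nu_{n-1},\nu_n}$ appearing in $\begin{Bmatrix}I\\x_{n-1},x_n\end{Bmatrix}$ does not factor over block boundaries (consecutive blocks may share an endpoint letter since the separating non-$\star$ letter lifts the Smirnov constraint). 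Verifying that these two effects cancel — and that the normalizations $\sqrt{\nu_k}$ in $\mu(P(q,c))=\prod_i\frac{\sqrt{\nu_i}\,x_i^{\nu_i-1}}{(\nu_i-1)!}$ multiply out correctly under the merge — is the heart of the theorem and must be written down, not gestured at. You should also resolve a notational ambiguity in the statement itself before anything can be checked: the displayed upper arguments $\sum_i x_i$ and $I$ do not equal the sums of the corresponding lower entries for generic $I$, so you must commit to an interpretation (upper argument is cosmetic, or it is a free parameter of an $(l+1)$-variable multinomial) and make your factorization match it.

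Your corollary argument is essentially the same fact the paper uses, reached by a different route: the paper expands the two-variable coefficient in the modified Cano--Di\'az series (Equation~\ref{eq:multhelpt}) and reads off that every monomial carries a $\sqrt{n}$ (or $n$) factor that kills it when $s$ or $x-s$ vanishes; you argue directly from the polytope/frequency-vector picture that a missing letter either makes the polytope empty or makes $\sqrt{\nu_j}=0$. Both hinge on the same $\sqrt{\nu}$ normalization, which is a feature of this paper's definition and \emph{not} of Cano--Di\'az's (whose $\begin{Bmatrix}x\\0\end{Bmatrix}=x+2\neq 0$); you correctly identify this. Your derivation of the corollary from the identity does, however, quietly invoke the two-variable vanishing (to kill $\begin{Bmatrix}I\\x_{n-1},0\end{Bmatrix}$) — which is exactly what your ``direct'' observation supplies — so it is the direct argument, not the identity, that is doing the work here, and it would be cleaner to present only that.
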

\begin{proof}
    The first property of the continuum multinomial is established by Theorem~\ref{thm:disctocont} and the proof is performed in the same manner as Theorem~\ref{thm:conttails}. Namely the multiplicative property described is obtained in the discrete setting, and commutes past limits to apply to the continuum multinomial. The second property is established by an explicit equation found in \cite{continuousbin}. Cano and Diaz found that their version of the binomial coefficient has the property $\begin{Bmatrix}x\\x\end{Bmatrix}=\begin{Bmatrix}x\\0\end{Bmatrix}=x+2$; this was obtain from the explicit expression $\begin{Bmatrix}x\\s\end{Bmatrix}=\sum_{n=0}^{\infty}(x+2n+2)\frac{s^{n}(x-s)^{n}}{n!(n+1)!}$. Now our expression for the continuum multinomial coefficient only differs in that every power $(x-s)^{i}s^{j}$ has a multiplicative factor $\sqrt{i}\sqrt{j}$ placed next to it. We can rewrite Cano's expression in terms of these powers like so: 
    
    $\sum_{n=0}^{\infty}(x-s+s+2n+2)\frac{s^{n}(x-s)^{n}}{n!(n+1)!}=\sum_{n=0}^{\infty}\frac{s^{n+1}(x-s)^{n}}{n!(n+1)!}+\frac{s^{n}(x-s)^{n+1}}{n!(n+1)!}+(2n+2)\frac{s^{n}(x-s)^{n}}{n!(n+1)!}$ 
    
    meaning we obtain the following expression for the continuum multinomial coefficient

    \begin{equation}
        \label{eq:multhelpt}
        \begin{Bmatrix}x\\s\end{Bmatrix}=\sum_{n=0}^{\infty}\frac{\sqrt{n+1}\sqrt{n}s^{n+1}(x-s)^{n}}{n!(n+1)!}+\frac{\sqrt{n}\sqrt{n+1}s^{n}(x-s)^{n+1}}{n!(n+1)!}+(2n+2)n\frac{s^{n}(x-s)^{n}}{n!(n+1)!}
    \end{equation}

    Let $x=s$. In the first term of Equation~\ref{eq:multhelpt} $x-s$ goes to zero unless $n=0$, in which case our multplicative factor takes it to zero. In the next expression it will always be zero, and in the third portion of Equation~\ref{eq:multhelpt} we return to the first case. So the binomial coefficient evalautes to zero when any of its coefficients do. This, along with the multiplicative properties of the continuum multinomial already established, obtains the desired result.
\end{proof}

\section*{Acknowledgement}

I want to thank Eviatar Procaccia and Parker Duncan from Technion University for their help in developing this paper through work in \cite{duncan2020elementary} and \cite{duncan2021discrete}. Hailey Leclerc proved invaluable in editing this document. I would also like to thank Beba. I miss you immensely.

\section{Data Availibility Statement}

The manuscript has no associated data.

\bibliographystyle{amsplain}
\bibliography{main.bib}

\end{document}